\newcounter{todocounter}
\theoremstyle{plain}% default
\newtheorem{theorem}{Theorem}[section]
\newtheorem{lemma}[theorem]{Lemma}
\newtheorem{proposition}[theorem]{Proposition}
\newtheorem{claim}[theorem]{Claim}
\newtheorem{corollary}[theorem]{Corollary}
\newtheorem{pf-idea}[theorem]{Proof Idea}
\theoremstyle{definition}
\newtheorem{definition}[theorem]{Definition}
\newcommand{\twointer}{\textrm{\sc Two-Intersect-Maj}}
\newcommand{\oneinter}{\textrm{\sc One-Intersect-Maj}}
\newcommand{\maj}{\textrm{\sc {MAJ}}}
\newcommand{\mon}{\textrm{\sc {Monotone}}}
\newcommand{\avoid}{\textrm{\sc Avoid}}
\newcommand{\monncavoid}{\textrm{$\mon$-$\NC^0_3$-$\avoid$}}
\newcommand{\majthreeavoid}{\textrm{$\maj_3$-$\avoid$}}
\newcommand{\majkavoid}{\textrm{$\maj_k$-$\avoid$}}
\newcommand{\oneintermajthree}{\textrm{\oneinter$_3$-\avoid}}
\newcommand{\oneintermajk}{\textrm{$\oneinter_k$-$\avoid$}}
\newcommand{\mychi}{{\cal X}\textrm{$_{2\ell}$}}
\newcommand{\comm}[1]{\hfill $\triangleright$ #1}
\newcommand{\half}{\frac{1}{2}}
\newcommand{\range}{{\sf Range}}
\newcommand{\F}{\mathcal{F}}
\renewcommand{\H}{\mathcal{H}}
\renewcommand{\F}{\mathcal{F}}
\newcommand{\callo}{\mathcal{O}}
\def\movetoappendix{1}
\newenvironment{aproof}[2]
  { \@nameuse{collect}{appendix}
  { \subsection{#1} \label{#2} \begin{proof} } {\end{proof}}
  }{\@nameuse{endcollect}}
\newenvironment{appsection}[2]
  { \@nameuse{collect}{appendix}
  { \subsection{#1} \label{#2} }
  {}
  }{\@nameuse{endcollect}}
        \renewenvironment{aproof}[2]{\begin{proof}} {\end{proof} }
        \renewenvironment{appsection}[2]{} {}
\title{Range Avoidance in Boolean Circuits via Turan-type Bounds}
\author{Neha Kuntewar\footnote{Indian Institute of Technology Madras, Chennai, India.\\ Email:~{\tt neha.kuntewar@gmail.com, jayalal@cse.iitm.ac.in} }\and Jayalal Sarma$^*$}
\date{}
\begin{document}
\maketitle
\begin{abstract}
   Given a circuit $C : \{0,1\}^n \to \{0,1\}^m$ from a circuit class ${\mathcal C}$, with $m > n$, finding a $y \in \{0,1\}^m$ such that $\forall x \in \{0,1\}^n$, $C(x) \ne y$, is the {\em range avoidance problem} (denoted by ${\cal C}$-$\avoid$). Deterministic polynomial time algorithms (even with access to $\NP$ oracles) solving this problem are known to imply explicit constructions of various pseudorandom objects like hard Boolean functions, linear codes, PRGs etc. 

Deterministic polynomial time algorithms are known for $\NC^0_2$-$\avoid$ when $m > n$, and for $\NC^0_3$-$\avoid$ when $m \ge \frac{n^2}{\log n}$, where $\NC^0_k$ is the class of circuits with bounded fan-in which have constant depth and the output depends on at most $k$ of the input bits. On the other hand, it is also known that $\NC^0_3$-$\avoid$ when $m = n+O\left(n^{2/3}\right)$ is at least as hard as explicit construction of rigid matrices. In fact, algorithms for solving range avoidance for even $\NC^0_4$ circuits imply new circuit lower bounds.

In this paper, we propose a new approach to solving the range avoidance problem via hypergraphs. We formulate the problem in terms of Turan-type problems in hypergraphs of the following kind: for a fixed $k$-uniform hypergraph $H$, what is the maximum number of edges that can exist in $H_C$, which does not have a sub-hypergraph isomorphic to $H$? We show the following:
\begin{itemize}
\item We first demonstrate the applicability of this approach by showing alternate proofs of some of the known results for the range avoidance problem using this framework.
\item We then use our approach to show (using several different hypergraph structures for which Turan-type bounds are known in the literature) that there is a constant $c$ such that $\mon$-$\NC^0_3$-$\avoid$ can be solved in deterministic polynomial time when $m 
> cn^2$. 
\item To improve the stretch constraint to linear, more precisely, to $m > n$, we show a new Turan-type theorem for a hypergraph structure (which we call the \textit{loose $\mychi$-cycles}). More specifically, we prove that any connected 3-uniform linear hypergraph with $m>n$ edges must contain a loose $\mychi$ cycle. This may be of independent interest. 
\item Using this, we show that $\mon$-$\NC^0_3$-$\avoid$ can be solved in deterministic polynomial time when $m > n$, thus improving the known bounds of $\NC^0_3$-\avoid\ for the case of monotone circuits.  In contrast, we note that efficient algorithms for solving $\mon$-$\NC^0_6$-$\avoid$, already imply explicit constructions for rigid matrices.
\item We also generalise our argument to solve the special case of range avoidance for $\NC^0_k$ where each output function computed by the circuit is the majority function on its inputs, where $m>n^2$.
\end{itemize}

\end{abstract}
\newpage
\tableofcontents

\section{Introduction}

Let $C : \{0,1\}^n \to \{0,1\}^m$ be a Boolean circuit with $\{\land, \lor, \lnot \}$ gates, with $m > n$. The range of the function represented by the circuit : $\range(C) = \{ C(x) \mid x \in \{0,1\}^n \}$. Clearly, $\exists y \in \{0,1\}^m$ such that $y \notin \range(C)$. The range avoidance problem (denoted by {\avoid}) asks, given a circuit $C$, with $m > n$, find a $y \notin \range(C)$.

The $\avoid$ problem (introduced by \cite{KKMP'21}) has been shown to have connections to some of the central research questions in circuit lower bounds and pseudorandomness. In particular, even an $\FP^{\NP}$ algorithm for $\avoid$ is known to imply new circuit lower bounds~\cite{Kor'22} and new constructions of many other pseudorandom objects~\cite{Kor'22}.

On the algorithms side, $\avoid$ has a trivial $\ZPP^{\NP}$ algorithm. Indeed, given a circuit $C : \{0,1\}^n \to \{0,1\}^m$ with $m > n$, choose $y \in \{0,1\}^m$ and use the $\NP$ oracle to check if $\exists x \in \{0,1\}^n$ such that $C(x) = y$. Since $m > n$, there are at least $\half$ fraction of $y$s which are outside $\range(C)$, and hence the algorithm succeeds with at least $\half$ probability. Designing a deterministic polynomial time algorithm, with access to an $\NP$ oracle ($\FP^{\NP}$ algorithm) to solve $\avoid$ is a central open problem. 
Recently, \cite{CHR24} obtained the first single-valued $\mathsf{FS}_2\mathsf{P}$ algorithm for $\avoid$ which works infinitely often. Subsequently, \cite{Li23} gave an improved $\mathsf{FS}_2\mathsf{P}$ algorithm that works for all $n$, thus establishing explicit functions in $\mathsf{S}_2\mathsf{E}$ requiring maximum circuit complexity. \cite{CHR24} showed an unconditional zero-error pseudodeterministic algorithm with an $\NP$ oracle and one bit of advice that solves $\avoid$ for infinitely many inputs. These results imply pseudo-deterministic constructions for Ramsey graphs, rigid matrices, pseudo-random generators etc.(See~\cite{CHR24}). 
\cite{ILW23} shows that if there is a deterministic polynomial time algorithm for $\avoid$ then either $\NP = \coNP $ or there does not exist JLS secure $iO$. 

Given the central nature of the problem, it is also meaningful to consider simpler versions first: consider $\mathcal{C}$-$\avoid$ to be the restricted version of $\avoid$ where the circuit is guaranteed to be from the class $\mathcal{C}$. \textit{For which classes of circuits $\mathcal{C}$ do we have an efficient (or $\FP^{\NP}$) algorithm for $\mathcal{C}$-$\avoid$?}
On this frontier, Ren, Santhanam and Wang~\cite{RSW'22} showed that an $\FP^{\NP}$ algorithm for $\mathcal{C}$-$\avoid$ implies breakthrough lower bounds even when $\cal{C}$ is restricted to weaker circuit models such as $\AC^0$ and $\NC^1$ circuits. To go down even further, for every constant $k$, consider the restricted class of circuits $\NC^0_k$ where the depth of the circuit is constant and each output bit depends on at most $k$ of the input bits. In a surprising result, Ren, Santhanam and Wang~\cite{RSW'22} showed that an $\FP^\NP$ algorithm for $\NC^0_4$-$\avoid$ implies $\FP^\NP$ algorithms for $\NC^1$-$\avoid$. Additionally, this will also imply new circuit lower bounds - that there is a family of functions in $\E^{\NP}$ that requires circuits of depth at least $\Omega(n^{1-\epsilon})$. 

Complementing the above, \cite{RSW'22} also exhibited $\FP^\NP$ algorithm for $\avoid$, when $\mathcal{C}$ is restricted to De Morgan formulas of size $s$ with $m> n^{\omega(\sqrt s \log s)}$.  At the low-end regime, \cite{GLW22} showed a polynomial time algorithm for $\NC_2^0$ class, where each output bit depends on at most $2$ input bits. They show a general template for obtaining $\FP^{\NP}$ algorithms for restricted classes via hitting set constructions. In particular, they give $\FP^{\NP}$ algorithms for $\NC^0_k$ circuits, de Morgan formulas, CNF or DNF, provided the stretch is large enough. En route, they also give a general method for obtaining the hitting set in polynomial time using the approximation degree of polynomials. 
Complementing this further, \cite{GGNS23} designed
deterministic polynomial time algorithms for all $\NC^0_k$-$
\avoid$ for $m \ge \frac{n^{k-1}}{\log n}$. For $k=3$, which was the frontier beyond~\cite{GLW22}, this requires $m \ge \frac{n^2}{\log n}$. They also showed the reason for the lack of progress for $\NC^0_3$-$\avoid$ by proving that a deterministic polynomial time algorithm for $\NC^0_3$-$\avoid$, where $m = n+O(n^{2/3})$ would imply explicit construction of rigid matrices in deterministic polynomial time. This demonstrates the importance of the stretch function in the context of $\avoid$ problem.

\paragraph{Our Results:}
In this paper, we propose a new approach to the range avoidance problem for $\NC_k^0$ circuits via Turan-type extremal problems in hypergraphs. For an $\NC^0_k$ circuit $C$, for each function $f_i \in C$ where $i \in [m]$, let $I(f_i)$ denote the set of input variables that $f_i$ depends on. Let $H_C$ denote the hypergraph defined as follows: Let $V$ be the set of inputs in $C$. For $1 \le i \le m$, define a hyperedge $e_i$ as $\{x_j \mid j \in [n], x_j \in I(f_i)\}$. Thus the multiset $E=\{\{e_i \mid i \in [m]\}\}$ has exactly $m$ elements, each of size at most $k$. Let $\cal{F}$ be the family of functions that appear in the circuit $C$, and let $\phi: E \to \cal{F}$ be a labelling of the edges of the hypergraph $H$ with the corresponding function. Without loss of generality, by assigning colors to Boolean bits, say red $R$ for $0$, and blue $B$ for $1$, we can interpret these functions as functions from $\{R,B\}^k \to \{R,B\}$, which induces a color to the hyperedge given any $2$-coloring of the vertices of the hypergraph. A $2$-coloring of the hyperedges of $H$ is said to be a $\phi$-\textit{coloring} if there is a vertex coloring which induces this edge coloring via $\phi$. With this notation, we state the following theorem, which essentially follows from the above definition (see section~\ref{sec:avoid-fixedhypergraphs}).

\begin{restatable}{theorem}{introlemmacoloringtoavoid}
\label{introlemma-coloring-to-avoid}
    Let $C:\{0,1\}^n \rightarrow \{0,1\}^m$ be a circuit and $H_C$ be the corresponding hypergraph and let $\phi$ be the labelling function. Suppose $H_C$ contains a sub-hypergraph $H$ and an edge-coloring of $H$ which is not a $\phi$-coloring, where both the subhypergraph and the edge-coloring can be found in polynomial time. Then, there is a polynomial time algorithm to solve $\avoid$ for $C$. 
\end{restatable}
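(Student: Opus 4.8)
The plan is to reduce the range avoidance problem for $C$ to finding an edge-coloring of $\H'$ that is not a $\phi$-coloring, and then exploit the fact that $\H'$ has fixed (constant) size. First I would observe that a string $y \in \{0,1\}^m$ is outside $\range(C)$ precisely when there is no vertex coloring (i.e.\ no assignment $x \in \{0,1\}^n$) such that the induced edge coloring agrees with $y$ on every coordinate. Restricting attention to the coordinates indexed by the edges of the sub-hypergraph $\H'$, it suffices to produce a partial string on those coordinates that cannot be realized by any assignment to the variables appearing in $\H'$, and then fill in the remaining coordinates arbitrarily. By hypothesis such a partial string exists: it is exactly the given edge-coloring of $\H'$ that is not a $\phi$-coloring.

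The key steps, in order, are: (1) identify the sub-hypergraph $\H'$ inside $\H_C$ together with the witnessing bad edge-coloring $\chi$ --- by assumption this is handed to us (or can be searched for in polynomial time since $\H'$ has constant size, so there are only $O(n^{|V(\H')|})$ candidate embeddings and constantly many colorings to check); (2) translate $\chi$ into a partial assignment to the output coordinates: for each edge $e_i$ of $\H'$, set $y_i$ to the color $\chi(e_i)$ (under the red/blue $\leftrightarrow 0/1$ convention); (3) set $y_j$ for all coordinates $j$ not corresponding to edges of $\H'$ to an arbitrary value, say $0$; (4) argue correctness: if some $x$ had $C(x) = y$, then the vertex coloring of $\H_C$ given by $x$ (restricted to $V(\H')$) would induce, via $\phi$, exactly the edge coloring $\chi$ on $\H'$, contradicting that $\chi$ is not a $\phi$-coloring. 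Hence $y \notin \range(C)$.

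The only point requiring a little care --- and I expect it to be the main (though still mild) obstacle --- is the bookkeeping in step (1): the statement says $\H_C$ ``contains a sub-hypergraph $\H'$'', and we must be careful about what ``contains'' and ``isomorphic'' mean for labelled hypergraphs, i.e.\ that the labelling $\phi$ restricted to the edges of the copy of $\H'$ is the labelling with respect to which ``not a $\phi$-coloring'' is asserted. Once the embedding of $\H'$ into $\H_C$ is fixed and we carry the induced labelling along, the correctness argument in step (4) is immediate from the definition of $\phi$-coloring. Everything else --- enumerating constant-size sub-hypergraphs, enumerating constantly many edge-colorings, and writing down $y$ --- runs in time polynomial in $n$ and $m$, which gives the claimed algorithm.
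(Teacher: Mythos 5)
Your proposal is correct and follows essentially the same route as the paper's own proof: fix the embedded copy of $\H'$ (found by brute-force search, which is polynomial since $\H'$ has fixed size), set the output bits corresponding to edges of $\H'$ according to the non-$\phi$-colorable edge coloring, fill the remaining bits arbitrarily, and observe that any preimage $x$ would induce exactly that forbidden coloring via $\phi$, a contradiction. Your remark about carrying the labelling $\phi$ along with the embedding is precisely the (implicit) bookkeeping in the paper's Claim on the equivalence between membership in $\range(C)$ and $\phi$-colorability.
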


The main idea of the above proof is that it suffices to identify a sub-hypergraph $H$ in $H_C$ such that there is an edge-coloring of $H$ which is not $\phi$-coloring on $H$. If we can ascertain the existence of a copy of $\H$' by extremal hypergraph theory, then it can be leveraged to solve the $\avoid$ problem. In particular, this formulates range avoidance problem in terms of Turan-type extremal problems in hypergraphs of the following kind - \textit{for a fixed family of hypergraphs $\H$, what is the maximum number of edges that can exist in an $n$-vertex hypergraph that avoids any member of the family $H$ as an (induced) subgraph}? This is particularly well-studied for $k$-uniform hypergraphs and is denoted by $\mathsf{ex}_k(n,\H)$. Notice that the family $\H$ may critically depend on the set of functions $\F$, and the mapping $\phi$, and hence on the circuit $C$. A natural question is, is there a family of hypergraphs $\H$ that works for all circuits in the circuit class\footnote{We remark that while it may not be easy to find the hypergraph structure from the given circuit in general, for the important special cases of the problem mentioned in the previous discussion, just the exhaustive search based on the circuit structure will yield efficient algorithms to find the hypergraph structure.} for which we are interested to solve $\avoid$ problem. 

We first demonstrate immediate, simple applications of this framework for designing algorithms for $\avoid$ in restricted settings. As mentioned above, \cite{GLW22} designed a simple iterative algorithm for solving $\avoid$ when the input is restricted to circuits where each output function depends on at most $2$ input bits. We show that the same special case can also be solved using our approach. Thus, as a warm-up, we derive an alternative proof of the following theorem (originally due to \cite{GLW22}) using our framework.

\begin{theorem}
    There is a deterministic polynomial time algorithm for $\NC^0_2$-$\avoid$ when $m > n$. Using the same framework, there is a polynomial time algorithm for solving $\{${\sc And}$_k$, {\sc Or} $_k \}$-$\avoid$, for a fixed $k$.
\end{theorem}   

As a second demonstration of our framework, we use tools from extremal graph theory to provide deterministic polynomial time algorithms for $\avoid$ when $\F$ contains only $\land$ and $\lor$ functions that depend on exactly $k$ input bits. We remark that such powerful tools are not needed to solve this special case, as the iterative algorithmic idea due to \cite{GLW22} can be extended to this case as well. Nevertheless, we argue that it serves as a demonstration of our technique itself. We state both of these applications as the following proposition.

\begin{proposition}
    There is a deterministic polynomial time algorithm for $\NC^0_2$-$\avoid$ when $m > n$. Using the same framework, there is a polynomial time algorithm for solving $\{${\sc And}$_k$, {\sc Or} $_k \}$-$\avoid$, for a fixed $k$.
\end{proposition}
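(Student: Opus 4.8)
The plan is to apply Theorem~\ref{introlemma-coloring-to-avoid} in both cases: it suffices, given the circuit $C$ and its hypergraph $\H_C$ with labelling $\phi$, to locate a small fixed sub-hypergraph $\H'$ that admits an edge-coloring which is not a $\phi$-coloring, and then invoke the theorem's polynomial-time algorithm. So the real work is purely combinatorial: identify, for each allowed family $\F$ of local functions, a forbidden sub-structure whose existence is forced whenever $m>n$ (resp. $m$ is large enough), and verify that it carries a non-$\phi$-colorable edge-coloring.

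For $\NC^0_2$-$\avoid$ with $m>n$: here each edge of $\H_C$ has size $\le 2$, so $\H_C$ is essentially a multigraph (with possible loops for functions of a single variable) on $n$ vertices with $m>n$ edges. Such a graph must contain a cycle (counting multi-edges and loops as cycles of length $1$ or $2$). I would take $\H'$ to be a shortest such cycle $v_1 e_1 v_2 e_2 \cdots v_\ell e_\ell v_1$. Each $f_i=\phi(e_i)$ is one of the $16$ two-variable Boolean functions viewed as a map $\{R,B\}^2\to\{R,B\}$; the key observation is that a single such function, as a constraint relating the colors of its two endpoints, either forces a functional dependence between the two endpoint colors (at most "one output color per input pair", so along the cycle the colors propagate) or is independent of one endpoint. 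In either case, walking around the cycle the induced edge-colors are constrained: the number of $\phi$-colorings of $\H'$ is strictly less than $2^\ell$ (the number of all edge-colorings), because the $2^\ell$ possible vertex colorings of the cycle cannot all give distinct edge-colorings once even one edge ignores an endpoint, and even when every edge depends on both endpoints the propagation around the cycle imposes a consistency condition. Hence some edge-coloring of the cycle is not a $\phi$-coloring, and Theorem~\ref{introlemma-coloring-to-avoid} applies. Finding the shortest cycle is clearly polynomial time.

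For $\{\textsc{And}_k,\textsc{Or}_k\}$-$\avoid$: now $\H_C$ is $k$-uniform with $m$ edges, and each edge is labelled $\land$ or $\lor$. An edge labelled $\land$ is colored $B$ (i.e. $1$) iff \emph{all} its vertices are $B$, and $R$ otherwise; symmetrically for $\lor$. I would pick $\H'$ to be a fixed small configuration whose existence is guaranteed by a Turán-type bound once $m$ exceeds a constant — the natural choice is two edges $e_1,e_2$ sharing exactly one vertex (a "loose path" of length two, or more robustly a "sunflower" with two petals and a singleton core), which is forced as soon as $m$ is large relative to $n$ and $k$ by standard extremal estimates (a $k$-uniform hypergraph with more than $\mathrm{ex}_k(n,\H')=O(n^{k-1})$ edges contains it; in fact any two edges in a large enough linear piece intersect suitably). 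On such an $\H'$, consider the edge-coloring that colors $e_1$ with the value "$\land$ should output" and $e_2$ with the value "$\lor$ should output" in a conflicting way at the shared vertex: e.g. if $\phi(e_1)=\land$ then $e_1=B$ forces the shared vertex to $B$, while if $\phi(e_2)=\lor$ then $e_2=R$ forces the shared vertex to $R$ — a contradiction, so the edge-coloring $(e_1,e_2)=(B,R)$ is not a $\phi$-coloring. One checks the remaining label patterns $(\land,\land),(\lor,\lor),(\lor,\land)$ analogously, each yielding an explicit non-$\phi$-colorable pattern on a slightly larger but still constant-size $\H'$ (for matching labels, take three edges through a common vertex and a parity-style obstruction, or use that $\land$ forces $B$ only in the all-$B$ case). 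Again, the obstruction is fixed and can be found by exhaustive search over constantly many edge-tuples.

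The main obstacle I anticipate is the bookkeeping in the second case: ensuring that a \emph{single} fixed finite family $\H'$ (independent of which of the two labels each edge carries) simultaneously handles every labelling pattern, while keeping the required lower bound on $m$ at the claimed order and keeping $\H'$ of constant size. This is where one must be careful to choose $\H'$ large enough to contain a conflicting pair/triple of edges under \emph{any} $\{\land,\lor\}$-labelling yet small enough that its appearance is forced by a Turán bound of the form $m > c_k n^{k-1}$ — and then to double-check that for every labelling of that fixed $\H'$ there is a genuinely non-realizable edge-coloring, not merely a non-unique vertex preimage. The first case ($\NC^0_2$) is comparatively routine once the shortest-cycle observation is made.
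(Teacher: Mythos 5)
Your first part ($\NC^0_2$-$\avoid$) follows essentially the same route as the paper: when $m>n$ the graph of the circuit contains a cycle, and one argues that some edge-coloring of that cycle is not realizable. Two caveats, though. First, the cycle is \emph{not} of fixed size, so you cannot simply ``invoke Theorem~\ref{introlemma-coloring-to-avoid}'' and search exhaustively; you must exhibit the bad coloring explicitly, which the paper does by a propagation argument (all-parity cycles violate a parity constraint; a cycle containing an $\land$/$\lor$-type edge lets you fix colors along the path and flip the last edge). Your counting argument only gives existence, and your justification is shaky in places: edges of $\H_C$ are the sets $I(f_i)$, so a size-$2$ edge never ``ignores an endpoint,'' while a single-variable function gives a loop labelled by a dictator for which \emph{every} edge-coloring is realizable --- so a shortest cycle in your multigraph can fail to carry any obstruction, and single-variable and constant functions need separate (easy) treatment, as the paper does for constants.

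The second part has a genuine gap. Your obstruction for the homogeneous label patterns does not exist: for two $\textsc{And}_k$ edges sharing one vertex, or even for any number of $\textsc{And}_k$ edges through a common vertex (pairwise meeting only there), \emph{every} edge-coloring is a $\phi$-coloring --- color a vertex $R$ iff it lies in some $R$-edge and no $B$-edge; each $B$-edge is all-$B$ and each $R$-edge keeps a private vertex to set to $R$. There is no ``parity-style obstruction'' for monotone labels on a star. What is actually needed is a configuration in which one edge is entirely covered by the other edges, so that forcing the covering edges (all-$B$ for $\land$ colored $B$, all-$R$ for $\lor$ colored $R$) determines every vertex of the covered edge and its color can then be flipped. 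This is exactly why the paper works with the $k$-crown and its variant $C^*$ (the center edge $e_0$ meets each petal once and is covered by them), together with the Turán-type bound of \cite{ZBW24} guaranteeing such a configuration once $m>\frac{k(k-2)(n-s)}{k-1}$, under the additional hypothesis --- also missing from your plan --- that any two output functions share at most one input, i.e.\ $\H_C$ is linear. Your proposed pair/star structures handle only the mixed $(\land,\lor)$ labelling, so the single fixed $\H'$ you anticipate needing is precisely the crown-type structure, and the threshold is linear in $n$ rather than of order $n^{k-1}$.
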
    

We now demonstrate the main application of this framework. Towards describing the setting of our application, we study the complexity of $\avoid$ in the restricted case of when the circuit is monotone. A first observation is that by applying De Morgan's law we can reduce the $\avoid$ (when $m > 2n$) in polynomial time to ${\cal C}$-$\avoid$ where ${\cal C}$ is restricted to monotone circuits, where reduction preserves the depth of the circuit and at most doubles the size of the circuit. Hence, solving $\avoid$ even for monotone circuits implies breakthrough circuit lower bounds. We provide the details of the following proposition in the Appendix~\ref{app:thm-mon-redn}.
\begin{proposition}
\label{thm-mon-redn}
If $m > 2n$, $\avoid$ reduces to $\mon$-$\avoid$ in polynomial time.
\end{proposition}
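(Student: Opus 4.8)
The plan is to use the standard ``double-rail'' simulation that pushes all negations down to the input layer. Given an instance $C : \{0,1\}^n \to \{0,1\}^m$ of $\avoid$ with $m > 2n$, I would first rewrite $C$ in negation normal form: by repeatedly applying De Morgan's laws one obtains an equivalent circuit in which the only negated wires feed in at the inputs. This can be done while keeping the $\land/\lor$ structure intact (hence preserving the depth, up to an additive constant to account for the input negations) and at most doubling the number of gates, since it suffices to maintain, for every original gate $g$, one gate computing $g$ and one gate computing $\lnot g$.

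Next, I would replace the ``rails''. Introduce $n$ fresh input variables $z_1,\dots,z_n$ and substitute each occurrence of $\lnot x_i$ by $z_i$. The result is a \emph{monotone} circuit $C' : \{0,1\}^{2n} \to \{0,1\}^m$ on the $2n$ inputs $x_1,\dots,x_n,z_1,\dots,z_n$, which is computable from $C$ in polynomial time, of polynomial size, and of essentially the same depth. Since $m > 2n$, the circuit $C'$ is a legitimate instance of $\mon$-$\avoid$.

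Correctness hinges on a single observation: $\range(C) \subseteq \range(C')$. Indeed, for every $x \in \{0,1\}^n$ we have $C(x) = C'(x_1,\dots,x_n,\lnot x_1,\dots,\lnot x_n)$, so every string produced by $C$ is also produced by $C'$. Consequently, if the $\mon$-$\avoid$ algorithm returns $y \in \{0,1\}^m$ with $y \notin \range(C')$, then a fortiori $y \notin \range(C)$, and we may output exactly this $y$. The output map of the reduction is thus the identity, and each preceding step is an explicit polynomial-time transformation, so this yields the claimed reduction.

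I do not expect a genuine obstacle here; the only points needing care are (i) verifying that the negation-normal-form conversion does not cause a super-polynomial size blow-up, which is handled by the ``compute $g$ and $\lnot g$ together'' trick, and (ii) the input-count bookkeeping, namely checking that the monotone circuit genuinely has $2n$ inputs so that the hypothesis $m > 2n$ is precisely what makes it a valid $\mon$-$\avoid$ instance (and, more generally, noting why the same argument does not give a reduction for $n < m \le 2n$).
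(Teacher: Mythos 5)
Your proposal is correct and follows essentially the same route as the paper: push negations to the inputs via De Morgan (with at most a doubling of size), replace each $\lnot x_i$ by a fresh variable to get a monotone circuit $C'$ on $2n$ inputs, observe $\range(C) \subseteq \range(C')$, and use $m > 2n$ to make $C'$ a valid $\mon$-$\avoid$ instance whose avoided string also avoids $C$. No gaps.
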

\begin{aproof}{Proof of \cref{thm-mon-redn}}{app:thm-mon-redn}
Let $C:\{0,1\}^n \rightarrow \{0,1\}^m$ a multi-output circuit with $m>2n$ which is an instance of $\avoid$. We describe how to obtain the circuit $C':\{0,1\}^{2n}\rightarrow \{0,1\}^m$ from $C$. By applying De Morgan's law to push down the negation gates (with appropriate duplication of each gate), we can construct a circuit $D$ equivalent to circuit $C$, with all the negations at the leaves. The circuit $D$ has input literals $\{x_1,\ldots x_n, \neg x_1, \ldots \neg x_n\}$. Now obtain the circuit $C'$ by replacing each input variable $\neg x_i$ by a new variable $x_i'$. Observe that $\range(C) \subseteq \range(C')$. Indeed, consider an arbitrary $y \in \range(C)$ with $a=(a_1, \ldots a_n) \in \{0,1\}^n$ such that $C(a)=y$. Let $a'=(a_1,\ldots a_n, \overline{a_1}, \ldots \overline{a_n}) \in \{0,1\}^{2n}$. Observe that $C'(a')=y$.   
    Since $m > 2n$, $C'$ is a valid input instance for $\mon$-$\avoid$. Hence, it suffices to solve the range avoidance problem for monotone circuits with $m>2n$.
\end{aproof}

Therefore, we can restrict our attention to solving the range avoidance problem for the monotone circuits. We also observe the following corollary for the case of $\NC^0_k$ circuits, which are the current boundary for polynomial time algorithms solving $\avoid$.
%\begin{corollary}
    For any $k> 0$, if $m>2n$, $\NC^0_k$-$\avoid$ reduces to $\mon$-$\NC^0_{2k}$-$\avoid$ in deterministic polynomial time.
%\end{corollary}
In particular, when $m> 2n$, $\NC^0_3$-$\avoid$ reduces to $\mon$-$\NC^0_{6}$-$\avoid$ in deterministic polynomial time. In the remaining part of the paper, we use the framework in \cref{introlemma-coloring-to-avoid} to show polynomial time algorithms for $\mon$-$\NC^0_{3}$-$\avoid$ and related problems.

\paragraph{Deterministic Polynomial Time Algorithm for $\monncavoid$ with Quadratic stretch:}

We now show the main technical application of the framework in the case when $\F$ contains only $\maj$ function on $k$ inputs, with the additional constraint that two functions should depend on at most one common input variable. As per the above formulation, this makes the hypergraph to be linear and $k$-uniform. In the setting of $k$-uniform linear hypergraphs, using Turan-type results for specifically designed graphs in $\H$, we can use \cref{introlemma-coloring-to-avoid} for deriving polynomial time algorithms for solving $\avoid$ for specific classes of circuits. 

Turan-type extremal problems for hypergraphs were introduced by \cite{BES73} and bounds are known (see \cite{Kee11} for a survey) for $\mathsf{ex}_k(n,\H)$ for a few hypergraphs $\H$. Bounds are known for when $\H$ is a $k \times k$ grid~\cite{FR13}, wickets~\cite{Sol23}, Fano plane~\cite{KS04} (see section~\ref{sec:prelims} for a brief overview of Turan-type extremal problems, and exact definition of these hypergraphs). 
We exhibit several different fixed hypergraphs $H$ and use them to provide different proofs of the following algorithmic upper bound 

\begin{restatable}{theorem}{thmmonavoid}
\label{thm:monavoid}
$\monncavoid$ when $m > cn^2$ for any constant $c$ can be solved in deterministic polynomial time. 
\end{restatable}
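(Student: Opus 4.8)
The plan is to instantiate the framework of \cref{introlemma-coloring-to-avoid}: build the hypergraph $\H_C$ from the given monotone $\NC^0_3$ circuit $C$, locate inside it a fixed-size sub-hypergraph $\H'$ together with an edge $2$-colouring that is not a $\phi$-colouring, and invoke \cref{introlemma-coloring-to-avoid}. First I would normalise $C$. If some output is constant, flip that coordinate and recurse. If some output depends on a variable $x_j$ occurring in no other output, then $\range(C)$ projects onto the range of the circuit obtained by deleting that output and that variable, so delete both and recurse; this keeps $m>c'n^2$ for a constant $c'$ and forces every remaining variable to have degree at least $2$ in $\H_C$. Finally, since there are only $O(n^2)$ possible supports of size at most $2$, if more than $\binom{n}{2}+n$ outputs depend on at most $2$ variables then two of them share a support $S$ with $|S|\le 2$; the monotone functions on $S$ that genuinely depend on all of $S$ are totally ordered or equal (on two variables, $u\wedge v$ and $u\vee v$, with $u\wedge v\le u\vee v$), so one of the four edge $2$-colourings of this pair is not a $\phi$-colouring and \cref{introlemma-coloring-to-avoid} finishes. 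Hence we may assume at least $m-O(n^2)=\Omega(n^2)$ outputs depend on exactly $3$ variables; restricting to these gives a $3$-uniform hypergraph $\H''$ on at most $n$ vertices with $\Omega(n^2)$ edges, each labelled by one of the constantly many monotone functions on three variables that depend on all three.

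Next I would extract a fixed configuration using a Turan-type bound. For a suitable fixed $3$-uniform $\H'$ with $\mathsf{ex}_3(n,\H')=O(n^2)$ --- candidates are a large enough $k\times k$ grid \cite{FR13}, a wicket-type gadget \cite{Sol23}, or, via the $(6,3)$-type phenomenon \cite{BES73}, a configuration of a few edges on a bounded number of vertices --- the hypergraph $\H''$ must contain a copy of $\H'$ once the $\Omega(n^2)$ lower bound beats $\mathsf{ex}_3(n,\H')$, and since $\H'$ has a constant number of vertices we find such a copy by brute-force search over the $O(n^{|V(\H')|})$ candidate vertex subsets.

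The technically central step is to show that the chosen configuration $\H'$ is \emph{rigid}: for every labelling $\phi$ of its edges by monotone functions of the relevant arities, some edge $2$-colouring of $\H'$ is not a $\phi$-colouring. Both $\H'$ and the family of admissible labellings are of bounded size, so this is a finite verification, but it needs $\H'$ chosen with care. The guiding fact is that a monotone $g$ on three variables, with any two of them fixed, restricts on the third to either a constant or the identity; so relative to a vertex $2$-colouring every edge of $\H'$ is either \emph{forced} to a fixed colour or \emph{free}, and one has to argue that the $2^{|V(\H')|}$ vertex colourings cannot, via their forced/free patterns, realise all $2^{|E(\H')|}$ edge colourings. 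For a configuration with enough edge overlap no vertex colouring frees all edges, and the forced patterns cannot jointly cover every target; this is where the Fano plane (on which a majority labelling can never realise an edge colouring with exactly two $1$-edges) and the wicket-type gadgets enter, possibly after splitting into a regime where the $\maj$ function dominates and one where it does not. A non-$\phi$-colouring of $\H'$ then yields, through \cref{introlemma-coloring-to-avoid}, a string outside $\range(C)$ in polynomial time.

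The main obstacle is the tension between the last two steps: one needs a single $\H'$ that is \emph{degenerate} enough to satisfy $\mathsf{ex}_3(n,\H')=O(n^2)$ yet \emph{rigid} enough to defeat every monotone labelling, and these pull in opposite directions. For instance the plain $k\times k$ grid is degenerate but not rigid: if every edge $\{x_i,y_j,z_{ij}\}$ is labelled $z_{ij}\vee(x_i\wedge y_j)$, colouring all the row- and column-vertices $0$ leaves every cell vertex free and hence realises every edge colouring. The normalisation (deleting private variables) kills some such obstructions, but one must still either reduce away the ``top-$\vee$''/``top-$\wedge$'' outputs or use a configuration --- such as a wicket or an overlapping family of grids --- with no private vertices, and establishing rigidity for it (together with the majority-dominated regime, which is where a Fano-plane-style argument is needed even though its own Turan number is not quadratic, so it must be applied to a suitably restricted or sparsified majority sub-hypergraph) is the crux of the proof.
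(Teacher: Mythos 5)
Your plan follows the paper's general framework (build $\H_C$, find a forbidden sub-hypergraph with a non-$\phi$-colouring, apply \cref{introlemma-coloring-to-avoid}), but it stalls exactly at the step you yourself flag as the crux, and that step is where the paper's proof does something different. You try to find a single fixed configuration $\H'$ that is simultaneously degenerate (Turan number $O(n^2)$) and ``rigid'' against \emph{every} labelling by monotone $3$-bit functions, and you leave that rigidity unestablished; your own grid example with labels $z_{ij}\vee(x_i\wedge y_j)$ shows why this is in serious doubt, and the same ``free private vertex'' obstruction applies to the wicket as well (it has degree-one vertices on each row edge, so a labelling that ORs in those vertices makes the row edges free). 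The paper never needs such a universally rigid configuration: it first performs a GLW22-style variable-setting reduction (\cref{thm-mon-to-maj}) that eliminates, output by output, every monotone function on at most $3$ bits other than $\maj_3$ --- each such output lets you fix the output bit and at least one input, shrinking the circuit while preserving $m'>n'$ --- so that the hypergraph framework is only ever invoked with $\F=\{\maj_3\}$, for which the wicket rigidity lemma (\cref{lem-wicket}) is a short counting argument. Your alternative of ``discarding'' non-$3$-variable outputs and hoping a rigid gadget handles the remaining mixture of monotone labels is not a proof, and no argument is offered for it.

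There is a second concrete gap: the Turan bounds you invoke (the $k\times k$ grid bound of \cite{FR13} and Solymosi's wicket bound, \cref{thm-wicket}) are statements about \emph{linear} $3$-uniform hypergraphs, but after your normalisation two surviving outputs may still share two input variables, so $\H''$ need not be linear and those bounds do not apply to it. The paper deals with this by an additional combinatorial reduction from the two-intersection case to the one-intersection case (\cref{obs-cluster} and \cref{thm-2-inter}, the ``alive variable'' argument inside a cluster), and only then applies the wicket bound to the resulting linear instance via \cref{lem-coloring-to-avoid}. So the correct skeleton is: monotone $\NC^0_3$ $\rightarrow$ $\majthreeavoid$ $\rightarrow$ $\oneintermajthree$ $\rightarrow$ wicket argument; your proposal is missing both reductions and does not supply a substitute for either.
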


Our proof relies on a reduction of the problem to a more restricted case of $\monncavoid$ where each individual output function is the majority of three input bits. We call this version $\majthreeavoid$. Notice that for any two functions, there can be at most two input bits that both of them can depend on. By using a combinatorial argument on the circuit, we show how to reduce $\majthreeavoid$ to the case where two functions can depend on at most one common input. We denote this version as $\oneintermajthree$. As mentioned above, using \cref{introlemma-coloring-to-avoid} along with the fact that $H_C$ is $k$-uniform, and $\F = \{ \maj_3 \}$, we design polynomial time algorithms for $\oneintermajthree$ where each function is majority on three inputs and two different functions depend on at most one common input bit. We exhibit several different hypergraphs $H$ - wickets (\cref{lem-wicket}), $k$-cage (\cref{lem:alternative1-kcage}), weak Fano plane (\cref{lem:alternative2-weakfanoplane}), $k \times k$ grid (\cref{lem-grid-comb}), $(k,\ell)$-butterfly (\cref{lem:alternative3-butterfly}), $(k,\ell)$-odd kite (\cref{lem:alternative4-odd-kite}) which can be also used to derive polynomial time algorithms for $\oneintermajthree$. Extremal bounds are known for $\mathsf{ex}_k(n,\H)$ only when $\H$ is a weak Fano plane, $3 \times 3$ grid and the wicket. The best known bounds for $\mathsf{ex}_k(n,\H)$ among these are when $\H$ is the hypergraph called a wicket (see \cref{sec:prelims} for a definition), and hence we use it in the proof of \cref{thm:monavoid}.

Observing that the above reduction to the monotone case also doubles the number of input bits on which each function depends, for $\NC^0_k$-$\avoid$ with $m > 2n$, this implies a reduction to $\textrm{\sf Mon}$-$\NC^0_{2k}$-$\avoid$. We use \cref{introlemma-coloring-to-avoid}, with $\H$ as the $k \times k$ grid (see (\cref{lem-grid-comb})), we show that:
\begin{theorem}
There is a deterministic polynomial time algorithm for $\oneintermajk$ when $m > n^2$.
\end{theorem}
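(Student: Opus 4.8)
The plan is to apply the framework of \cref{introlemma-coloring-to-avoid} with the fixed sub-hypergraph $\H'$ chosen to be the \emph{$k \times k$ grid} $G_{k\times k}$: the $k$-uniform hypergraph on $k^2$ vertices $\{v_{i,j} : i,j\in[k]\}$ whose $2k$ edges are the $k$ rows $R_i=\{v_{i,1},\dots,v_{i,k}\}$ and the $k$ columns $C_j=\{v_{1,j},\dots,v_{k,j}\}$. Given a circuit $C \in \oneintermajk$ with $m>n^2$, I first pass to the associated hypergraph $\H_C$ from the introduction. Since every output is $\maj_k$ on $k$ inputs, $\H_C$ is $k$-uniform; since any two outputs share at most one common input, $\H_C$ is linear; and the labelling $\phi$ is the constant map sending every edge to $\maj_k$.

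Next I invoke the Turán-type bound for grids, \cref{lem:alternative-grid}, a bound going back to \cite{FR13}: any $k$-uniform linear hypergraph on $n$ vertices with more than $n^2$ edges contains a copy of $G_{k\times k}$. Since $m>n^2$, the hypergraph $\H_C$ contains such a copy, and because $G_{k\times k}$ has the fixed size $k^2$ (recall $k$ is a constant), one such copy can be located in time $m^{O(k^2)}=\mathrm{poly}(m)$ by exhaustive search over all $2k$-subsets of edges, testing for the grid pattern. Denote the located copy $\H'\subseteq\H_C$.

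It then remains to exhibit an edge $2$-coloring of $G_{k\times k}$ that is not a $\phi$-coloring when every edge is labelled $\maj_k$. I take the coloring that assigns colour $1$ to every row edge $R_i$ and colour $0$ to every column edge $C_j$. If some vertex $2$-coloring $\chi\colon\{v_{i,j}\}\to\{0,1\}$ induced it, then $\maj_k(\chi|_{R_i})=1$ for every $i$ forces each row to contain at least $\lceil k/2\rceil$ ones, so the total number of $1$'s is at least $k\lceil k/2\rceil$; while $\maj_k(\chi|_{C_j})=0$ for every $j$ forces each column to contain at most $\lfloor k/2\rfloor$ ones, so the total number of $1$'s is at most $k\lfloor k/2\rfloor$. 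As $k$ is odd this is a contradiction, so this edge coloring of $G_{k\times k}$ is not a $\phi$-coloring.

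Putting the pieces together: $\H_C$ contains the fixed-size sub-hypergraph $\H'$ isomorphic to $G_{k\times k}$, and $G_{k\times k}$ carries an edge coloring that is not a $\phi$-coloring, so \cref{introlemma-coloring-to-avoid} supplies a deterministic polynomial-time algorithm outputting some $y\notin\range(C)$. I expect the main obstacle to be the second step — establishing the grid Turán bound in exactly the form ``$m>n^2 \Rightarrow G_{k\times k}\subseteq\H$'' for $k$-uniform linear hypergraphs, which is the genuinely hypergraph-theoretic ingredient (\cref{lem:alternative-grid}). By contrast, the colour obstruction of the third step is the short double-counting computation above, and the translation from the circuit to $\H_C$ together with the brute-force search for the grid copy are routine once $k$ is taken to be constant.
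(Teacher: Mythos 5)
Your proposal follows essentially the same route as the paper: form the $k$-uniform linear hypergraph $\H_C$, invoke the grid Tur\'an bound of \cite{FR13} (the paper's \cref{thm-grid-turan}, which already applies once $m>\frac{n(n-1)}{k(k-1)}$, well below $n^2$), locate the constant-size grid by exhaustive search, exhibit the row/column edge coloring as a non-$\maj$-coloring, and conclude via \cref{introlemma-coloring-to-avoid} --- this is exactly the paper's proof of \cref{thmoneinter}. The one wrinkle is your double count: with thresholds $\lceil k/2\rceil$ (rows) and $\lfloor k/2\rfloor$ (columns) the two bounds coincide when $k$ is even, so your contradiction as written only covers odd $k$, whereas the statement and the paper's \cref{lem:alternative-grid} handle even $k$ as well; this is fixed in one line by using the paper's strict-majority threshold, namely that an edge coloured $1$ forces at least $\lfloor k/2\rfloor+1$ ones, which for even $k$ gives $k\left(\frac{k}{2}+1\right) > k\cdot\frac{k}{2}$ and restores the contradiction.
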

However, unlike the case when $k=3$, it is unclear how to reduce $\mon$-$\NC^0_k$-$\avoid$ to $\majkavoid$, and then further to $\oneintermajk$. Indeed, designing polynomial time algorithms for $\mon$-$\NC^0_6$-$\avoid$ itself with $m=n+O(n^{2/3})$ already leads to explicit construction of rigid matrices\cite{GGNS23}, which is an important problem in the area.

\paragraph{Deterministic Polynomial Time Algorithm for $\monncavoid$ with Linear Stretch:}
Deterministic polynomial time algorithms are known\cite{GGNS23} for $\NC^0_3$-$\avoid$ when $m \ge \frac{n^2}{\log n}$ and as mentioned above, improving the stretch constraint to $m = n+O(n^{2/3})$ would imply explicit construction of rigid matrices. We next aim to improve the stretch requirement in the above theorem to linear (in fact, to just $n$), thus improving the known bounds for the case of $\monncavoid$.

Towards this, notice that the above argument for \cref{thm:monavoid} uses the bounds for the Turan number from the literature in a black-box manner. In fact, the quadratic constraints on the stretch function $m$ that we have imposed in \cref{thm:monavoid} can be relaxed by using stricter variants of the \textit{power-bound conjecture} in the context of Turan numbers of linear hypergraphs. In particular, \cite{GL21} conjectures that there exists an $\epsilon$ such that any linear hypergraph on $n$ vertices having more than $\omega(n^{2-\epsilon})$ edges must contain a $(u,u-4)$-hypergraph\footnote{A $(u,u-4)$-hypergraph in this context is a $3$-uniform linear hypergraph which has $u-4$ edges spanning $u$ vertices.} as a subgraph. The specific hypergraph of wicket is an example of a $(9,5)$-hypergraph. However, there are $(9,5)$-hypergraphs which are not suitable for our purpose (See \cref{appsec:95-graphs}). Hence, we need a stronger variant of this conjecture, which insists on having a wicket as a subgraph instead of just $(9,5)$-subgraphs.

Specifically, in the case of wickets, which we critically use in our argument for \cref{thm:monavoid}, \cite{FS24,Sol23} conjectured that the Turan number for $3$-uniform hypergraphs avoiding wickets is $n^{2-o(1)}$, and this will directly improve the constraint on $m$ as $m = n^{2-o(1)}$ for \cref{thm:monavoid} as well. 

To go beyond the limitations posed by the above structures for which Turan number bounds are classically studied, we define a new notion of cycles called $\mychi$-cycles in $3$-uniform linear hypergraphs. We first show a new Turan-type theorem for such cycles for connected hypergraphs, which might be of independent interest.

\begin{restatable}{theorem}{thmextremalchi}\label{thm-loose-chi-bound}
    Any connected 3-uniform linear hypergraph with $m>n$ edges must contain a loose $\mychi$ cycle.
\end{restatable}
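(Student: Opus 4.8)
The plan is to argue by contradiction: let $\H$ be a connected $3$-uniform linear hypergraph with $m>n$ edges that contains no loose $\mychi$-cycle, chosen with $m$ minimum. The first step is a pruning argument showing every vertex of $\H$ has degree at least $2$. Suppose $v$ has degree $1$, lying in an edge $e=\{v,a,b\}$, and delete $e$ together with any vertex thereby made isolated; this removes one edge and one or two vertices, so the resulting hypergraph $\H'$ has $|E(\H')|-|V(\H')|\ge |E(\H)|-|V(\H)|\ge 1$, in particular strictly more edges than vertices. If $\H'$ is connected it is a strictly smaller instance of the hypothesis, so by minimality it --- and hence $\H$ --- contains a loose $\mychi$-cycle, a contradiction. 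If $\H'$ is disconnected, then deleting $e$ split $\H$ at the pair $\{a,b\}$ into parts over which the values of $|E|-|V|$ sum to $m-n\ge 1$; some part is then connected, $3$-uniform, linear, has strictly more edges than vertices, and has fewer than $m$ edges, so minimality again yields a loose $\mychi$-cycle. Hence a minimal counterexample has minimum degree at least $2$; in particular deleting any single edge of it leaves no vertex isolated.

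Next I would show the counterexample must have $m=n+1$. Suppose $m\ge n+2$ and delete any edge $e$. Since every vertex has degree at least $2$, the hypergraph $\H\setminus e$ has exactly $n$ vertices, $m-1\ge n+1$ edges, and every component of it contains an edge. If $\H\setminus e$ is connected it is a smaller counterexample; otherwise the values of $|E|-|V|$ over its components sum to $(m-1)-n\ge 1$, so some component has strictly more edges than vertices and at most $m-2<m$ edges, and minimality once more produces a loose $\mychi$-cycle. Either way we contradict the choice of $\H$, so $m=n+1$.

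It remains to rule out $m=n+1$, which I expect to be the main obstacle. The same edge-deletion argument now only forces the values of $|E|-|V|$ over the components of $\H\setminus e$ to sum to $0$; since no component can have a positive value (else minimality finishes) and none is a lone vertex (by minimum degree $2$), every component of $\H\setminus e$ has exactly as many edges as vertices, for every edge $e$. Thus a minimal counterexample is obtained by adjoining a single edge $e_0$ to a disjoint union of connected linear $3$-uniform ``blocks'' each with $|E|=|V|$ and each free of loose $\mychi$-cycles (being a sub-hypergraph of $\H$); no contradiction arises from the blocks in isolation, so the conflict has to be forced by $e_0$. The crux is a structural analysis of the $|E|=|V|$ blocks --- the Fano plane ($m=n=7$, linear, connected, loose-$\mychi$-cycle-free) being the canonical example, which is exactly why the hypothesis must read $m>n$ and not $m\ge n$ --- showing that adjoining one edge, within a block or between blocks, is always forced to create a loose $\mychi$-cycle. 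Here the precise definition of $\mychi$-cycles enters, together with the standard linear-hypergraph trick of passing to a shortest Berge cycle through a distinguished edge (which is automatically loose, since a repeated vertex among non-consecutive edges would shorten it, and consecutive edges cannot share two vertices by linearity), followed by casework on how that cycle sits inside its block and how $e_0$ attaches to it. An alternative, more hands-on attack on $m=n+1$ is to take a maximal connected Berge-acyclic sub-hypergraph $T$ of $\H$: the identity $|V(T)|=2|E(T)|+1$ together with $|V(T)|\le n$ gives $|E(T)|\le (n-1)/2$, so the set $S$ of back edges satisfies $|S|=m-|E(T)|>|E(T)|$ --- more back edges than tree edges --- and a pigeonhole over the edges and vertices of $T$ forces two shortest (hence loose) cycles closed by back edges to be entangled, their union with a connecting tree path then carrying a loose $\mychi$-cycle. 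In either approach the delicate point is certifying that the subhypergraph one extracts is genuinely loose and has exactly the prescribed length $2\ell$.
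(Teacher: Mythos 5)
Your two reduction steps (minimum degree at least $2$ in a minimal counterexample, and then $m=n+1$) are sound as far as they go, but they never touch the actual content of the theorem, and the step that does -- exhibiting the loose $\mychi$ cycle -- is left unproved. You say yourself that ruling out $m=n+1$ is ``the main obstacle'' and that ``the delicate point is certifying that the subhypergraph one extracts is genuinely loose and has exactly the prescribed length $2\ell$''; that delicate point \emph{is} the theorem, and neither of your sketches addresses it. The difficulty is not finding some Berge cycle (easy once $m>n$); it is meeting the parity constraints in the definition of $\mychi$: an even closed structure in which two edges in positions of opposite parity, at distance more than $2$, share a vertex. ``Shortest Berge cycle through a distinguished edge'' and ``two entangled back-edge cycles plus a connecting tree path'' give no control over these parities, and ``entangled'' is never made precise. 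The paper handles exactly this point by a different device: it passes to the block (intersection) graph $G$ of the hyperedges, takes an edge $(f,f')$ of a cycle of $G$ as the $\chi$-pair, and then repairs the parity of the connecting walks by detouring around an odd cycle of $G$ through a vertex of degree greater than $3$, whose existence is where $m>n$ is really used; this parity repair is also precisely where the ``loose'' relaxation (walks instead of paths, with repeated edges and vertices) is exploited. Your proposal has no substitute for this step, so the core of the argument is missing, and the preliminary reduction to $m=n+1$ does not make it any easier.

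A secondary issue: your plan for the final case is organized around the claim that the Fano plane is loose-$\mychi$-cycle-free and is ``exactly why the hypothesis must read $m>n$''. Under the paper's loose definition, which permits repeated edges and vertices in the walks, this appears to be false. For instance, the closed Berge walk $1\,\{1,2,3\}\,2\,\{2,5,7\}\,7\,\{1,4,7\}\,4\,\{3,4,5\}\,5\,\{2,5,7\}\,7\,\{1,4,7\}\,1$ has length $6$, and its first and fourth edges $\{1,2,3\}$ and $\{3,4,5\}$ intersect in the vertex $3$, with $|i-j|=3>2$ and indices of opposite parity; so the Fano plane already contains a loose $\mychi$ cycle (the repeated edges $\{2,5,7\}$ and $\{1,4,7\}$ are exactly what ``loose'' allows). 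Hence the ``canonical obstruction'' that was supposed to guide your casework at $m=n+1$ does not exist, which further undermines the unfinished last step of the proposal.
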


In the context of $\oneintermajthree$ where $m>n$, using the above theorem, we show that the corresponding hypergraph $H_C$ contains a loose $\mychi$ cycle. Using the framework of \cref{introlemma-coloring-to-avoid}, where we show (\cref{lem:chi-cycle-not-maj-colorable}) there exists an edge-coloring of $\mychi$ which is not $\maj$-coloring. In addition, we note that although the cycle is not of fixed size, we can find it in $H_C$ in polynomial time. This gives us the following theorem.
\begin{restatable}{theorem}{thmimprovedmonavoid}(\textbf{Main Theorem})
\label{thm:improvedmonavoid}
For $m > n$, $\monncavoid$ can be solved in deterministic polynomial time. 
\end{restatable}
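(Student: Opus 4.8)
The plan is to combine the reduction machinery already developed for \cref{thm:monavoid} with the new extremal bound \cref{thm-loose-chi-bound}, being careful that we now have only the slack $m>n$ instead of a quadratic stretch. First I would re-run the reduction chain $\monncavoid \to \majthreeavoid \to \oneintermajthree$, checking that each step preserves the inequality $\#\text{outputs} > \#\text{inputs}$: a constant output is disposed of at once by fixing its coordinate to the complementary bit; a single-literal output, or an $\land_2,\lor_2,\land_3,\lor_3$ output, is peeled off by restricting one, two, or three input variables and deleting that output, which removes one output and at least one variable, so $m>n$ survives (with extra slack in the $\land_3,\lor_3$ cases); and the non-symmetric monotone $3$-juntas together with the ``at most one common input'' cleanup are handled exactly as in the proof of \cref{thm:monavoid}. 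It then suffices to solve $\oneintermajthree$ with $m>n$, for which $\H_C$ is $3$-uniform and linear, has $m$ edges on at most $n$ vertices, and carries the constant edge-label $\maj_3$ under $\phi$.

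Since $\H_C$ has more edges than vertices, some connected component of it (again $3$-uniform and linear) has strictly more edges than vertices, so \cref{thm-loose-chi-bound} yields a loose $\mychi$-cycle inside $\H_C$. The point requiring care here is algorithmic rather than existential: a loose $\mychi$-cycle has size growing with $n$, so I would make the proof of \cref{thm-loose-chi-bound} constructive (equivalently, carry out a polynomial-time search guided by it) in order to actually produce such a cycle together with its edge set, which has size at most $m$.

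Finally, \cref{lem:chi-cycle-not-maj-colorable} supplies an \emph{explicit} edge-colouring $\chi$ of the loose $\mychi$-cycle that is not a $\maj$-colouring, hence not a $\phi$-colouring of this sub-hypergraph, every one of whose edges is labelled $\maj_3$. Applying \cref{introlemma-coloring-to-avoid}, I would output the string $y$ whose coordinates indexed by the edges of the cycle agree with $\chi$ and which is $0$ on all other coordinates; if some $x$ satisfied $C(x)=y$, then the restriction of $x$ to the vertex set of the cycle would be a vertex-colouring inducing $\chi$ via $\phi$, contradicting the choice of $\chi$, so $y\notin\range(C)$. The one subtlety is that the proof of \cref{introlemma-coloring-to-avoid} must not enumerate all $2^{|E(\H')|}$ colourings when $\H'$ is unbounded --- which is precisely why we rely on the explicit colouring $\chi$ rather than a bare existence statement; everything above is then computable in deterministic polynomial time.

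I expect the main obstacle to be twofold. The first is the bookkeeping needed to confirm that the reduction chain inherited from \cref{thm:monavoid} genuinely survives at the threshold $m>n$ --- that neither the elimination of the non-symmetric monotone $3$-juntas nor the ``one common input'' cleanup ever consumes more than the available slack of a single output. The second, and more essential, is promoting \cref{thm-loose-chi-bound} from a purely extremal statement to a polynomial-time routine that locates an a-priori unbounded-size loose $\mychi$-cycle, and verifying that \cref{introlemma-coloring-to-avoid} still yields a polynomial-time avoider when the witnessing sub-hypergraph is allowed to have size $\Theta(n)$ rather than $O(1)$.
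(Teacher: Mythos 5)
Your proposal is correct and follows essentially the same route as the paper: reduce $\monncavoid$ through $\majthreeavoid$ to $\oneintermajthree$ (the paper does the connected-piece selection via its cluster lemma, which is your connected-component step), invoke \cref{thm-loose-chi-bound} to obtain a loose $\mychi$-cycle, note that the proof of that theorem is constructive enough to find the cycle in polynomial time, and then use the explicit non-$\maj$-colouring of the cycle together with the framework of \cref{introlemma-coloring-to-avoid} to output a string outside the range. The two subtleties you flag (preservation of $m>n$ through the reductions, and the unbounded size of the witnessing sub-hypergraph requiring an explicit colouring and a constructive search rather than exhaustive enumeration) are exactly the points the paper itself addresses.
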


 Thus, while the applicability of \cref{introlemma-coloring-to-avoid} seems to impose constraints such as $k$-uniformity and one-intersection to the cases of $\avoid$ that they can be used to solve, the above theorem indicates that along with other combinatorial reductions to such special cases, it can still lead to useful bounds for the more general problem.

\section{Preliminaries}
\label{sec:prelims}

We study the range avoidance problem for restricted circuit classes. $\mathcal{C}$-$\avoid$ is the following problem: Given a multi-output circuit $C:\{0,1\}^n \rightarrow \{0,1\}^m$ such that $m>n$  where each output function can be computed by a circuit in class $\mathcal{C}$, find a $y\in \{0,1\}^m$ which is outside the range of $C$. 

\paragraph{Hypergraphs:} We collect the preliminaries from the theory of hypergraphs that we use in the paper. We will work with hypergraphs $G(V,E)$ where $E \subseteq 2^{V}$. A hypergraph is linear if two edges intersect in at most one vertex - $\forall e_1, e_2 \in E$, $|e_1 \cap e_2| \le 1$. A hypergraph is said to be $k$-uniform if every edge has exactly $k$ elements from $V$ - $\forall e \in E, |e| = k$. We will be working with $k$-uniform linear hypergraphs, and in particular with $k=3$. We will define some of the hypergraphs that are used in the paper.

A $k \times k$ grid in a hypergraph is a set of $k^2$ vertices $\{v_{11}, v_{12}, \ldots v_{kk}\}$ such that there are edges $r_1, r_2, \ldots r_k, c_1, c_2, \ldots c_k \in E$, corresponding to the vertices in the rows and columns respectively, when the vertices are arranged in the row-major order (See \cref{fig:grid}(a)). A particular special hypergraph (for $k=3$) is called a {\em wicket} is the $3 \times 3$ grid with one edge removed (See \cref{fig:grid}(b)).

%\begin{appsection}{Grid, Wicket and Fano Plane}{app:figures}
\begin{figure}[h!]
    \centering
    \begin{tikzpicture}[scale=0.75]
%\draw [step=1cm, gray, very thin] (0,0) grid (12,12);
\draw[-] (1,1) -- (1,4);
\draw[-] (2,1) -- (2,4);
\draw[-] (3,1) -- (3,4);
\draw[-] (4,1) -- (4,4);

\draw[-] (1,1) -- (4,1);
\draw[-] (1,2) -- (4,2);
\draw[-] (1,3) -- (4,3);
\draw[-] (1,4) -- (4,4);

\draw[fill=gray] (1,1) circle (0.1) ;
\draw[fill=gray]  (1,2) circle (0.1) ;
\draw[fill=gray]  (1,3) circle (0.1) ;
\draw[fill=gray]  (1,4) circle (0.1) ;
\draw[fill=gray] (2,1) circle (0.1) ;
\draw[fill=gray]  (2,2) circle (0.1) ;
\draw[fill=gray]  (2,3) circle (0.1) ;
\draw[fill=gray]  (2,4) circle (0.1) ;
\draw[fill=gray] (3,1) circle (0.1) ;
\draw[fill=gray]  (3,2) circle (0.1) ;
\draw[fill=gray]  (3,3) circle (0.1) ;
\draw[fill=gray]  (3,4) circle (0.1) ;
\draw[fill=gray] (4,1) circle (0.1) ;
\draw[fill=gray]  (4,2) circle (0.1) ;
\draw[fill=gray]  (4,3) circle (0.1) ;
\draw[fill=gray]  (4,4) circle (0.1) ;

\node[] at (2.5,0) {(a)};
\end{tikzpicture}
\hspace{2cm}
\begin{tikzpicture}[scale=1]
%\draw [step=1cm, gray, very thin] (0,0) grid (12,12);
\draw[-] (1,1) -- (1,3);
\draw[-] (3,1) -- (3,3);
\draw[-] (1,1) -- (3,1);
\draw[-] (1,2) -- (3,2);
\draw[-] (1,3) -- (3,3);

\draw[fill=gray] (1,1) circle (0.07) ;
\draw[fill=gray]  (1,2) circle (0.07) ;
\draw[fill=gray]  (1,3) circle (0.07) ;
\draw[fill=gray] (2,1) circle (0.07) ;
\draw[fill=gray]  (2,2) circle (0.07) ;
\draw[fill=gray]  (2,3) circle (0.07) ;
\draw[fill=gray] (3,1) circle (0.07) ;
\draw[fill=gray]  (3,2) circle (0.07) ;
\draw[fill=gray]  (3,3) circle (0.07) ;

\node[] at (2,0.32) {(b)};
\end{tikzpicture}
\hspace{2cm}
\begin{tikzpicture}[scale=0.35]
% \draw [step=1cm, lightgray, very thin] (0,0) grid (10,10);
\draw[-] (5,9) -- (1.5,3) -- (8.5,3) --cycle;
\draw[-] (5,9) -- (5,3);
\draw[-] (3.25, 6) -- (8.5,3);
\draw[-] (6.75, 6) -- (1.5,3);

\draw[] (5,5) circle (2);
\draw[fill=gray] (5,9) circle (0.18) ;
\draw[fill=gray] (3.25,6) circle (0.18) ;
\draw[fill=gray] (6.75,6) circle (0.18) ;
\draw[fill=gray] (5,5) circle (0.18) ;
\draw[fill=gray] (5,3) circle (0.18) ;
\draw[fill=gray] (1.5,3) circle (0.18) ;
\draw[fill=gray] (8.5,3) circle (0.18) ;

\node[] at (5,1.2) {(c)};
\end{tikzpicture}
    \caption{(a) $4\times 4$ grid (b) A wicket (c) Fano plane}
    \label{fig:grid}
\end{figure}
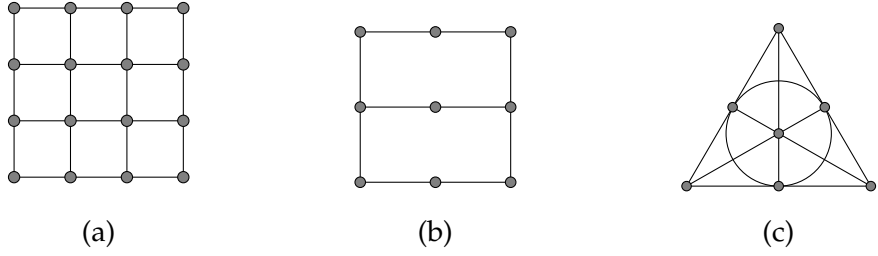
%\end{appsection}

A Berge path is defined as $v_1e_1v_2\ldots v_ke_kv_{k+1}$ where each edge $e_i$ contains vertices $v_i,v_{i+1}$. A Berge path is said to be even if $k$ is even and odd otherwise. We say a hypergraph is connected if there exists a Berge path between any two pairs of vertices.

\paragraph{Turan-type Problems in Hypergraphs:} One of the classical extremal Turan-type problems introduced in the context of hypergraphs by \cite{BES73} is the following: {\em for a fixed set of $k$-uniform hypergraphs $\H$, what is the maximum number of edges that a $k$-uniform linear hypergraph can have, if it does not contain a subgraph isomorphic to any of the hypergraphs in the collection $\H$ of hypergraphs.} This number is denoted by $\mathsf{ex}_k(n,\H)$. The Turan density of $\H$ is denoted by $\pi(\H) = \lim_{n \to \infty} \left( {n \choose k}^{-1}\mathsf{ex}_k(n,\H) \right)$. It is known that a $k$-uniform hypergraph $\H$ has $\pi(\H) = 0$ (also called {\em degenerate}) if and only if it is $k$-partite.

The special $3$-uniform hypergraph called \textit{wicket}, denoted by $W$ (see figure~\ref{fig:grid}(b)), forms an important step in our argument. We will need the following result about 3-uniform linear hypergraphs due to Solymosi~\cite{Sol23}.
\begin{proposition}[\cite{Sol23}]
\label{thm-wicket}
    If a $3$-uniform linear hypergraph does not contain a wicket, then the number of hyperedges is bounded by $o(n^2)$.
\end{proposition}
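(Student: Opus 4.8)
The bound is phrased for general $n$, so I would prove the quantitative contrapositive: for each fixed $\epsilon>0$ there is an $n_0$ such that every $3$-uniform linear hypergraph $H$ on $n\ge n_0$ vertices with at least $\epsilon n^2$ edges contains a wicket. The first thing to try is the classical Ruzsa--Szemer\'edi / $(6,3)$ machinery: form the auxiliary graph $\Gamma$ on $V(H)$ by turning each hyperedge $\{x,y,z\}$ into the triangle $xyz$; linearity makes the $m=e(H)$ ``native'' triangles pairwise edge-disjoint, so $\Gamma$ needs $\ge m=\Omega(n^2)$ edge deletions to become triangle-free, and the triangle removal lemma yields $\Omega(n^3)$ triangles in $\Gamma$, hence (the native ones being only $m=o(n^3)$) $\Omega(n^3)$ \emph{loose triangles} in $H$ --- three hyperedges pairwise meeting in a single, and pairwise distinct, vertex. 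This shows the density is ``robust'', but it does not by itself produce a wicket: the five edges of a wicket have a bipartite intersection pattern (rows versus columns), so no three of them are pairwise intersecting, i.e.\ a wicket contains \emph{no} loose triangle; and, more fundamentally, there are linear $3$-uniform hypergraphs with $n^{2-o(1)}$ edges that are wicket-free (Behrend-type constructions), so any proof must be ``soft'' and the $o(1)$ in the conclusion must be inherited from the regularity method.

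The combinatorial heart of the argument is the following reduction: it suffices to find two \emph{disjoint} hyperedges $c_1=\{a_1,a_2,a_3\}$ and $c_2=\{b_1,b_2,b_3\}$ of $H$ such that every one of the nine pairs $\{a_i,b_j\}$ lies in some hyperedge. Given this, linearity supplies nine distinct hyperedges $r_{ij}\supseteq\{a_i,b_j\}$; writing $z_{ij}$ for the third vertex of $r_{ij}$, two of these that share an $a_i$ (or a $b_j$) cannot share a further vertex, so $z_{ij}\ne z_{ik}$ and $z_{ij}\ne z_{kj}$, which leaves only ``non-attacking'' coincidences $z_{ij}=z_{k\ell}$; a short analysis of the resulting equivalence relation on the $3\times 3$ array (whose classes are partial permutation matrices, so at worst a Latin square of order three, which has a transversal) shows that at least one of the six perfect matchings $\{a_1 b_{\pi(1)},a_2 b_{\pi(2)},a_3 b_{\pi(3)}\}$ selects three hyperedges with pairwise distinct third vertices, and those three together with $c_1$ and $c_2$ form a wicket. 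Extra hyperedges spanned by the nine vertices do no harm, since we only need a sub-hypergraph isomorphic to a wicket.

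So the whole problem is reduced to finding the pair $(c_1,c_2)$, and this is the step I expect to be the main obstacle, precisely because --- by the Behrend-type obstruction above --- it cannot be done by a first-moment or Cauchy--Schwarz count. The plan is to pass to the shadow graph $\partial H$ (vertex set $V(H)$, an edge for every pair covered by a hyperedge), which has $3m=\Omega(n^2)$ edges, apply Szemer\'edi's regularity lemma, and use a pigeonhole over the bounded number of parts to restrict to a bounded-size cluster of parts on which $\partial H$ is dense and regular and which still carries a positive fraction of the hyperedges of $H$. On such a cluster a typical hyperedge $c_1$ has the common $\partial H$-neighborhood of its three vertices of size $\Omega(n)$, and by the counting lemma this neighborhood contains a further hyperedge $c_2$ disjoint from $c_1$; then all nine pairs $\{a_i,b_j\}$ are covered and the reduction of the previous paragraph finishes the proof. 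A variant that avoids the cluster bookkeeping is to run dependent random choice on $\partial H$: it yields a linear-size vertex set, all of whose triples have $\Omega(n)$ common neighbors in $\partial H$ and which still spans $\Omega(n^2)$ hyperedges of $H$, and the same extraction of $(c_1,c_2)$ goes through.
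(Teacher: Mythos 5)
This statement is not proved in the paper at all: it is imported verbatim from Solymosi \cite{Sol23} (the paper only uses it as a black box in \cref{cor:majcoloring}), so there is no in-paper argument to compare against; the question is whether your sketch stands on its own. Your middle reduction is correct and is the nicest part of the proposal: if two disjoint hyperedges $c_1=\{a_1,a_2,a_3\}$, $c_2=\{b_1,b_2,b_3\}$ have all nine cross pairs $\{a_i,b_j\}$ covered, then linearity makes the nine covering edges distinct, forces their third vertices off $c_1\cup c_2$, and restricts coincidences among the $z_{ij}$ to non-attacking cells; since each coincidence class (a partial permutation matrix) can ruin at most one of the six perfect matchings, some matching is ``rainbow'' and yields a wicket. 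That lemma is fine.

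The genuine gap is the step you yourself flag: producing the pair $(c_1,c_2)$, and the regularity/DRC sketch offered for it does not work as stated. First, regularity of the shadow graph $\partial H$ gives no control over the common neighborhood of the three vertices of an \emph{actual hyperedge}: those triples are exactly the atypical ``native'' triangles, and a positive fraction of them can have tiny common neighborhoods even inside a dense regular cluster; the dependent-random-choice variant cleans up bad triples by deleting vertices, but nothing in the standard argument guarantees that the surviving set still spans $\Omega(n^2)$ hyperedges of $H$ (it only controls the vertex count and the number of bad triples). Second, and more fundamentally, even granted a common neighborhood $N$ of $c_1$ with $|N|=\Omega(n)$, the claim that ``by the counting lemma this neighborhood contains a further hyperedge $c_2$'' is unsupported: the counting lemma produces subgraphs of $\partial H$ (i.e.\ covered pairs and loose triangles), not hyperedges of $H$ inside a prescribed vertex set, and a linear triple system with $\epsilon n^2$ edges can easily have linear-sized vertex sets spanning no hyperedge at all (e.g.\ a part of a tripartite system), so $N$ may be hyperedge-free. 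Note also that your target configuration is strictly stronger than a wicket --- a wicket only witnesses three of the nine cross pairs --- so the assertion that every dense linear system contains two disjoint edges with all nine pairs covered is at least as hard as the theorem being proved; this is precisely where the real work in \cite{Sol23} lies, and as written your proposal replaces it with a plan rather than a proof.
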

A slightly weaker result was proven by \cite{GS22} where they showed that $\mathsf{ex}_3(n,{W}) \le \frac{(1-c)n^2}{6}$. Another standard hypergraph that we will be using is the {\em Fano plane}. A Fano plane $F$ is a $3$-uniform linear hypergarph which is isomorphic to the hypergraph $H(V,E)$ with vertex set $V=[7]$ and edge set $E=\{\{1,2,3\},\{3,4,5\},\{1,5,6\},\{3,6,7\},\{2,5,7\},\{1,4,7\},\{2,4,6\}\}$. The following result shows a bound on $\mathsf{ex}_3(n,{F})$.
\begin{proposition}[\cite{KS04}]
If a $3$-uniform linear hypergraph contains more than $ \binom{n}{3}  - \binom{\lfloor n/2 \rfloor}{3} -  {\lceil n/2 \rceil \choose 3}$ then it must contain a Fano plane as a sub-hypergraph.  
\end{proposition}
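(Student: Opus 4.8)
The stated bound is tight, so the plan is to supply the matching construction and then argue the matching upper bound, which is the substantive part. For the construction I would fix a partition $V = A \sqcup B$ of the $n$ vertices with $|A| = \lceil n/2 \rceil$, $|B| = \lfloor n/2 \rfloor$, and take $H_0$ to be the $3$-graph whose edges are precisely the triples that are neither contained in $A$ nor in $B$; this has $e(H_0) = \binom n3 - \binom{|A|}3 - \binom{|B|}3$, matching the threshold. To certify that $H_0$ is Fano-free I would colour each vertex by its side of the partition: a copy of the Fano plane inside $H_0$ would then give a $2$-colouring of the seven points in which every line meets both colours, contradicting the classical fact that the Fano plane is the smallest $3$-uniform hypergraph without Property~B. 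This gives $\mathsf{ex}_3(n,F) \ge e(H_0)$; the remaining task --- equivalently, the statement as phrased --- is to show that a Fano-free $3$-graph with more than $e(H_0)$ edges cannot exist.

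For the upper bound I would run the stability method in three stages. \emph{Stage 1 (density).} First pin down $\pi(F) = 3/4$: the construction gives $\pi(F) \ge 3/4$, and for the matching upper bound I would invoke the de~Caen--F\"uredi link-counting argument, which bounds $\sum_v e(L_v)$ over the link graphs $L_v$ of a Fano-free $3$-graph and yields $e(H) \le (\tfrac34 + o(1))\binom n3$. \emph{Stage 2 (stability).} Upgrade this to: every Fano-free $3$-graph with $(\tfrac34 - o(1))\binom n3$ edges admits a near-balanced partition $V = A' \sqcup B'$ such that all but $o(n^3)$ of its edges cross between $A'$ and $B'$; that is, it is $o(n^3)$-close in edit distance to a bipartite $H_0$. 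I would obtain this via supersaturation (using the hypergraph removal lemma, a Fano-free $3$-graph contains not merely no copy but few ``almost-copies'') followed by a cleaning/symmetrisation argument that extracts the bipartition.

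\emph{Stage 3 (exact bound).} Finally, take a Fano-free $H$ with $e(H) \ge e(H_0)$, fix the partition $V = A' \sqcup B'$ from Stage~2, and compare: let $t$ be the number of edges of $H$ lying inside $A'$ or inside $B'$ and $s$ the number of crossing triples absent from $H$; the goal is $t \le s$, with equality only when $H$ embeds into a balanced bipartite construction. The crux is a local embedding lemma: whenever a ``bad'' edge, say $\{a,b,c\} \subseteq A'$, is present, one can complete a copy of the Fano plane using $\{a,b,c\}$ together with a bounded number of crossing edges on $\{a,b,c\}$ and a few vertices of $B'$ --- unless enough of the required crossing triples near $\{a,b,c\}$ are missing, which I would charge against $s$. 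Iterating this charge over all bad edges, using balance of the parts and that only $o(n^3)$ edges were misclassified in Stage~2, should give $t < s$ unless $H$ is already a sub-hypergraph of a balanced bipartite $H_0$, hence $e(H) < e(H_0)$ in that case; combined with Stage~2 this rules out $e(H) > e(H_0)$ for Fano-free $H$.

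I expect Stage~3 (together with a clean form of Stage~2) to be the main obstacle. The Fano plane embeds rigidly --- its automorphism group has order $168$ and every pair of its points lies on a line --- so the local embedding lemma must control \emph{all} the ways a copy could be blocked around a bad edge, and the bookkeeping that charges missing crossing triples against present bad edges with exactly the right constants is delicate; this is the heart of the Keevash--Sudakov argument establishing S\'os's conjecture for large $n$. An alternative would be to use flag-algebra or Lagrangian machinery for Stages~1--2 while keeping a similar local-embedding finish for Stage~3.
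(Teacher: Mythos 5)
The paper does not prove this proposition at all: it is imported as a black box from \cite{KS04}, so there is no in-paper argument to compare against. Your plan is essentially the proof from that source (Keevash--Sudakov, and independently F\"uredi--Simonovits): the balanced two-part construction whose Fano-freeness follows from the Fano plane's lack of Property B, the de Caen--F\"uredi link-counting bound giving $\pi(F)=3/4$, and the stability-plus-exact-embedding finish; you outline the same route correctly, with the genuinely hard Stages 2--3 deferred to that known machinery rather than carried out, which is consistent with how the paper itself uses the result. One remark on the statement rather than your argument: as printed, the hypothesis says ``$3$-uniform \emph{linear} hypergraph,'' which would make the claim vacuous, since a linear $3$-graph has at most $\frac{1}{3}\binom{n}{2}$ edges, far below the threshold $\binom{n}{3}-\binom{\lfloor n/2\rfloor}{3}-\binom{\lceil n/2\rceil}{3}$; the intended and cited theorem is for arbitrary $3$-uniform hypergraphs (with that many \emph{edges}), which is the version your sketch addresses.
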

Coming to $k$-uniform hypergraphs, we use the following result about $k \times k$ grid $G_k$, which exhibits a bound for $\mathsf{ex}_k(n,\{G_k\})$.
\begin{proposition}[\cite{FR13}]\label{thm-grid-turan}
    Let $H$ be a $k$-uniform linear hypergraph with $m>\frac{n(n-1)}{k(k-1)}$ edges. Then $H$ contains a $k \times k$ grid $G_k$ as a sub-hypergraph. 
\end{proposition}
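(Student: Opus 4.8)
The plan is to find the grid by separating it into its $k$ pairwise disjoint ``row'' edges and its $k$ pairwise disjoint ``column'' edges, using two facts: $\H$ is linear, so every vertex has degree at most $\frac{n-1}{k-1}$ and $\H$ is locally sparse; and $G_k$ is $k$-partite, since colouring $v_{ij}$ by $(i+j)\bmod k$ is a proper $k$-colouring, so $G_k$ is degenerate and will appear in any host that is both dense and well spread out. First I would pass to a maximal partial matching $R_1,\dots,R_t$ of $\H$; each matched edge destroys at most $k\cdot\frac{n-1}{k-1}=O(n)$ others, so $t=\Omega(m/n)$, which is $\Omega(n)$ (in particular $\gg k$) in the density regime of interest. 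Up to lower-order corrections this exhibits $\H$ inside a near-maximum partial Steiner system, which is exactly the regime in which the degeneracy of $G_k$ stops being vacuous.

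Next I would take a uniformly random $k$-colouring of the vertices into $V_1,\dots,V_k$. A constant fraction $\tfrac{k!}{k^k}$ of the edges become rainbow, so we retain a $k$-partite $k$-uniform hypergraph $\H'$ with $\Omega(n^2)$ edges, and it suffices to find $G_k$ in $\H'$ under its natural Latin-square part assignment $v_{ij}\in V_{(i+j)\bmod k}$. This is a degenerate Turan problem for $k$-partite hosts, which I would resolve in the standard way: supersaturation to produce many candidate partial rows, then a sunflower / dependent-random-choice argument to extract $k$ pairwise disjoint rows together with $k$ pairwise disjoint columns, each column meeting each row in exactly one vertex — here linearity already forces the intersections to be at most one, so only ``$\ge 1$'' has to be arranged. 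Undoing the colouring returns a copy of $G_k$ in $\H$.

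The main obstacle is quantitative rather than conceptual. The $k$-partiteness of $G_k$ by itself gives only $\mathsf{ex}_k(n,G_k)=o(n^k)$, which is useless for linear $\H$ (where $m=O(n^2)$ once $k\ge 3$), so the entire force of the argument lives in the linearity-driven reduction to the near-Steiner regime and in tracking constants through the random colouring and the sunflower step so that the threshold lands at the stated $m>\frac{n(n-1)}{k(k-1)}$. I would also flag that a $k$-uniform linear hypergraph has at most $\frac{n(n-1)}{k(k-1)}$ edges to begin with — the $\binom{k}{2}$ pairs contained in distinct edges are disjoint — so I read the hypothesis in the robust form $m=\Omega(n^2)$, with the exact cut-off above being a matter of constant-chasing in the two reductions; that bookkeeping is the part I expect to be fiddly.
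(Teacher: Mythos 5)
The paper does not actually prove this proposition --- it is imported verbatim from \cite{FR13} and used as a black box --- so there is no internal argument to compare yours with; judged on its own terms, your proposal has genuine gaps. One thing you get right, and it is worth saying clearly: by pair counting, a $k$-uniform \emph{linear} hypergraph satisfies $m\binom{k}{2}\le\binom{n}{2}$, i.e.\ $m\le\frac{n(n-1)}{k(k-1)}$, so the hypothesis ``$m>\frac{n(n-1)}{k(k-1)}$'' can never hold and the proposition as literally stated is vacuously true. Flagging that is a legitimate observation about how the statement is phrased (and about how it is later invoked for $\oneinter_k$-$\avoid$ with $m>n^2$). But your fix --- reading the hypothesis ``in the robust form $m=\Omega(n^2)$'' --- replaces it with a different and much stronger claim, and your argument never pins down any constant. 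Since the intended threshold sits at (essentially) the maximum possible density of a linear hypergraph, any reduction that loses a constant factor is fatal: your random $k$-partition step retains only about a $\frac{k!}{k^k}\approx e^{-k}$ fraction of the edges, after which you would need a grid-extremal bound for $k$-partite linear hypergraphs that is \emph{stronger} than the statement you are trying to prove. The sketch is therefore quantitatively self-defeating, not merely ``fiddly bookkeeping.''

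The second gap is that the actual content of the theorem is never argued. Extracting $k$ pairwise disjoint ``rows'' and $k$ pairwise disjoint ``columns'' such that every row meets every column (linearity only caps intersections at one; arranging all $k^2$ incidences simultaneously is the whole difficulty) is delegated to the phrases ``supersaturation'' and ``sunflower / dependent-random-choice'' with no construction, no parameters, and no indication of how the near-Steiner structure is used; the maximal-matching step at the start is never referred to again. This is precisely where the work of \cite{FR13} lies, so what you have is a plan naming standard tools rather than a proof. If you want a statement you can actually establish and that suffices for the paper's use, state the non-vacuous form from \cite{FR13} (a near-extremal density threshold for linear hypergraphs) and either cite it, as the paper does, or reproduce its argument; the route through random partitioning plus generic degenerate-Tur\'an machinery cannot reach a threshold that is within a constant factor of the absolute maximum.
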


\paragraph{Cycles in Hypergraphs:} Various notions of cycles have been explored in the hypergraph setting. \cite{CCGJ18} consider the notion of linear cycles of length $\ell$ (denoted by $C_\ell$) which is defined as set of  $\ell$ edges such that each pair of adjacent edges $e_i,e_{i+1}$ (modulo $\ell$) intersect in exactly one vertex and each pair of non-adjacent edges are disjoint. \cite{CCGJ18} show that $\mathsf{ex}_k(n,C_{2\ell})\leq c_{k,\ell} n^{1+\frac{1}{\ell}}$ and $\mathsf{ex}_k(n,C_{2\ell +1})\leq c'_{k,\ell} n^{1+\frac{1}{\ell}}$. Another notion of cycles which is well-studied is that of \textit{Berge cycle of length $\ell$}, denoted by ${\cal B}_{\ell}$ which consists of $\ell$ distinct edges such that each hyperedge $e_i$ contains vertices $v_i,v_{i+1}$ where $1\le i <\ell$ and the edge $e_{\ell}$ contains vertices $v_1,v_\ell$ where $v_1,\ldots v_\ell$ are distinct vertices. \cite{GL12} show that $\mathsf{ex}_k(n,{\cal B}_{2\ell})=d_{k,\ell}n^{1+\frac{1}{\ell}}$ and $\mathsf{ex}_k(n,{\cal B}_{2\ell+1})=d'_{k,\ell}n^{1+\frac{1}{\ell}}$. 

\paragraph{$\mychi$-cycles and loose $\mychi$-cycles:} We define a new notion of cycles called $\mychi$-cycles in $3$-uniform linear hypergraphs. We define $\mychi$ to be a relaxation of ${\cal B}_{2\ell}$ where $\exists e_i,e_j$ such that $|e_i \cap e_j|=1$, $i+1 \equiv j \mod 2$ and $|i-j|>2$. We call the $(e_i,e_j)$-pair a $\chi$-structure in the cycle. Alternatively, we can view $\mychi$ using the lens of the Berge path. $\mychi$ can be thought as ${\cal B}_{2\ell_1} \cup {\cal B}_{2\ell_2}\cup \chi$ where ${\cal B}_{2\ell_1} = u_1e_1u_2\ldots u_{2\ell_1}e_{2\ell_1}u_{2\ell_1+1}$, ${\cal B}_{2\ell_2} = v_1e'_1v_2\ldots v_{2\ell_2}e'_{2\ell_2}v_{2\ell_2+1}$, $\chi=(e,e')$ where $e,e' \not \in \{e_1, \ldots e_{2\ell_1}, e'_1,\ldots e'_{2\ell_2}\}$ and $e=\{u_1,2u_{\ell_1},w\}, e'=\{v_1,2v_{\ell_2},w\}$. If we relax the Berge paths to Berge walks (denoted by ${\cal P_{\ell}}$) by allowing repetition of edges and vertices, we get a loose $\mychi$ cycle. In section \ref{sec-mon-nc}, we shall prove extremal bounds on this structure and use it to solve range avoidance for restricted circuit classes.

\begin{figure}
    \centering
      \begin{tikzpicture}[scale=1]

    %grid
      % \draw[lightgray,  very thin] (-5,0) grid (10,10);
                 % \draw[rounded corners] (-0.2,-0.2) rectangle (0.2, 4.2) {};
                 % \draw[rounded corners] (3.8,-0.2) rectangle (4.2, 4.2) {};
                 \draw[rounded corners, draw=red, fill=red!5!white, rotate around={45:(2,2)}] (1.8, 2.2) rectangle (2.2, -1) {};
                 \draw[rounded corners, draw=red, fill=red!20!white, opacity=0.3, rotate around={-45:(2+2,2)}] (1.8+2, 2.2) rectangle (2.2+2, -1) {};
                 \draw[rounded corners, draw=red, rotate around={-45:(2+2,2)}] (1.8+2, 2.2) rectangle (2.2+2, -1) {};
                  % \draw[rounded corners,rotate around={-45:(2,2)}] (1.8, 2.2) rectangle (2.2, -1) {};
                  % \draw[rounded corners, rotate around={-45:(2,2)}] (1.8,1.8) rectangle (2.2, 5) {};
                  % \draw[rounded corners, rotate around={45:(2,2)}] (1.8,1.8) rectangle (2.2, 5) {};

                \draw[rounded corners, rotate around={30: (2,2)}] (-0.2, 2.2) rectangle (2.2, 1.8) {};
                \draw[rounded corners, rotate around={-30: (2,0)}] (-0.2, 2.2-2) rectangle (2.2, 1.8-2) {};

                \draw[rounded corners] (-0.2+4, 2.2) rectangle (2.2+4, 1.8) {};
                \draw[rounded corners] (-0.2+4, 2.2-2) rectangle (2.2+4, 1.8-2) {};

                % \draw[rounded corners, draw=blue] (-0.2, 2.2) rectangle (0.2, -0.2) {};
                
                \draw[rounded corners,rotate around={60:(6,2)}] (-0.2+6, 2.2) rectangle (0.2+6, -0.2) {};
                \draw[rounded corners,rotate around={-60:(6,0)}] (-0.2+6, 2.2) rectangle (0.2+6, -0.2) {};

                    % \draw[fill=red] (0,0) circle (0.1) ;
                    % \draw[fill=gray] (0,2) circle (0.1) ;
                    % \draw[fill=gray] (0,2) circle (0.1) ;
                    \draw[fill=gray] (2,2) circle (0.1) ;
                    \draw[fill=gray] (0.3,1) circle (0.1) ;
                    \draw[fill=gray] (3,1) circle (0.1) ;
                    \draw[fill=gray] (1+0.25,2-0.45) circle (0.1) ;
                    \draw[fill=gray] (1+0.25,0+0.45) circle (0.1) ;
                    
                    \draw[fill=gray] (2,0) circle (0.1) ;
                    % \draw[fill=gray] (4,2) circle (0.1) ;

                    \draw[fill=gray] (4,2) circle (0.1) ;
                    \draw[fill=gray] (5,2) circle (0.1) ;
                    \draw[fill=gray] (6,2) circle (0.1) ;
                    \draw[fill=gray] (7.7,1) circle (0.1) ;
                    \draw[fill=gray] (6,0) circle (0.1) ;
                    \draw[fill=gray] (5,0) circle (0.1) ;\draw[fill=gray] (4,0) circle (0.1) ;
                    \draw[fill=gray] (1+0.9+5,2-0.5) circle (0.1) ;
                    \draw[fill=gray] (1+0.9+5,0+0.5) circle (0.1) ;
                    
                    % \node at (2.5, -1) {Loose $\chi$-cycle};
            \end{tikzpicture}
    \caption{A $\chi_8$ cycle}
    \label{fig:chi-cycle}
\end{figure}

\begin{appsection}{Special cases of $\mon$-$\avoid$}{sec-special-case}
We now describe some special cases of the monotone range avoidance problem that are simpler than the general case of $\avoid$. 
We consider the depth $1$ sub-circuit obtained by viewing circuits as a composition of two circuits. To denote this, for the rest of this discussion, we assume $D$ is a monotone circuit of depth $d$. For $j \le d$, define $D_j$ to be the sub-circuit of $C$ obtained by removing all the gates and variables below depth $j$, and $\widehat{D_j}$ be the sub-circuit of $C$ obtained by deleting all the gates above depth $j$. Let $m_j$ denote the number of gates at depth $j$ for circuit $C$. Hence $D_j : \{0,1\}^{m_j} \to \{0,1\}^m$ and $\widehat{D_j}: \{0,1\}^n \to \{0,1\}^{m_j}$. By definition, $D = D_j \circ \widehat{D_j}$. We show below that when $j \le 2$, the problem is easier under some constraints for $m_1$ and $m_2$ respectively.
\begin{proposition}
\label{prop:special-case1}
If $m > m_1$ then there is a deterministic polynomial time algorithm that finds a string outside the range of the circuit. 
\end{proposition}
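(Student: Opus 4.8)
The plan is to reduce the instance to one of strictly smaller depth and recurse. The key observation is that, writing $D = D_1 \circ \widehat{D_1}$, we get $\range(D) = D_1\big(\range(\widehat{D_1})\big) \subseteq \range(D_1)$, so any $y \in \{0,1\}^m \setminus \range(D_1)$ already lies outside $\range(D)$; and such a $y$ exists because $D_1 : \{0,1\}^{m_1} \to \{0,1\}^m$ has $m > m_1$, hence $|\range(D_1)| \le 2^{m_1} < 2^m$. Thus it suffices to find, in deterministic polynomial time, a string outside the range of $D_1$, which is again a monotone circuit, now of depth $d-1$, with $m_1$ inputs and $m > m_1$ outputs.

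I would turn this into an induction on depth, with the statement that monotone $\avoid$ is solvable in deterministic polynomial time for circuits of depth at most $d$ whose output count exceeds the input count. First I would normalize $D$ by inserting trivial pass-through gates so that no input variable feeds a gate of depth $\ge 2$; this preserves monotonicity, the range, and the depth up to an additive constant, and changes $m_1$ only polynomially. The inductive step is the reduction of the previous paragraph: peel off the bottom gate layer and recurse on $D_1$. Each peeling is a polynomial-time circuit transformation that strictly lowers the depth, so the recursion terminates after at most $d$ steps, and the string it returns, padded back through the stripped (hence irrelevant) layers, is outside $\range(D)$.

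The base case is a monotone circuit of depth $0$ or $1$ with more outputs than inputs, which I would handle directly. In depth $0$ every output is a constant or a copy of an input wire, so with more outputs than inputs two output wires coincide or some output is constant, and a $y$ that makes those two coordinates disagree, or disagrees with the constant output, is outside the range. In depth $1$ every output is a single monotone gate --- a constant, an $\land$ of a set of variables, or an $\lor$ of a set of variables; after outputting an easy witness whenever there is a (syntactically detectable) constant gate, a pair of equal gates, or an ``implication'' pair (an $\land$-set containing another $\land$-set, an $\lor$-set contained in another $\lor$-set, or an $\land$-set meeting an $\lor$-set), the $\land$-gates and $\lor$-gates use disjoint variable sets $A$ and $B$ with the circuit still having more outputs than $|A|+|B|$, so (say) there are more than $|A|$ of the $\land$-gates; a pigeonhole on ``private'' variables then forces some $\land$-set $S_i$ to be contained in the union of the other $\land$-sets, and the string with coordinate $i$ equal to $0$, every other $\land$-coordinate equal to $1$, and all remaining coordinates arbitrary, is outside the range.

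The step I expect to be the main obstacle is keeping the recursion self-contained: the reduction $D \rightsquigarrow D_1$ needs the current circuit to have more outputs than gates in its bottom layer, and since the layer widths of a monotone circuit need not be monotone, the peeling can get stuck at a layer of width $\ge m$ before the base case is reached. Dealing with that stuck configuration --- either by handing the instance to the companion special-case analysis for the next layer, or by exploiting the abundance of gates at that layer --- is the delicate point; the depth reduction and the small-depth base cases themselves are routine.
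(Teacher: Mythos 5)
Your opening observation --- $\range(D) = D_1\big(\range(\widehat{D_1})\big) \subseteq \range(D_1)$, so it suffices to find a string outside $\range(D_1)$ --- is exactly the paper's first step, and your depth-$1$ base case is a correct (if more elaborate than necessary) algorithm for depth-one monotone instances. The gap lies in what you take $D_1$ to be. In the paper's decomposition the depth index is counted from the output side: $m_1$ is the number of gates feeding the output gates, and $D_1$ is the \emph{top} slice --- a depth-$1$ monotone circuit with $m_1$ inputs and $m > m_1$ outputs. So a single peel lands you directly in your base case, and the proof is finished there (the paper does it with an even simpler greedy: flip a constant output; for an $\land$ output set it and its inputs to $1$; for an $\lor$ output set it and its inputs to $0$; after at most $m_1$ such steps the inputs are exhausted and some remaining output is forced, so flip it). No induction on depth is needed.

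Your version instead peels from the input side, so your $D_1$ is a depth-$(d-1)$ circuit and you must recurse through every layer. The obstacle you flag --- the recursion ``gets stuck'' when some intermediate layer has width at least $m$ --- is not a delicate point to be patched; it is the whole problem. A depth-$(d-1)$ monotone circuit with $m_1$ inputs and $m > m_1$ outputs is an arbitrary instance of $\mon$-$\avoid$, and by \cref{thm-mon-redn} monotone $\avoid$ at stretch $m > 2n$ is as hard as general $\avoid$; so the inductive statement your recursion needs (``monotone $\avoid$ is solvable in deterministic polynomial time whenever outputs exceed inputs, at every depth'') is essentially the open problem this section is circumventing, and the hypothesis $m > m_1$ gives no control over inner layer widths. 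As written, the proposal therefore does not establish the proposition, even though the two ingredients of the correct one-step proof (the range-containment observation and the depth-$1$ algorithm) are both present in it.
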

\begin{proof}%{Proof of Proposition~\ref{prop:special-case1}}{app:prop:special-case1}
Since $m > m_1$, we have that $D_1$ is a valid instance of the $\avoid$ problem. Observe that $D_1$ is a depth $1$ monotone circuit. We claim that $y \not \in \range(D^1) \implies y \not \in \range(D)$. This follows from the fact that if no input setting $x\in \{0,1\}^{m_1}$ in circuit $D^1$ can produce a $y \in \{0,1\}^m$, then no input assignment of $n$ variables in circuit $D$ can ever produce $y$. 

We now show how to find a $y$ outside the range of $D_1$ in deterministic polynomial time. Hence, when $m>m_1$, there is a polynomial time algorithm for $\avoid$.
Similar to the approach in \cite{GLW22}, the idea is that we inductively solve the range avoidance problem for smaller circuits. Consider the first output bit $y_1$ (which is a function $f_1$ of the input variables). We consider the following cases based on the type of functions:
    \begin{description}
    \item{\bf Case 1:} Suppose $f$ is a constant function. Without loss of generality, assume $f$ is the constant zero function. Setting $y_1=1$ and the remaining output bits to arbitrary Boolean values, we obtain a string $y$ which is outside the range of $D_1$.  
    \item{\bf Case 2:} Suppose $f$ is the $\land$ of some variables. In this case, we set output bit $y_1 = 1$ and variables feeding into this gate to be $1$. Thus, we obtain a smaller circuit $D_1':\{0,1\}^{m_j-1}\rightarrow \{0,1\}^{m-1}$. 
    \item{\bf Case 3:} Suppose $f$ is the $\lor$ of some variables. This case can be handled similarly to case $2$. Setting output bit $y_1=0$ and variables feeding into this gate to be $0$, we obtain a smaller circuit $D_1':\{0,1\}^{m_j-1}\rightarrow \{0,1\}^{m-1}$. 
    \end{description}
    After $n$ steps, we obtain a circuit $D_1'':\{0,1\}^0 \to \{0,1\}^{m-m_j}$. The value of the function $f_{m_j+1}$ is fixed by the input assignments. Flipping this value, we obtain a string outside the range of $D_1$. We note that this can be done in $O(m)$ time.
\end{proof}
    Even if $m < m_1$, $m$ is much larger compared to $m_2$, there is an $\FP^\NP$ algorithm for $\mon$-$\avoid$ where $s$ is the size of the circuit.
    \begin{proposition} 
    \label{prop:special-case2}
        If $m > m_2^{\omega(\sqrt{m_2 \log s})}$  there is an $\FP^\NP$ algorithm for $\mon$-$\avoid$ where $s$ is the size of the circuit.
    \end{proposition}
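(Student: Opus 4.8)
The plan is to reduce $\mon$-$\avoid$ on $D$ to $\avoid$ on the shallow ``top part'' $D_2$, exactly as in \cref{prop:special-case1}, and then invoke the known $\FP^\NP$ algorithm of \cite{RSW'22} for $\avoid$ on De Morgan formulas of small size.

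\emph{Step 1 (reduce to $D_2$).} Since $D = D_2 \circ \widehat{D_2}$, any $y \notin \range(D_2)$ also satisfies $y \notin \range(D)$: if no setting of the $m_2$ inputs of $D_2$ produces $y$, then certainly no setting of the $n$ inputs of $D$ produces $y$. As $m > m_2^{\,\omega(\sqrt{m_2\log s})} > m_2$, the circuit $D_2 : \{0,1\}^{m_2}\to\{0,1\}^m$ is itself a legal instance of $\avoid$, and a solution for it is the same string $y$ that solves $D$ (no further computation is needed). So it suffices to solve $\avoid$ for $D_2$.

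\emph{Step 2 (small formula complexity of the outputs of $D_2$).} Each output bit of $D_2$ is a monotone function of only its $m_2$ inputs, computed by the sub-circuit of $D$ rooted at the corresponding output gate, which is monotone of size at most $s$. The key step is to show that each such output bit is computed by a (monotone, hence De Morgan) formula $\Phi_i$ on those $m_2$ variables of size $s'$ with $\sqrt{s'}\,\log s' = O\!\left(\sqrt{m_2\log s}\right)$ — equivalently, it would suffice to bound the approximate degree of each output bit of $D_2$ by $\widetilde{O}(\sqrt{m_2})$ and feed that into the hitting-set construction of \cite{GLW22}. The point is that $\Phi_i$ reads only $m_2$ inputs, so one should combine this with the global size bound $s$ and a minimization/rebalancing of the near-top layers of $D_2$; a naive unfolding of the sub-circuit is exponentially too large, so one must genuinely exploit both the bound $m_2$ on the number of relevant inputs and the monotone structure. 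I expect establishing this formula-size (equivalently, approximate-degree) bound to be the main obstacle.

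\emph{Step 3 (invoke the formula algorithm).} Given that each output bit of $D_2$ is a De Morgan formula of size $s'$ over $m_2$ inputs, apply the $\FP^\NP$ algorithm of \cite{RSW'22} for $\avoid$ on De Morgan formulas, which succeeds whenever $m > m_2^{\,\omega(\sqrt{s'}\log s')}$. By Step 2, $\sqrt{s'}\log s' = O(\sqrt{m_2\log s})$, so the hypothesis $m > m_2^{\,\omega(\sqrt{m_2\log s})}$ is exactly what is required; the algorithm outputs, using its $\NP$ oracle, some $y\notin\range(D_2)$, which by Step 1 lies outside $\range(D)$. Chaining the three steps gives the desired $\FP^\NP$ algorithm for $\mon$-$\avoid$.
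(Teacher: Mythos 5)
Your Step~1 is exactly the paper's first move. The genuine gap is your Step~2, which you yourself flag as the main obstacle: you want each output bit of $D_2$, viewed as an arbitrary monotone size-$s$ circuit on $m_2$ inputs, to admit a De Morgan formula of size $s'$ with $\sqrt{s'}\log s' = O\left(\sqrt{m_2\log s}\right)$, equivalently approximate degree $\widetilde{O}(\sqrt{m_2})$. In that generality the claim is simply false: $\maj$ on $m_2$ bits has linear-size monotone circuits but approximate degree $\Theta(m_2)$, so no rebalancing or minimization of a general monotone sub-circuit can deliver the bound you need, and consequently the route through the formula-$\avoid$ algorithm of \cite{RSW'22} cannot be completed as stated.

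The missing observation is that no compression is needed at all: by definition $D_2$ is only the top two layers of the circuit, so each of its output bits is already a monotone CNF or DNF with at most $s$ clauses/terms over the $m_2$ inputs. The paper's proof exploits precisely this. By pigeonhole, at least $m/2$ of the output gates share the same type (say $\land$); since $m_2 < m/2$, one may discard the remaining outputs and still have a legal $\avoid$ instance, and the surviving circuit is a multi-output {\sc CNF} over $m_2$ variables. The $\FP^{\NP}$ algorithm of \cite{GLW22} for {\sc CNF}-$\avoid$ (whose hitting-set construction rests on the $O\left(\sqrt{m_2\log s}\right)$ approximate-degree bound for small CNFs/DNFs --- so your approximate-degree intuition is right, but only because the depth is two) then applies exactly under the hypothesis $m > m_2^{\omega(\sqrt{m_2 \log s})}$. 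Note also the small point you gloss over: discarding outputs is only legitimate because the number of surviving outputs still exceeds $m_2$, which the paper secures via the condition $m_2 < m/2$.
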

    \begin{proof}
    %{Proof of Proposition~\ref{prop:special-case2}}{app:prop:special-case2}  
    We show that if $m_2 < \frac{m}{2}$, then the problem can be reduced in polynomial time to {\sc CNF}-$\avoid$. If in addition, $m > m_2^{\omega(\sqrt{m_2 \log s})}$,
    we can use $\FP^{\NP}$ algorithm due to \cite{GLW22} to solve the {\sc CNF}-$\avoid$ instance and this implies the above proposition.
    
    The reduction is as follows. Let $F_1,F_2$ be the set of all $\land$ gates and all $\lor$ gates respectively in the output layer. By pigeon hole principle, we have that either $|F_1| \geq \frac{m}{2}$ or $|F_2| \geq \frac{m}{2}$. Without loss of generality, assume $|F_1|\geq m/2$. Since $m_2 < \frac{m}{2}$, it suffices to solve the problem for $D'$ obtained from $D_2$ by eliminating the top $\lor$ gates. 
    Since the top layer of $D'$ contains only $\land$ gates, we can assume that there are only $\lor$ gates in the second layer of this circuit. Otherwise, we can feed the inputs of this $\land$ gate directly to the $\land$ gate above it. Now $D_2$ is a {\sc CNF} formula.
    \end{proof}
    %We include the proofs in Appendix~\ref{app:prop:sepcial-case2}.    
\end{appsection}

\section{Range Avoidance for $\NC^0_k$ via Hypergraphs}
\label{sec:avoid-fixedhypergraphs}

In this section, we describe the main technical tool of the paper by formulating the range avoidance problem as a way of avoiding certain hypergraphs, leading to a Turan-type formulation of the problem.

We recall the notation from the introduction: for an $\NC^0_k$ circuit $C = \left( f_i \right)_{i \in [m]}: \{0,1\}^n \to \{0,1\}^m$, let $\mathcal{H}_C$ denote the hypergraph defined as follows: The set of vertices is $[n]$. For $1 \le i \le m$, define a hyperedge $e_i$ as $\{x_j \mid j \in [n], x_j \in I(f_i)\}$. Thus $E=\{\{e_i \mid i \in [m]\}\}$ has exactly $m$ edges, each of size at most $k$. Let $\cal{F}$ be the family of functions that appear in the circuit $C$, and let $\phi: E \to \cal{F}$ be a labelling of the edges of the hypergraph $H$ with the corresponding function. A $2$-coloring of the hyperedges of $\mathcal{H}$, is said to be a $\phi$-\textit{coloring} if there is a vertex coloring which induces this edge coloring via $\phi$. We argue the following:

\introlemmacoloringtoavoid*
\begin{proof}
    Let $H_C$ be the hypergraph corresponding to the circuit $C$. Let $H$ be a sub-hypergraph of fixed size $\ell$ in $H_C$. We note that since $H$ is of fixed size, we can exhaustively search for a copy of $H$ in $H_C$ in polynomial time. Let $\Gamma: E(H) \rightarrow \{R,B\}$ be an edge-coloring of $H$ which is not a $\phi$-coloring. Let $C'$ be the circuit corresponding to the hypergraph $H$. Let $y=y_1y_2 \ldots y_m$ be defined as follows:
    \[
    y_i=\begin{cases}
        *, &\textrm{ if $e_i \not \in E(H)$}\\
        0, &\textrm{ if $e_i \in E(H)$ and $\Gamma(e_i)=R$}\\
        1, &\textrm{ otherwise}
    \end{cases}
    \]
    where $*$ denotes that $y_i$ can take any value in $\{0,1\}$. We will show that $y \not \in \range(C)$. Let $\Delta:\{R,B\} \rightarrow\{0,1\}$ be a function such that $\Delta(R)=0, \Delta(B)=1$. For the sake of brevity, let $\Delta(x)=\Delta(x_1,\ldots x_n)=\Delta(x_1)\Delta(x_2) \ldots \Delta(x_n)$. It suffices to show the following claim.
    \begin{claim}\label{clm-equivalence}
        $y \in \range(C)$ if and only if $\Delta(y)$ is a valid $\phi$-coloring.
    \end{claim}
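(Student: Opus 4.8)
The plan is to prove Claim~\ref{clm-equivalence} by unwinding the definitions of ``$\phi$-coloring'' and of $\range(C)$, after first reducing from the full circuit $C$ to the sub-circuit $C'$ attached to $\H'$.

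First I would dispose of the starred coordinates. Since $y_i=*$ exactly when $e_i\notin E(\H')$, the assertion ``$y\in\range(C)$'' is to be read as ``there is an $x\in\{0,1\}^n$ whose image $C(x)$ agrees with $y$ on every coordinate $i$ with $e_i\in E(\H')$''. Each such $f_i$ depends only on the variables indexing $e_i$, all of which lie in $V(\H')$; hence, writing $y'$ for the restriction of $y$ to the coordinates indexed by $E(\H')$ and $C'=(f_i)_{e_i\in E(\H')}$ for the sub-circuit on the variables $V(\H')$, we get $y\in\range(C)$ (in the sense above) if and only if $y'\in\range(C')$: the values of $x$ outside $V(\H')$ are irrelevant, and any assignment to $V(\H')$ extends arbitrarily. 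So it suffices to show that $y'\in\range(C')$ iff the edge-coloring of $\H'$ read off from $y'$, namely $\Delta^{-1}(y')$ with $0\mapsto R,\ 1\mapsto B$, is a $\phi$-coloring.

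Next I would set up the dictionary between assignments and colorings. Extending $\Delta$ coordinatewise makes it a bijection between vertex $2$-colorings $c:V(\H')\to\{R,B\}$ and input assignments $x\in\{0,1\}^{V(\H')}$, namely $x=\Delta(c)$. By the definition of $\phi$ — each edge $e_i$ carrying the function $f_i$ together with the identification of the vertices of $e_i$ with the arguments of $f_i$ — the color that $c$ induces on $e_i$ is precisely $\Delta^{-1}\big(f_i(\Delta(c)|_{e_i})\big)=\Delta^{-1}\big(C'(x)_i\big)$. Hence $c$ induces the edge-coloring $\Delta^{-1}(y')$ if and only if $C'(x)_i=y'_i$ for all $i$, i.e. $C'(x)=y'$. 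Therefore $\Delta^{-1}(y')$ is a $\phi$-coloring $\iff$ some vertex coloring induces it $\iff$ some $x$ satisfies $C'(x)=y'\iff y'\in\range(C')$. Combined with the previous paragraph this is exactly Claim~\ref{clm-equivalence}; and since $\Gamma$ was chosen not to be a $\phi$-coloring while $\Delta^{-1}(y')=\Gamma$ by the construction of $y$, the claim forces $y\notin\range(C)$, which is what Theorem~\ref{introlemma-coloring-to-avoid} asks for.

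The only delicate point is the bookkeeping: hyperedges are unordered sets whereas the gates $f_i$ read their inputs in a fixed order, so $\phi$ must record not merely a Boolean function but its wiring to the specific variables of $e_i$, and one must check that the copy of $\H'$ located inside $\H_C$ carries exactly this labelling when we pass to $C'$. Once this is pinned down the argument is a pure unfolding of definitions — there is no combinatorics and no use of monotonicity at this stage, these entering only later when, for a concrete function family $\F$ and a concrete $\H'$, one certifies that a non-$\phi$ edge-coloring of $\H'$ actually exists.
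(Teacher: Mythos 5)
Your proof is correct and takes essentially the same route as the paper: a pure unwinding of definitions via the dictionary $\Delta$ between vertex $2$-colorings and Boolean assignments, under which an assignment $x$ with $C(x)$ agreeing with $y$ corresponds exactly to a vertex coloring inducing the edge coloring read off from $y$. You are in fact a bit more careful than the paper's two-line argument, which glosses over the restriction to the sub-circuit $C'$ on $V(\H')$, the role of the $*$-coordinates, and the direction in which $\Delta$ is applied; your treatment of these points matches what the paper intends.
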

    \begin{proof}
            Notice that $y \in \range(C)$ if and only if $\exists x\in \{0,1\}^n \textrm{ such that } C(x)=y$. This is equivalent to $\exists \Pi:V \rightarrow \{R,B\} \textrm{ such that the corresponding } \Gamma:E \rightarrow \{R,B\} \textrm{ is a } \phi\textrm{-coloring}$. Indeed, the latter equivalence can be obtained by setting $\Pi, \Gamma$ such that $\Pi(v_i)=\Delta(x_i)$ and $\Gamma(e_j)=\Delta(y_j)$ for all $i\in [n], j \in [m]$.
    \end{proof}
    Since $H_C$ has an edge coloring $\Gamma$ which is not a $\phi$-coloring, by \cref{clm-equivalence} we obtain a $y\in \{0,1\}^m$ which is outside $\range(C)$.
\end{proof}

\subsection{Warmup : Algorithm for $\NC^0_2$-$\avoid$}
%\paragraph{Warm-up 1: Algorithm for $\NC^0_2$-$\avoid$:} 
As mentioned in the introduction, \cite{GLW22} described a polynomial time algorithm for solving the range avoidance problem for the $\NC^0_2$ circuit. In the following, we show an alternate proof of the same as an application of the framework described above.

\begin{proposition}
\label{prop:nc02-alternate-proof}
    There is a polynomial time algorithm for $\NC^0_2$-$\avoid$ when $m>n$.
\end{proposition}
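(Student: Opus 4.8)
The plan is to invoke \cref{introlemma-coloring-to-avoid}: it suffices to locate in the graph $\H_C$ (which here has edges of size at most $2$) a polynomial-size, polynomial-time-findable sub-hypergraph $\H'$ carrying an edge-colouring that is not a $\phi$-colouring. (Note that although \cref{introlemma-coloring-to-avoid} is stated for $\H'$ of fixed size, the same proof works for any polynomial-size $\H'$ we can locate in polynomial time, exactly as is used later for loose $\mychi$-cycles.) We first clean up degenerate output functions by a preprocessing loop. While some output function depends on at most one input: if it is a constant, we stop immediately, since setting that output bit to the complement of the constant and the rest arbitrarily gives a string outside $\range(C)$ (equivalently, $\H'$ is that single edge). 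If instead it is $x_u$ or $\neg x_u$, then membership of $y$ in $\range(C)$ forces $x_u$ to a fixed value; so we record the corresponding output bit, substitute the appropriate constant for $x_u$ into every function, and delete that function, obtaining an instance with one fewer input and output and still $m>n$. This loop runs at most $n$ times; any constants or dictators created by the substitutions are caught on later iterations. After preprocessing, every output function genuinely depends on exactly two inputs, so $\H_C$ is a genuine $2$-uniform multigraph on $n$ vertices with $m>n$ edges and is therefore not a forest; hence it contains a cycle $Z$ in the multigraph sense (a proper cycle or a pair of parallel edges), which we find (say as a shortest cycle) in polynomial time.

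It remains to exhibit a non-$\phi$-colouring of a polynomial-size structure around $Z$. Call an edge $e_i$ \emph{linear} if $\phi(e_i)$ is $\oplus$ or its negation, and \emph{unbalanced} otherwise (an and/or of two literals). If the subgraph on the linear edges contains a cycle, take $\H'$ to be that cycle: summing the linear constraints around it makes $\bigoplus_i \gamma_i$ a fixed constant, so any edge-colouring of $\H'$ of the other parity is not a $\phi$-colouring. Otherwise the linear edges form a forest; contract each of its components to a single vertex. The contracted multigraph has $n-(\#\text{linear edges})$ vertices and $m-(\#\text{linear edges})$ edges, hence again more edges than vertices, so it has a cycle $Z^\ast$ through components $A_1,\dots,A_r$ joined by unbalanced edges $e^\ast_1,\dots,e^\ast_r$ (with $r=1$ being a loop inside one component and $r=2$ a pair of parallel edges). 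Let $\H'$ consist of $e^\ast_1,\dots,e^\ast_r$ together with the linear edges inside $A_1,\dots,A_r$; fixing the colours of those linear edges pins every vertex of $A_s$ to an affine function of a single free bit $b_s$, so each $\gamma(e^\ast_s)$ becomes an unbalanced function of $(b_s,b_{s+1})$. For $r\ge 3$, flipping $b_s$ changes $\gamma(e^\ast_{s-1})$ only when $b_{s-1}$ takes one specific value and changes $\gamma(e^\ast_s)$ only when $b_{s+1}$ takes one specific value, so a base point avoiding both values makes the flip change no edge colour; thus the vertex-to-edge colouring map is non-injective, hence not surjective. For $r=2$ the map $(b_1,b_2)\mapsto(\gamma(e^\ast_1),\gamma(e^\ast_2))$ has two unbalanced (degree-$2$) coordinates, but every bijection of $\{0,1\}^2$ is affine, so it cannot be a bijection. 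For $r=1$ we additionally choose the linear-edge colours so that the forced xor of the two endpoints of $e^\ast$ differs from the xor of the distinguished coordinates of $\phi(e^\ast)$; then $\gamma(e^\ast)$ is constant in $b_1$, and colouring $e^\ast$ the other way is not a $\phi$-colouring. In every case $\H'$ carries a non-$\phi$-colouring, and \cref{introlemma-coloring-to-avoid} produces a string outside $\range(C)$; combining it with the bits recorded during preprocessing yields the answer for the original circuit.

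The main obstacle is the colouring lemma of the second paragraph: one has to show that the unavoidable cyclic structure always admits a non-$\phi$-colouring by separating the linear case (handled by parity) from the unbalanced case (handled by the local-flip argument), while correctly treating the degenerate short cycles $r\in\{1,2\}$ and the mixed cycles that only appear after contracting the linear components. Each ingredient is elementary, but keeping the bookkeeping of the preprocessing substitutions, the contraction, and the affine reparametrisations mutually consistent is where the care lies.
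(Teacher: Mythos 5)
Your proof is correct, and it works inside the same high-level framework as the paper (\cref{introlemma-coloring-to-avoid} together with the fact that $m>n$ forces a cycle in the $2$-uniform graph $\H_C$), but your construction of the non-$\phi$-colourable structure is genuinely different. The paper fixes a single cycle $Q$ and splits into two cases: if all edges of $Q$ are parities, it uses the parity-forcing argument; otherwise it anchors at an {\sc And}/{\sc Or} edge of $Q$ and propagates ``forcing'' colours along the path $Q-e_\ell$ (an {\sc And} edge coloured $B$ pins both endpoints, an {\sc Or} edge coloured $R$ likewise, a parity edge transfers the colour), so that the colour of the deleted edge is determined and can be flipped. You instead classify edges globally into linear (parity-type) and unbalanced ({\sc And}/{\sc Or}-type), dispose of a cycle of linear edges by the parity argument, and otherwise contract the linear forest and take a (simple) cycle of unbalanced edges in the contracted multigraph; after fixing the linear colours there is one free bit per contracted component, and you rule out surjectivity of the induced map $\{0,1\}^r\to\{0,1\}^r$ by a local-flip non-injectivity argument for $r\ge 3$, by the fact that every bijection of $\{0,1\}^2$ is affine while {\sc And}/{\sc Or}-type coordinates are not for $r=2$, and by choosing the linear colours to make the loop edge constant for $r=1$; all three cases check out, including the parallel-edge degeneracies. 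What your route buys: it avoids the propagation/inconsistency bookkeeping of the paper's Case 2, and your preprocessing of outputs depending on at most one input handles dictator functions explicitly, which the four-type taxonomy the paper borrows from \cite{GLW22} quietly omits. What the paper's route buys: it needs only one simple cycle of the original graph and a short forcing induction, with no contraction and no case split on the cycle length $r$. Your appeal to \cref{introlemma-coloring-to-avoid} for a polynomial-size (rather than fixed-size) subgraph found in polynomial time is legitimate and matches how the paper itself later uses loose $\mychi$-cycles.
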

\begin{proof}
    Let $C:\{0,1\}^n \rightarrow \{0,1\}^m$ be the given $\NC^0_2$ circuit. \cite{GLW22} observed that there are only four types of $\NC^0_2$ functions possible: {\sc And}, {\sc Or}, {\sc Parity} and constant functions. We can check the type of function in polynomial time. Suppose there is a constant function $f$ in the circuit. Wlog, let this function be a constant zero function. Then, by setting the output of $f$ to $1$ and the other output bits arbitrarily, we get a string that is outside the range of the circuit. 
    
    Now we consider the other case when the circuit contains only $\{\land, \lor, \oplus\}$ gates, where inputs may be negated. We consider the graph $\mathcal{H}_C$ corresponding to the circuit $C$. Since $m>n$, the graph $H_C$ corresponding to the circuit $C$ indeed contains a cycle $Q$. Furthermore, we can find this cycle in polynomial time. Thus, it suffices to obtain an edge-coloring of $Q$ which is not a $\phi$-coloring.
    %We defer the details of this coloring to \cref{appsec:prop:nc02-alternate-proof}.
    %\begin{appsection}{Details of the Proof of \cref{nc02-alternate-proof}}{appsec:prop:nc02-alternate-proof}
        We consider the following cases based on the function types of the edges participating in the cycle $Q$:
    \begin{description}
        \item[Case 1:] Suppose each edge in $Q$ computes a {\sc Parity} of two inputs which may be negated. Consider an edge $f$ with endpoints $x_i,x_j$. Consider the path $P$ starting at $x_i$ and ending at $x_j$ obtained by deleting $f$ from $Q$. Let $\Gamma$ color all the edges in $Q - f$ to $R$.
        
        Let $x_i=b \in \{R,B\}$. Notice that setting the function to $R$ fixes the color of every other vertex to either $b$ or $\overline{b}$ depending on the function. In particular, it fixes $x_j=b' \in \{b,\overline{b}\}$. This fixes the value of the function computed at $f$, and hence the color of $f$. Thus, flipping the color of $f$ would produce an edge-coloring that is not achievable. Therefore, the following coloring $\Gamma$ is not a $\phi$-coloring.

        \[ \Gamma(e) = \begin{cases}
            R ~~~~~~~~&\textrm{ if $e \in Q - f$}\\
            B ~~~~~~~~&\textrm{ if $e = f$ and $b'=b$}\\
            R ~~~~~~~~&\textrm{ if $e = f$ and $b'=\overline{b}$}\\
        \end{cases}\]
        \item[Case 2:] Suppose there exists an edge $e_1$ in $Q$ which computes {\sc And} or {\sc or} of two inputs which may be negated. Let the cycle $Q$ be $e_1e_2\ldots e_\ell e_1$. Consider the path $e_1e_2\ldots e_{\ell -1}$ obtained by deleting the edge $e_\ell$ from $Q$. Consider the following edge-coloring $\Gamma:Q-e_\ell \rightarrow \{R,B\}$
        \[
            \Gamma(e)= \begin{cases}
                B &\textrm{ if $e \in Q - e_\ell$, $\phi(e)=\land$}\\
                % R &\textrm{ if $h \in Q - \ell$, $\phi(h) \in \{\lor, \oplus\}$}\\
                R &\textrm{ if $e \in Q-e_\ell$ and $\phi(e)\in \{\lor, \oplus\}$}\\
            \end{cases}
        \]
        Let $e_1 \cap e_{\ell}=\{x_i\}$ and $e_\ell \cap e_{\ell -1}= \{x_j\}$. We would like to show that by coloring the edges of $Q- e_\ell$ according to $\Gamma$, we end up fixing the colors of $x_i,x_j$ thus fixing the color of $e_{\ell}$ or we obtain an inconsistency. In the former case, flipping the color of $e_\ell$ we obtain an edge-coloring which is not a $\phi$-coloring. 

        It suffices to show that $\Gamma$ fixes the color of $e_\ell$ or we obtain an inconsistent coloring for $Q-e_\ell$. We prove this by induction on the length of the path. For the base case, wlog let $\phi(e_1)=${\sc And}. According to our $\Gamma$, we color $e_1$ to $B$, which fixes the color of both vertices incident on $e_1$. By induction hypothesis, let $\Gamma$ fix the colors of all vertices participating in the path $e_1e_2\ldots e_{k-1}$ for $k \in [\ell]$. In the other case, when the coloring is inconsistent, we are already done. Thus, we would like to argue that $\Gamma$ fixes the other endpoint $y$ of $e_k$ where $y \not \in e_{k-1}$. Notice that, since one input feeding into $e_k$ is already fixed $\Gamma$ gives a way to fix the other input $y$ to $b \in \{R,B\}$ by setting the color of $e_k$ appropriately or we get an inconsistency at this stage. Thus, inductively either we get an inconsistent coloring (in this case, color $e_\ell$ arbitrarily) or we end up fixing the colors of $x_i,x_j$ which in turn fixes the color of the edge $e_\ell$. 

        Hence, flipping the color of $\ell$ fixed by the above assignment produces an edge-coloring which is not a $\phi$-coloring.
        
    \end{description}
    %\end{appsection}
    
      By the above case analysis, we have an edge-coloring of $H$ which is not a $\phi$-coloring. By theorem \ref{introlemma-coloring-to-avoid} we obtain a string outside the range of the circuit in polynomial time.
\end{proof} 

%\paragraph{Warm-up 2: Polynomial time algorithm for $\{\textrm{{\sc And}, {\sc Or}}\}$-$\avoid$:}
%To demonstrate the method further, as another simple application, we show a polynomial time algorithm for solving the range avoidance problem when each output function computes an {\sc And} or {\sc Or} function of $k$ input bits for a fixed $k > 0$ and satisfies the property that any two output functions share at most one input. See Appendix~\ref{appsec:warmup2} for details.

%\begin{appsection}{Warm-up 2: Polynomial time algorithm for $\{\textrm{{\sc And}, {\sc Or}}\}$-$\avoid$}{appsec:warmup2}

\subsection{Warm-up 2: Polynomial time algorithm for $\{\textrm{{\sc And}, {\sc Or}}\}$-$\avoid$}
As our next application, we show a polynomial time algorithm for solving the range avoidance problem when each output function computes an {\sc And} or {\sc Or} function of $k$ input bits for a fixed $k > 0$. Additionally, it satisfies the constraint that any two output functions share at most one input. Thus, the hypergraph $H_C$ corresponding to $C$ is a $k$-uniform linear hypergraph. The sub-hypergraphs we will use in this case are $k$-crown and $C^*$. 
%We include this argument in the Appendix~\ref{appsec:AND-OR-AVOID}.
%\begin{appsection}{Warm-up 2: Polynomial time algorithm for $\{\textrm{{\sc And}, {\sc Or}}\}$-$\avoid$}{appsec:AND-OR-AVOID}
As our next application, we show a polynomial time algorithm for solving the range avoidance problem when each output function computes an {\sc And} or {\sc Or} function of $k$ input bits for a fixed $k$. The sub-hypergraphs we will use in this case are $k$-crown and $C^*$. A $k$-\textit{crown} is a $k$-uniform linear hypergraph consisting of $k$ disjoint hyperedges $\{e_1, \ldots e_k\}$ and an additional hyperedge $e_0$ intersecting each hyperedge $e_1, \ldots e_k$ exactly once. \textit{$C^*$ } is a variant of the $k$-crown where the edges $e_1, \ldots e_{k-2}$ intersect in a common vertex $v \not \in e_0$. We will use the following extremal bound on these subhypergraphs.

\begin{proposition}[\cite{ZBW24}]\label{prop:crown-bound}
    Let $\H$ be a $k$-uniform linear hypergraph with $m>\frac{k(k-2)(n-s)}{k-1}$ where $s$ is the number of vertices with degree at least $(k-1)^2+2$. Then $\H$ contains a $k$-crown or $C^*$.
\end{proposition}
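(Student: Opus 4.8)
The statement is a Turán-type bound, so I would prove the contrapositive: assuming $\H$ contains no $k$-crown and no copy of $C^*$, show that $m\le \frac{k(k-2)(n-s)}{k-1}$. \emph{Step 1 (crown-freeness puts a low-degree vertex in every edge).} Fix an edge $e_0=\{v_1,\dots,v_k\}$ and relabel so that $\deg(v_1)\ge\cdots\ge\deg(v_k)$. For $i\in[k]$ let $P_i$ be the set of edges $\ne e_0$ through $v_i$, so $|P_i|=\deg(v_i)-1$; by linearity every edge of $P_i$ meets $e_0$ exactly in $v_i$, and two edges from distinct $P_i,P_j$ meet (if at all) only outside $e_0$. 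Build the petals greedily: having chosen pairwise disjoint $e_1\in P_1,\dots,e_{i-1}\in P_{i-1}$, the set $(e_1\cup\cdots\cup e_{i-1})\setminus e_0$ has at most $(i-1)(k-1)$ vertices, and by linearity each of them lies on at most one edge of $P_i$; hence at most $(i-1)(k-1)$ members of $P_i$ are blocked. So if $|P_i|>(i-1)(k-1)$ for every $i$ the procedure produces pairwise disjoint $e_1,\dots,e_k$, i.e.\ a $k$-crown with core $e_0$. Since the $|P_i|$ are nonincreasing and $(i-1)(k-1)\le(k-1)^2$, it already suffices that $\deg(v_k)-1>(k-1)^2$, i.e.\ $\deg(v_k)\ge(k-1)^2+2$. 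Therefore a crown-free $\H$ has, in every edge, a vertex of degree at most $(k-1)^2+1$; call such vertices \emph{small} ($n-s$ of them) and the rest \emph{large}.

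\emph{Step 2 ($C^*$-freeness sharpens the count).} Charging each edge to a small vertex of it, and using that a small vertex lies on at most $(k-1)^2+1$ edges, already yields $m\le((k-1)^2+1)(n-s)$ --- weaker than required by a factor about $k-1$. To recover this factor I would use the following consequence of linearity: any two edges through a vertex $v$ meet only in $v$, so if $v\notin e_0$ lies on $t$ edges that each meet $e_0$, these $t$ edges hit $e_0$ in $t$ distinct vertices. Now count $(\text{edge},\text{vertex})$ incidences among the edges meeting $e_0$: there are $(k-1)\sum_i(\deg(v_i)-1)$ of them spread over the at most $n-k$ vertices outside $e_0$, so if the vertex-degrees of $e_0$ are too large some $v\notin e_0$ is hit by $k-2$ of these edges; together with two further petals grown (as in Step 1) from two large vertices of $e_0$, this produces a copy of $C^*$. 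Ruling out crowns and $C^*$ therefore forces, for every edge, a joint restriction on its minimum vertex-degree (Step 1) and on the number of large vertices it carries and the sum of its vertex-degrees (Step 2); organising these restrictions into a discharging argument between large and small vertices --- each small vertex absorbing at most $k(k-2)$ units of edge-charge before a $C^*$ appears --- should close the bound at $m(k-1)\le k(k-2)(n-s)$.

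The greedy argument of Step 1 is routine; the only care needed there is the linearity bookkeeping for blocked petals. The crux is Step 2: pinning down the exact structural dichotomy ``crown or $C^*$'' coming out of an over-dense neighbourhood, and running the charging so that the constant lands precisely on $\frac{k(k-2)}{k-1}$ (and, in particular, reconciling edges that carry exactly one large vertex, which neither half of Step 2 controls by itself). I would expect the clean write-up to fold all of Step 2 into a single discharging scheme rather than two separate structural lemmas.
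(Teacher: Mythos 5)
This proposition is not proved in the paper at all --- it is imported as a black box from \cite{ZBW24} --- so there is no internal argument to compare against; your attempt has to stand on its own, and as it stands it has a genuine gap. Step 1 is fine: the greedy petal-growing argument with the linearity bookkeeping (each already-used vertex outside $e_0$ blocks at most one edge of $P_i$, so at most $(i-1)(k-1)$ edges of $P_i$ are blocked) correctly shows that if every vertex of some edge has degree at least $(k-1)^2+2$ then a $k$-crown exists, hence in a crown-free hypergraph every edge contains a small vertex. But this only yields $m\le\bigl((k-1)^2+1\bigr)(n-s)$, as you note, and the entire content of the proposition beyond that --- the factor-$(k-1)$ improvement coming from $C^*$-freeness and the exact constant $\frac{k(k-2)}{k-1}$ --- lives in your Step 2, which is a plan rather than a proof. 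The key quantitative claims there are unsubstantiated: you never specify how large the degrees of $e_0$'s vertices must be to force an external vertex $v$ lying on $k-2$ of the edges meeting $e_0$, nor verify that the two additional petals needed to complete a $C^*$ can be grown disjointly from the $k-2$ edges through $v$ (the blocking set is now larger than in Step 1, so the degree threshold changes), nor prove the charging claim that ``each small vertex absorbs at most $k(k-2)$ units of edge-charge before a $C^*$ appears.''

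Note also that the most natural discharging consistent with your Step 1 cannot land the stated constant: if every edge contains at least $k-1$ small vertices and every small vertex has degree at most $D$, incidence counting gives $m\le\frac{D(n-s)}{k-1}$, and matching the proposition would require $D=k(k-2)=k^2-2k$, whereas your Step 1 only caps small degrees at $(k-1)^2+1=k^2-2k+2$. So the bound does not follow from the two ingredients you have actually established plus any routine count; some further structural use of $C^*$-freeness (e.g.\ a sharper degree bound for the vertices that are actually charged, or a finer classification of edges by their number of large vertices) is needed, and that is precisely the part left open. In short: correct and standard first half, but the half that produces the theorem's actual bound is missing.
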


\begin{proposition}
    Let $C:\{0,1\}^n \rightarrow \{0,1\}^m$ be a multi-output circuit where $m>\frac{k(k-2)(n-s)}{k-1}$ and each output function computes {\sc And}$_k$ or {\sc Or}$_k$ for fixed $k$. Additionally, any two output functions share at most one input. Then, there is a polynomial time algorithm for solving $\avoid$ on $C$.
\end{proposition}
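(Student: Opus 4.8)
The plan is to instantiate the framework of \cref{introlemma-coloring-to-avoid}. First I would form the hypergraph $\H_C$: since each output function is an \textsc{And} or \textsc{Or} of exactly $k$ of the inputs and any two output functions share at most one input, $\H_C$ is a $k$-uniform linear hypergraph, and it has $m > \frac{k(k-2)(n-s)}{k-1}$ edges, where $s=s(\H_C)$ is the number of vertices of degree at least $(k-1)^2+2$. This is exactly the hypothesis of \cref{prop:crown-bound}, so $\H_C$ contains, as a sub-hypergraph $\H'$, either a $k$-crown or a copy of $C^*$. Since $k$ is a fixed constant, $\H'$ has $O(k^2)$ vertices, so it has fixed size and a copy can be located inside $\H_C$ in time $n^{O(k^2)}$ by exhaustive search. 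Taking $\phi$ to be the labelling that records, for each edge of $\H'$, whether it is an \textsc{And} or an \textsc{Or} gate, \cref{introlemma-coloring-to-avoid} reduces everything to exhibiting an edge-colouring of $\H'$ that is not a $\phi$-colouring; such a colouring then yields a string outside $\range(C)$ in polynomial time.

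So the whole content is the combinatorial claim: \emph{for every labelling $\phi\colon E(\H')\to\{\textsc{And}_k,\textsc{Or}_k\}$, where $\H'$ is a $k$-crown or a $C^*$, some edge-colouring of $\H'$ is not a $\phi$-colouring.} Write $e_0$ for the distinguished edge meeting each of the remaining edges $e_1,\dots,e_k$ (the petals), and let $v_i$ be the vertex in $e_0\cap e_i$; these are distinct by linearity, and in both structures $e_0=\{v_1,\dots,v_k\}$. I would build the bad colouring $\Gamma$ by first colouring each petal so as to \emph{pin down} the colour of $v_i$: put $\Gamma(e_i)=B$ when $\phi(e_i)=\textsc{And}_k$ and $\Gamma(e_i)=R$ when $\phi(e_i)=\textsc{Or}_k$. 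In either case, any vertex colouring $\Pi$ inducing $\Gamma(e_i)$ on $e_i$ must colour \emph{all} vertices of $e_i$, in particular $v_i$, with the same fixed colour $\gamma_i$ ($B$ in the first case, $R$ in the second). Finally, let $\Gamma(e_0)$ be the colour \emph{opposite} to $\phi(e_0)(\gamma_1,\dots,\gamma_k)$ (reading $\phi(e_0)$ as a function $\{R,B\}^k\to\{R,B\}$). If some $\Pi$ induced $\Gamma$, then $\Pi(v_i)=\gamma_i$ for every $i$, so the colour induced on $e_0$ would be $\phi(e_0)(\gamma_1,\dots,\gamma_k)\neq\Gamma(e_0)$, a contradiction; hence $\Gamma$ is not a $\phi$-colouring.

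This argument only mentions $e_0$, the petals $e_1,\dots,e_k$, and the vertices $v_i\in e_0\cap e_i$, all of which occur in both the $k$-crown and $C^*$, so the same $\Gamma$ works in both cases; the only extra feature of $C^*$ is the vertex $v\in e_1\cap\dots\cap e_{k-2}$, and it causes no difficulty: if the labels of $e_1,\dots,e_{k-2}$ are not all equal, then any $\Pi$ inducing $\Gamma$ on these edges would be forced to colour $v$ with both $B$ and $R$, so no such $\Pi$ exists and $\Gamma$ is again not a $\phi$-colouring, while if they are all equal then $v$ gets a single consistent constraint and the $e_0$-argument above applies verbatim. (The same colouring also works if one allows the gates to negate some inputs: a petal coloured as above still pins $v_i$ to a determined value, namely the value that the unique satisfying, respectively falsifying, assignment of the petal puts on $v_i$.) I do not expect a real obstacle here: the extremal input comes for free from \cref{prop:crown-bound}, the bounded-size sub-hypergraph is found by brute force because $k$ is constant, and the non-realisable colouring is explicit — the one point needing care is precisely this case split verifying that $\Gamma$ is non-realisable for both possible outputs of \cref{prop:crown-bound}.
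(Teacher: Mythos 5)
Your proposal is correct and follows essentially the same route as the paper: invoke the crown/$C^*$ extremal bound to get the sub-hypergraph, locate it by brute force since $k$ is constant, pin each intersection vertex $v_i$ by colouring an \textsc{And} petal $B$ and an \textsc{Or} petal $R$, and then colour $e_0$ so as to disagree with the value this forces. In fact your adaptive choice $\Gamma(e_0)=\overline{\phi(e_0)(\gamma_1,\dots,\gamma_k)}$ is more careful than the paper's fixed rule ($B$ iff some petal is an \textsc{Or}), which as written is realizable---hence a $\phi$-colouring---when $\phi(e_0)=\textsc{Or}$ and the petal labels are mixed, so your version, together with your explicit treatment of the shared vertex of $C^*$, actually repairs that case.
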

\begin{proof}
    We will show an edge coloring of the hypergraphs $k$-crown or $C^*$ which is not a $\phi$-coloring. 
    \[
    \Gamma(e)= \begin{cases}
        B \textrm{ if $e \in \{e_1, \ldots e_k\}, \phi(e) =${\sc And}}\\
        R \textrm{ if $e \in \{e_1, \ldots e_k\}, \phi(e) =${\sc Or}}\\
        B \textrm{ if $e=e_0$, $\exists i \in [k]$ such that $\phi(e_i)=${\sc Or}}\\
        R \textrm{ otherwise}
    \end{cases}
    \] 
    By Proposition \ref{prop:crown-bound} we have that if $m>\frac{k(k-1)(n-s)}{k-1}$ then $\H$ contains a $C^*$ or $k$-crown. Furthermore, we can find these subhypergraphs in polynomial time. By the above argument, we can find an edge coloring of $C^*$ and $k$-crown which is not a $\phi$-coloring. By theorem \ref{introlemma-coloring-to-avoid} we can find a string outside the range of $C$ in polynomial time. 
\end{proof}
We remark that the above special case of $\avoid$ ($\{\textrm{{\sc And}, {\sc Or}}\}$-$\avoid$) can be solved by a direct algorithm similar to the algorithm for $\NC^0_2$-$\avoid$ due to \cite{GLW22}. Nevertheless, the above method is yet another demonstration of our framework for solving the $\avoid$ problem.
%\end{appsection}
\subsection{Polynomial time Algorithm for $\oneinter_3$-$\avoid$}

In this subsection, we show the first application of our formulation of the problem in terms of hypergraphs. We apply it to describe a deterministic polynomial time algorithm for $\oneinter_3$-$\avoid$.

For our purpose, we are interested in a special case of $\phi$-coloring when $\F=\{\maj\}$. We define this formally below:
\begin{definition}[\textbf{$\maj$-coloring}]
     Let $H(V,E)$ be a $k$-uniform hypergraph. We say $\Gamma:E \rightarrow \{R,B\}$ is a $\maj$-coloring if there exists a vertex coloring $\Pi:V \rightarrow \{R,B\}$ such that $\forall e \in E(H)$ we have $\Gamma(e)=B \iff \exists S \subseteq e, |S|\ge \left \lfloor \frac{|V|}{2} \right \rfloor +1$ such that $\forall v\in S, ~\Pi(v)=B$. 
 \end{definition}
%In \cref{sec-mon-nc}, we will use $\maj$-coloring as a tool to solve $\mathcal{C}$-$\avoid$ for some special class $\mathcal{C}$.

We have the following corollary from \cref{introlemma-coloring-to-avoid}.
\begin{corollary}\label{lem-coloring-to-avoid}
    Let $C:\{0,1\}^n \rightarrow \{0,1\}^m$ be a circuit where each output function computes a $\maj$ function. Let $H_C$ be the corresponding hypergraph. Suppose $H_C$ contains a subhypergraph $H$ such that there is an edge-coloring of $H$ which is not $\maj$-coloring. Then there is a polynomial time algorithm to find a string outside the range of $C$. 
\end{corollary}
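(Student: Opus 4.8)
The plan is to obtain this as an immediate specialization of \cref{introlemma-coloring-to-avoid} in which the labelling function is constant. Since every output function of $C$ is a majority function, we take the family of functions to be $\F = \{\maj\}$ and let $\phi : E(\H_C) \to \F$ send every hyperedge to $\maj$. Under the colour convention (red for $0$, blue for $1$), the function $\maj : \{R,B\}^k \to \{R,B\}$ is naturally read as ``output the colour held by more than half of the inputs'', and so the edge colouring that a vertex colouring $\Pi : V \to \{R,B\}$ induces via this $\phi$ is exactly the one assigning each edge the majority colour among its vertices. This is precisely the condition appearing in the definition of $\maj$-coloring.

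First I would record the resulting equivalence: for $\phi \equiv \maj$, a $2$-colouring $\Gamma$ of the edges of a hypergraph is a $\phi$-coloring if and only if it is a $\maj$-coloring. Given the hypothesis of the corollary, $\H_C$ contains a subhypergraph $\H'$ together with an edge colouring $\Gamma'$ that is not a $\maj$-coloring of $\H'$; by the equivalence just stated, $\Gamma'$ is not a $\phi$-coloring of $\H'$ either (for the restriction of $\phi$ to $E(\H')$, which is again constantly $\maj$). This is exactly the premise of \cref{introlemma-coloring-to-avoid}, and invoking that theorem produces, in polynomial time, a string $y \notin \range(C)$. As in \cref{introlemma-coloring-to-avoid}, this uses that a copy of $\H'$ can be located inside $\H_C$ efficiently --- which holds when $\H'$ has fixed size, and, as we will see in the applications, also in the more delicate cases where $\H'$ is a cycle of unbounded length.

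There is no real combinatorial obstacle here; the entire content is the bookkeeping observation that the abstract notion of $\phi$-coloring becomes $\maj$-coloring once every gate is a majority. The only step warranting a moment's care is checking that the function induced on hyperedges by $\phi \equiv \maj$ matches the majority predicate used to define $\maj$-coloring --- in particular the threshold of roughly $k/2 + 1$ agreeing vertices on each $k$-element edge --- after which \cref{clm-equivalence} from the proof of \cref{introlemma-coloring-to-avoid} carries over without change. I expect the final write-up to be only a couple of sentences.
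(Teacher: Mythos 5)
Your proposal is correct and matches the paper's treatment: the corollary is obtained exactly by specializing \cref{introlemma-coloring-to-avoid} to the constant labelling $\phi \equiv \maj$, under which $\phi$-coloring coincides with $\maj$-coloring, and the paper states it as an immediate consequence without further argument. Your added remark about locating $\H'$ efficiently (fixed size, or the later unbounded-length cycle case handled separately) is a fair and harmless clarification of the same route.
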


Thus it is sufficient to exhibit an explicit fixed size linear $3$-uniform hypergraph $H$ which satisfies the conditions of Corollary~\ref{lem-coloring-to-avoid}.

\paragraph{Hypergraph Structure: Wickets -}
A \textit{wicket} is defined as a $3 \times 3$ grid with one edge removed. By \cref{thm-wicket}, if $\H$ contains more than $o(n^2)$ edges, then it contains a wicket. We will show that there is an edge coloring of a wicket which is not a $\maj$-coloring. We note that $\cref{thm-wicket}$ requires the hypergraph to be linear and hence using wickets as subhypergraph would yield an algorithm for $\oneinter_3$-$\avoid$ and not the more general case of $\maj_3$-$\avoid$.

\begin{lemma}\label{lem-wicket}
    Let $\H$ be a wicket. There exists an edge-coloring $\Gamma: E \rightarrow \{R,B\}$ such that it is not a $\maj$-coloring. Furthermore, we can find this in polynomial time.
\end{lemma}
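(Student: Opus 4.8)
The plan is to work directly with the structure of the wicket and produce an explicit edge 2-coloring that no vertex 2-coloring can induce via $\maj_3$. Recall the wicket $W$ is the $3\times 3$ grid with one edge deleted: label the nine vertices $v_{11},\dots,v_{33}$, keep all three row edges $r_1=\{v_{11},v_{12},v_{13}\}$, $r_2$, $r_3$, and keep two of the three column edges, say $c_1=\{v_{11},v_{21},v_{31}\}$ and $c_2=\{v_{12},v_{22},v_{32}\}$, dropping $c_3$. So $W$ has $5$ hyperedges on $9$ vertices, linear and $3$-uniform. The key structural fact I want to exploit is that under $\maj_3$, an edge is colored $B$ iff at least two of its three vertices are $B$; equivalently, the edge color is the ``majority vote'' of its vertices. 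First I would set up the constraint system: a vertex coloring $\Pi$ induces edge colors $\Gamma(e)=\maj(\Pi|_e)$, and I want a target $\Gamma$ for which this system is infeasible.

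The main step is to pick the right target coloring $\Gamma$. A natural choice: color the three row edges all $B$ and the two column edges all $R$ (or some asymmetric variant). If all three rows are $B$, each row has at least two blue vertices, so the blue vertices number at least $6$ out of $9$, hence in particular columns $c_1,c_2$ together with whatever the arrangement is will be forced to contain enough blues that at least one of them is majority-blue, contradicting $\Gamma(c_1)=\Gamma(c_2)=R$. I would make this counting precise: if $r_1,r_2,r_3$ are all blue-majority, then in each row at most one vertex is $R$, so at most $3$ of the $9$ vertices are $R$; the six cells of columns $1$ and $2$ therefore contain at most $3$ reds total, so by pigeonhole at least one of $c_1,c_2$ has at most one red vertex, i.e.\ is blue-majority — contradiction. (If the exact arithmetic with a single deleted column is too tight, I would instead force rows $B$ and just the single surviving column structure to also be $B$ in a way that overloads the red budget; the two-column version above already works.) Thus $\Gamma$ with rows $B$, columns $R$ is not a $\maj$-coloring. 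This also handles the ``Furthermore'' clause trivially: $\Gamma$ is a fixed, explicitly described coloring on a fixed-size ($5$-edge) hypergraph, so it is produced in constant (hence polynomial) time, and this is exactly the input format required by \cref{lem-coloring-to-avoid}.

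The part I expect to need the most care is verifying that the specific target coloring is genuinely infeasible \emph{given that only one column is deleted} — one has to check that no clever vertex assignment sneaks through, and the pigeonhole argument must be stated over exactly the surviving edges. A secondary subtlety is making sure the chosen $\Gamma$ is not accidentally realizable by exploiting the deleted column $c_3$ (it cannot be, since $\maj$-coloring only constrains the edges present, and dropping an edge only makes the system \emph{easier} to satisfy — so I should double-check the infeasibility does not rely on $c_3$). If the rows-$B$/columns-$R$ choice turns out to be satisfiable, the fallback is to enumerate: since $W$ has only $2^9$ vertex colorings and $2^5$ edge colorings, the set of $\maj$-colorings of $W$ is a finite explicitly computable set, and one checks it is a proper subset of $\{R,B\}^5$ by a (constant-size) counting bound on its cardinality; exhibiting any $\Gamma$ outside it finishes the lemma.
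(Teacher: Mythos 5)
Your proposal is correct and follows essentially the same route as the paper's proof: an explicit coloring that makes all row edges one color and both surviving column edges the other, refuted by a disjointness/counting contradiction (the paper uses rows $R$, columns $B$ and counts at least $3+4=7$ needed vertices among the $6$ column vertices; you use the color-swapped coloring and bound the total red budget by $3$ against the $\ge 4$ reds the two disjoint columns would require). The polynomial-time clause is handled identically, since the coloring is fixed on a constant-size structure.
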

\begin{proof}
    Wlog let the edge $e$ removed from the wicket be a column edge. Let $E=E_1 \cup E_2$ where $E_1$ be the set of three row edges and $E_2$ denote the set of column edges. We will show that the following $\Gamma:E \rightarrow \{R,B\}$ is not a $\maj$-coloring:
%    \[
    $\Gamma(e)= \begin{cases}
        R &\textrm{ if $e\in E_1$}\\
        B &\textrm{ if $e \in E_2$}
    \end{cases}$.
%    \]

    Let $U$ be the set of vertices upon which the edges of $E_2$ are incident. For the edges in $E_1$ to be colored $R$, there must be at least three vertices in $U$ that should be colored $R$. Similarly, for the edges in $E_2$ to be $B$, there must be at least $4$ vertices in $U$ that should be colored $B$. Thus, totally there should be at least $7$ distinct vertices in $U$. But $|U|=6$, which is a contradiction. Hence, $\Gamma$ is not a $\maj$-coloring.  
\end{proof}

\noindent The following is an easy corollary of \cref{lem-coloring-to-avoid} and \cref{lem-wicket}. 
%We include the proof in the \cref{app:cor:majcoloring}.
\begin{corollary}
\label{cor:majcoloring}
    Let $C:\{0,1\}^n \rightarrow \{0,1\}^m$ be an instance of $\oneinter_3$-$\avoid$ with $m=\Omega(n^2)$. Then we can find a $y \in \{0,1\}^m$ outside the range of $C$ in polynomial time. 
\end{corollary}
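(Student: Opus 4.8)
The plan is to instantiate the hypergraph framework of \cref{lem-coloring-to-avoid} with the \emph{wicket} as the fixed forbidden sub-hypergraph $\H'$, using the quadratic stretch to guarantee that a copy of the wicket actually sits inside $\H_C$. So the whole argument is just a chaining of three facts already established: the Tur\'an bound for wicket-free $3$-uniform linear hypergraphs (\cref{thm-wicket}), the existence of a bad edge-coloring of the wicket (\cref{lem-wicket}), and the $\maj$-specialization of the main framework (\cref{lem-coloring-to-avoid}).

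First I would note that since $C$ is an instance of $\oneinter_3$-$\avoid$, every output function is $\maj$ on exactly three inputs and any two output functions share at most one common input; hence the associated hypergraph $\H_C$ is $3$-uniform and linear, with exactly $m$ edges on $n$ vertices, and the labelling $\phi$ is constantly $\maj_3$ on it. Next, from $m = \Omega(n^2)$ there is a constant $c>0$ with $m \ge cn^2$ for all large $n$; since \cref{thm-wicket} asserts that any $3$-uniform linear hypergraph avoiding the wicket has only $o(n^2)$ edges, for all sufficiently large $n$ the hypergraph $\H_C$ must contain a wicket $W$ as a sub-hypergraph. (For the remaining finitely many values of $n$ we have $n=O(1)$, hence $m \le \binom{n}{3}=O(1)$, and we can just brute-force over all $y \in \{0,1\}^m$.) Because $W$ has only $6$ vertices and $5$ edges, i.e.\ is an object of fixed constant size, we can locate such a copy in $\H_C$ in time $O(m^5)$ by exhaustive search over $5$-tuples of edges, which is polynomial.

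Having found the wicket, I would invoke \cref{lem-wicket} to obtain, in polynomial time, an edge-coloring $\Gamma$ of $W$ that is not a $\maj$-coloring (concretely: colour the three ``row'' edges $R$ and the three ``column'' edges $B$; the counting argument on the six column-vertices in the proof of \cref{lem-wicket} shows no vertex colouring induces this). Feeding $\H' = W$ together with $\Gamma$ into \cref{lem-coloring-to-avoid} then produces, in polynomial time, a string $y \in \{0,1\}^m$ with $y \notin \range(C)$, as required.

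There is no real obstacle here beyond a quantitative one: \cref{thm-wicket} gives only the weak bound $o(n^2)$ rather than a linear bound, so the argument kicks in only once $m$ is at least a (small) constant times $n^2$ — this is exactly the source of the $m=\Omega(n^2)$ hypothesis, and it is the bottleneck that the later sections remove by replacing the wicket with the loose $\mychi$-cycle.
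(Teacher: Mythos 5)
Your proposal is correct and follows essentially the same route as the paper: use the Tur\'an bound for wickets (\cref{thm-wicket}) to guarantee a wicket in the linear $3$-uniform hypergraph $\H_C$, locate it by brute force over $5$-tuples of edges, and combine \cref{lem-wicket} with \cref{lem-coloring-to-avoid} to output a string outside the range. Your extra remark handling the finitely many small $n$ (where the $o(n^2)$ bound has not yet kicked in) is a minor tidiness the paper omits but does not change the argument.
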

%\begin{aproof}{Proof of \cref{cor:majcoloring}}{app:cor:majcoloring}
\begin{proof}
    Let $H_C$ be the $3$-uniform linear hypergraph corresponding to the circuit $C$. Since $m>o(n^2)$, \cref{thm-wicket} guarantees existence of a wicket in $\H$. 
     Furthermore, we can find such a wicket in polynomial time by simply checking across all subsets of $m$ which are of size $5$. The number of subsets is bounded by ${m \choose 5}$. By \cref{lem-coloring-to-avoid} and \cref{lem-wicket} it follows that we can find a $y$ outside the range of $C$ in polynomial time. 
%\end{aproof}
\end{proof}

\begin{appsection}{Alternative Hypergraphs for Solving $\oneinter_3$-$\avoid$}{subsec:structures}
%{app:subsec:structures}
%\subsection*{Alternative Hypergraphs for Solving $\oneinter_3$-$\avoid$:}
%\label{subsec:structures}

In this subsection, we show other choices of hypergraphs that can be used in place of wickets in the algorithm above for solving $\oneinter_3$-$\avoid$.

\paragraph{Alternative 1 : $k$-cage:}
We define the graph class as below:\\

\begin{minipage}{0.6\textwidth}
\begin{definition}%[\textbf{$k$-cage}]
    A $k$-cage is a $k$-uniform linear hypergraph $\H(V,E)$ with edge set $E=E_1\cup E_2$ such that $E_1$ consists of pairwsise disjoint edges which intersect exactly at $w\in V$ and $E_2$ consists of $k-1$ edges that are pairwise disjoint over $V \setminus \{w\}$. The picture shown is the $5$-cage.
\end{definition}    
\end{minipage}
\begin{minipage}{0.05\textwidth}
~
\end{minipage}
\begin{minipage}{0.35\textwidth}
%\centering
    \begin{tikzpicture}[scale=0.6]
        %\draw [step=1cm, gray, very thin] (0,0) grid (12,12);
        \draw[-] (0,0) -- (-2,0) -- (-4,0) -- (2,0) -- (4,0);
        \draw[-] (-3,1) -- (-1.5,1) -- (0,1) -- (1.5,1) -- (3,1);
        \draw[-] (-2,2) -- (-1,2) -- (0,2) -- (1,2) -- (2,2);
        \draw[-] (-1,3) -- (-0.5,3) -- (0,3) -- (0.5,3) -- (1,3);
        
        \draw[-] (0,4) -- (-1,3) -- (-2,2) -- (-3,1) -- (-4,0);
        \draw[-] (0,4) -- (-0.5,3) -- (-1,2) -- (-1.5,1) -- (-2,0);
        \draw[-] (0,4) -- (0,3) -- (0,2) -- (0,1) -- (0,0);
        \draw[-] (0,4) -- (0.5,3) -- (1,2) -- (1.5,1) -- (2,0);
        \draw[-] (0,4) -- (1,3) -- (2,2) -- (3,1) -- (4,0);
        
        \draw[fill=gray] (0,0) circle (0.1) ;
        \draw[fill=gray] (0,1) circle (0.1) ;
        \draw[fill=gray] (0,2) circle (0.1) ;
        \draw[fill=gray] (0,3) circle (0.1) ;
        \draw[fill=gray] (0,4) circle (0.1) ;
        
        \draw[fill=gray] (0.5,3) circle (0.1) ;
        \draw[fill=gray] (1,2) circle (0.1) ;
        \draw[fill=gray] (1.5,1) circle (0.1) ;
        \draw[fill=gray] (2,0) circle (0.1) ;
        
        \draw[fill=gray] (1,3) circle (0.1) ;
        \draw[fill=gray] (2,2) circle (0.1) ;
        \draw[fill=gray] (3,1) circle (0.1) ;
        \draw[fill=gray] (4,0) circle (0.1) ;
        
        \draw[fill=gray] (-0.5,3) circle (0.1) ;
        \draw[fill=gray] (-1,2) circle (0.1) ;
        \draw[fill=gray] (-1.5,1) circle (0.1) ;
        \draw[fill=gray] (-2,0) circle (0.1) ;
        
        \draw[fill=gray] (-1,3) circle (0.1) ;
        \draw[fill=gray] (-2,2) circle (0.1) ;
        \draw[fill=gray] (-3,1) circle (0.1) ;
        \draw[fill=gray] (-4,0) circle (0.1) ;
    \end{tikzpicture}
    %\caption{$5$-cage}
    %\label{fig:cage}
\end{minipage}

\vspace{4mm}
\noindent We will show that there exists an edge-coloring $\Gamma$ that is not a $\maj$-coloring.
\begin{lemma}
\label{lem:alternative1-kcage}
    Let $\H$ be a $k$-cage. There exists an edge coloring $\Gamma:E \rightarrow \{R,B\}$ which is not a $\maj$-coloring. Furthermore, we can find such a coloring in polynomial time.
\end{lemma}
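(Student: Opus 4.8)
The plan is to exhibit one fixed edge-coloring of the $k$-cage and show, by a double counting of blue vertices, that no vertex coloring can induce it via the majority rule. Write $E_1=\{f_1,\dots,f_k\}$ for the edges through the apex $w$, so $f_i=\{w\}\cup\{v_{i,1},\dots,v_{i,k-1}\}$ where the private-vertex sets $\{v_{i,1},\dots,v_{i,k-1}\}$ are pairwise disjoint, and write $E_2=\{g_1,\dots,g_{k-1}\}$ for the remaining edges, where (reading off the figure) $g_j=\{v_{1,j},v_{2,j},\dots,v_{k,j}\}$ collects the $j$-th private vertex of each $f_i$, so that $g_1,\dots,g_{k-1}$ partition the $k(k-1)$ private vertices and $w$ lies in no edge of $E_2$. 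I will take $\Gamma$ to color every edge of $E_1$ with $B$ and every edge of $E_2$ with $R$, and argue this $\Gamma$ is not a $\maj$-coloring; note $\Gamma$ is explicit, so the ``furthermore'' part will be immediate.

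Suppose for contradiction there is a vertex coloring $\Pi:V\to\{R,B\}$ inducing $\Gamma$, and let $t=\lfloor k/2\rfloor+1$ be the majority threshold, so an edge is $B$ iff it contains at least $t$ vertices colored $B$ and is $R$ iff it contains at most $t-1=\lfloor k/2\rfloor$ vertices colored $B$. I will split on $\Pi(w)$. If $\Pi(w)=B$, then each $f_i$ must have at least $t-1=\lfloor k/2\rfloor$ of its $k-1$ private vertices colored $B$, and since the private-vertex sets are disjoint the total number of blue private vertices is at least $k\lfloor k/2\rfloor$; but each $g_j$ contributes at most $\lfloor k/2\rfloor$ blue vertices and the $g_j$ partition the private vertices, so that total is at most $(k-1)\lfloor k/2\rfloor$, which is impossible for $k\ge 2$. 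If instead $\Pi(w)=R$, then each $f_i$ needs at least $t=\lfloor k/2\rfloor+1$ blue vertices among its private vertices, giving at least $k(\lfloor k/2\rfloor+1)$ blue private vertices in all, again contradicting the upper bound $(k-1)\lfloor k/2\rfloor$ from $E_2$. Hence no such $\Pi$ exists, so $\Gamma$ is not a $\maj$-coloring.

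For the algorithmic statement, the $k$-cage has $1+k(k-1)$ vertices and $2k-1$ edges, which is a constant for fixed $k$, so if $\H$ contains a copy it can be located by brute-force search over bounded-size vertex and edge subsets in polynomial time, and the coloring $\Gamma$ above is produced with no search at all; together with \cref{lem-coloring-to-avoid} this is exactly what is needed for $\oneinter_k$-$\avoid$ (in particular $k=3$, where $t=2$ and the two cases give $3>2$ and $6>2$).

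The step to be careful about is twofold: first, the threshold bookkeeping — keeping the floors straight and remembering that a tie is decided $R$ under the stated definition of $\maj$-coloring; second, pinning down the combinatorial structure of the $k$-cage, specifically that $|E_1|=k$ and that each $g_j\in E_2$ meets each $f_i\in E_1$ in exactly one vertex, so that the $E_2$ edges genuinely partition the private vertices. If the intended definition differs here, the same count still works provided $|E_1|\ge k$ (the $\Pi(w)=B$ case is the tighter one, needing $|E_1|>k-1$), and this should be reconciled with the figure before finalizing.
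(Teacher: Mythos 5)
Your proof is correct and follows essentially the same route as the paper: the identical coloring ($B$ on the edges through the apex $w$, $R$ on the disjoint edges of $E_2$) refuted by a counting argument over the $k(k-1)$ vertices of $V\setminus\{w\}$. The only difference is bookkeeping — you case on $\Pi(w)$ and count blue private vertices from both sides, while the paper cases on the parity of $k$ and sums the vertices forced to each color — and your version is, if anything, the cleaner of the two.
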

\begin{proof}
    Let $\H$ be a $k$-cage with $k$ edge-set $E=E_1 \cup E_2$, where $E_1$ is the set of $k$-edges intersecting in vertex $w$ and $E_2$ is the set of $k-1$ pairwise disjoint edges over $V\setminus \{w\}$. We will show t the following $\Gamma$ is not a $\maj$-coloring:
    \[\Gamma(e)= \begin{cases}
        B &\textrm{  if $e \in E_1$} \\
        R &\textrm{  otherwise} 
    \end{cases}\]
    We consider the following cases based on the parity of $k$:
    \begin{description}
        \item{\bf Case $1$:} Suppose $k$ is odd. Observe that for an edge in $E_2$ to be colored $R$ at least $ \left \lfloor \frac{k}{2} \right \rfloor + 1$ of its vertices should be colored $R$. Since the edges in $E_2$ are pairwise disjoint, we have that there should be at least $k \left( \left\lfloor \frac{k}{2} \right \rfloor + 1 \right)$ vertices in $V\setminus \{w\}$ that are colored $R$. Similarly, for any edge in $e \in E_1$ to be colored $B$, we have that at least $\left \lfloor \frac{k}{2} \right \rfloor $ vertices in $e\setminus \{w\}$ should be colored $B$. Since, any two edges in $E_1$ intersect exactly at vertex $w$, we have that at least $\left( \left \lfloor \frac{k}{2} \right \rfloor \right) \left(k-1 \right)$ vertices in $V\setminus \{w\}$ that should be colored $R$. Thus, totally there must be at least $\left( \left \lfloor \frac{k}{2} \right \rfloor \right) \left(k-1 \right)+\left( \left \lfloor \frac{k}{2} \right \rfloor+1 \right) \left(k\right) = k^2 +k -\left \lfloor k/2 \right \rfloor > k(k-1)$ vertices in $V \setminus \{w\} $. Since, $|V \setminus \{w\}| = k(k-1) $ we get a contradiction.
        
        \item{\bf Case $2$:} Suppose $k$ is even. As earlier for the edges in $E_2$ to be colored $R$ there must be at least $k \left(  \frac{k}{2}  + 1 \right)$ vertices in $V\setminus \{w\}$ that are colored $R$. Since $k$ is odd, for any edge in $e \in E_1$ to be colored $B$, we have that at least $ \frac{k}{2} -1  $ vertices in $e\setminus \{w\}$ should be colored $B$. Hence, at least $\left( \frac{k}{2} -1  \right) \left(k-1 \right)$ vertices in $V\setminus \{w\}$ should be colored $R$. Thus, totally there must be at least $\left(  \frac{k}{2} +1  \right) \left(k \right)+\left( \frac{k}{2} -1  \right) \left(k-1 \right) = k^2 - k/2  -1$ vertices in $V \setminus \{w\} > k(k-1)$. Since, $|V \setminus \{w\}| = k(k-1) $ we get a contradiction.
    \end{description}
     In either case, we show that $\Gamma$ is not a $\maj$-coloring of $\H$.
\end{proof}

\paragraph{Alternative 2 : Weak Fano plane:}
This structure is similar to the Fano plane that was defined in \cref{sec:prelims} with one hyperedge removed.\\

\hspace{-0.6cm}
\begin{minipage}{0.7\textwidth}
\begin{definition}%[\textbf{Weak Fano plane}]
    A weak Fano plane is a $3$-uniform linear hypergraph which is isomorphic to the hypergraph $\H(V,E)$ with vertex set $V=[7]$ and edge set $$E=\{\{1,2,3\},\{3,4,5\},\{1,5,6\},\{3,6,7\},\{2,5,7\},\{2,4,6\}\}$$
\end{definition}
\end{minipage}
\begin{minipage}{0.05\textwidth}
~
\end{minipage}
\begin{minipage}{0.25\textwidth}
\begin{tikzpicture}[scale=0.4]
% \draw [step=1cm, lightgray, very thin] (0,0) grid (10,10);
\draw[-] (5,9) -- (1.5,3) -- (8.5,3) --cycle;
%\draw[-] (5,9) -- (5,3);
\draw[-] (3.25, 6) -- (8.5,3);
\draw[-] (6.75, 6) -- (1.5,3);

\draw[] (5,5) circle (2);
\draw[fill=gray] (5,9) circle (0.18) ;
\draw[fill=gray] (3.25,6) circle (0.18) ;
\draw[fill=gray] (6.75,6) circle (0.18) ;
\draw[fill=gray] (5,5) circle (0.18) ;
\draw[fill=gray] (5,3) circle (0.18) ;
\draw[fill=gray] (1.5,3) circle (0.18) ;
\draw[fill=gray] (8.5,3) circle (0.18) ;
\end{tikzpicture}
\end{minipage}

\vspace{4mm}
\noindent We will show that there $\Gamma:E \rightarrow \{R,B\}$ which is not a $\maj$-coloring. 
\begin{lemma}
\label{lem:alternative2-weakfanoplane}
    Let $\H$ be a weak Fano plane hypergraph. Then there is an edge-coloring $\Gamma: E \rightarrow \{R,B\}$ which is not a $\maj$-coloring. Furthermore, we can find this coloring in polynomial time. 
\end{lemma}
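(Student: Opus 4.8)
The plan is to exhibit one explicit edge-coloring of the weak Fano plane and verify directly that no majority vertex-coloring induces it. Recall that for a $3$-element edge $e$, the rule ``$\Gamma(e)=B$ under $\Pi$'' means at least $\lfloor 3/2\rfloor+1=2$ of the three vertices of $e$ receive color $B$ (equivalently $\Gamma(e)=R$ means at least two receive $R$), which is the $k=3$ convention already used in \cref{lem:alternative1-kcage}. Using the labeling $V=[7]$ and $E=\{\{1,2,3\},\{3,4,5\},\{1,5,6\},\{3,6,7\},\{2,5,7\},\{2,4,6\}\}$ from the definition, I would take $\Gamma$ to be $B$ on the two edges through vertex $1$, namely $\{1,2,3\}$ and $\{1,5,6\}$, and $R$ on the other four edges $\{3,4,5\},\{3,6,7\},\{2,5,7\},\{2,4,6\}$.

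To see that this $\Gamma$ is not a $\maj$-coloring, I would argue as follows. Suppose $\Pi\colon V\to\{R,B\}$ induces $\Gamma$ and set $S=\Pi^{-1}(B)$. Since $\Gamma(\{1,2,3\})=B$, at least two of $1,2,3$ lie in $S$, so $\{2,3\}\cap S\neq\emptyset$; similarly $\{5,6\}\cap S\neq\emptyset$. Choose $a\in\{2,3\}\cap S$ and $b\in\{5,6\}\cap S$. The pair $\{a,b\}$ is one of $\{2,5\},\{2,6\},\{3,5\},\{3,6\}$, and by inspection each of these lies inside one of the four $R$-edges: $\{2,5\}\subseteq\{2,5,7\}$, $\{2,6\}\subseteq\{2,4,6\}$, $\{3,5\}\subseteq\{3,4,5\}$, $\{3,6\}\subseteq\{3,6,7\}$. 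Calling that edge $e$, we get $\{a,b\}\subseteq e\cap S$, hence $|e\cap S|\ge 2$, which forces $\Gamma(e)=B$ --- contradicting $\Gamma(e)=R$. So no $\Pi$ realizes $\Gamma$. For the polynomial-time claim, the weak Fano plane has constant size, so $\Gamma$ can simply be hard-coded (or found by brute force, testing all $2^{6}$ edge-colorings against all $2^{7}$ vertex-colorings) in constant time.

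I do not expect a genuine obstacle here: once the right $\Gamma$ is chosen, the argument is a short counting observation exploiting that $\{1,2,3\}$ and $\{1,5,6\}$ together cover the four ``cross pairs'' between $\{2,3\}$ and $\{5,6\}$, each of which is buried in an $R$-edge. The only points requiring care are fixing the majority-threshold convention (threshold $2$ over the $k=3$ vertices of an edge, exactly as in the $k$-cage lemma) and checking the four containments above --- which amounts to recalling that the six edges of the weak Fano plane cover every pair of vertices except the three pairs inside the deleted line $\{1,4,7\}$, none of which is a cross pair.
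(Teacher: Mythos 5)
Your proposal is correct and follows essentially the same route as the paper: the paper uses the identical coloring with the roles of $R$ and $B$ swapped (it colors the two edges through vertex $1$ with $R$ and the remaining four with $B$), and derives the same contradiction from the fact that every cross pair between $\{2,3\}$ and $\{5,6\}$ is contained in one of the four remaining edges. The only cosmetic difference is that you fold the paper's two cases on the color of vertex $1$ into a single uniform counting step, which does not change the substance of the argument.
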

\begin{proof}
Let $E_1=\{e|1\in e\}$. We will show that the following $\Gamma$ is not a $\maj$-coloring.  
    \[\Gamma(e)=\begin{cases}
        R &\textrm{  if $e \in E_1$}\\
        B &\textrm{  otherwise}
    \end{cases} \]We consider the following cases:
    \begin{description}
        \item{\bf Case $1$:} Suppose $\Pi(1)=B$. Then for any edge $e\in E_1$ to be colored $R$, it must be that the other two vertices in $e$ should be $R$. In particular, we have $\Pi(3)=\Pi(5)=R$. This implies $\Gamma(\{3,4,5\})=R$ which is a contradiction. 
        \item{\bf Case $2$:} Suppose $\Pi(1)=R$. Then for an edge $e\in E_1$ to be colored $R$ there must be a vertex pair $\{i,j\}\in \{\{2,5\},\{2,6)\},\{3,5\},\{3,6\}\}$ such that $\Pi(i)=\Pi(j)=R$. We will obtain a contradiction in each case: Suppose $\{i,j\}=\{2,5\}$ then $\Gamma(\{2,5,7\})=R$ which is a contradiction. For $\{i,j\}$ being one of $\{2,6\},\{3,5\},\{3,6\}$ we similarly obtain contradictions for the coloring of the edges $\{2,4,6\},\{3,4,6\},\{3,7,6\}$ respectively. 
    \end{description}
    Hence, $\Gamma$ is not a $\maj$-coloring of $\H$.
\end{proof}

\paragraph{Alternative 3 : $(k,\ell)$-odd butterfly:}
A $3$-uniform loose cycle of length $k$, $C:=u_1e_1u_2\ldots u_ke_ku_1$ is a hypergraph over the vertex set $\{u_1, \ldots, u_{k}\}\cup \{v_1,\ldots v_k\}$ such that $u_i,u_{i+1} \in e_i$ and $e_i=(u_i,v_i,u_{i+1})$ where $u_{k+1}=u_1$. 

\hspace{-0.6cm}
\begin{minipage}{0.68\textwidth}
\begin{definition}%[\textbf{$(k,\ell)$-odd butterfly}]
    A $(k,\ell)$-odd butterfly is a $3$-uniform hypergraph over $E=\{e_1\ldots e_k\} \cup \{f_1,\ldots f_\ell \}$ consisting of two loose cycles $C_1:=$ $u_1e_1u_2e_2\ldots u_ke_ku_1$ and $C_2:=u_1f_1v_2f_2\ldots v_\ell f_\ell u_1$ where $k,\ell$ are odd. The figure demonstrates a $(3,3)$-butterfly.
\end{definition}
\end{minipage}
\begin{minipage}{0.05\textwidth}
~
\end{minipage}
    \begin{minipage}{.2\textwidth}
        \centering
    \begin{tikzpicture}[scale=0.6]
        %\draw [step=1cm, gray, very thin] (0,0) grid (12,12);
        \draw[-] (0,0) -- (0,2) -- (0,4);
        \draw[-] (4,0) -- (4,2) -- (4,4);
        \draw[-] (2,2) -- (1,3) -- (0,4);
        \draw[-] (2,2) -- (1,1) -- (0,0);
        \draw[-] (2,2) -- (3,3) -- (4,4);
        \draw[-] (2,2) -- (3,1) -- (4,0);
        
        \draw[fill=gray] (0,0) circle (0.1) ;
        \draw[fill=gray] (0,2) circle (0.1) ;
        \draw[fill=gray] (0,4) circle (0.1) ;
        \draw[fill=gray] (2,2) circle (0.1) ;
        \draw[fill=gray] (1,3) circle (0.1) ;
        \draw[fill=gray] (1,1) circle (0.1) ;
        \draw[fill=gray] (3,1) circle (0.1) ;
        \draw[fill=gray] (4,2) circle (0.1) ;
        \draw[fill=gray] (0+4,0) circle (0.1) ;
        \draw[fill=gray] (0+2,2) circle (0.1) ;
        \draw[fill=gray] (0+4,4) circle (0.1) ;
        \draw[fill=gray] (1+2,3) circle (0.1) ;
        \draw[fill=gray] (1+2,1) circle (0.1) ;
        \end{tikzpicture}
    %\caption{$(3,3)$-odd butterfly }
    %\label{fig:butterly}
    \end{minipage}

\vspace{3mm}
 \noindent In the following lemma we show an edge coloring function $\Gamma$ which is not a $\maj$-coloring. 
\begin{lemma}
\label{lem:alternative3-butterfly}
    Let $\H$ be a $(k,\ell)$-odd butterfly. Then there is an edge coloring $\Gamma:E(\H) \rightarrow \{R,B\}$ which is not $\maj$-coloring. Furthermore, we can find this coloring in polynomial time.
\end{lemma}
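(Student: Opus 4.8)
The plan is to exhibit an explicit edge-coloring $\Gamma$ of a $(k,\ell)$-odd butterfly $\H$ and show that no vertex-coloring $\Pi$ can realize it as a $\maj$-coloring, then invoke \cref{lem-coloring-to-avoid}. The natural candidate, mirroring \cref{lem:alternative1-kcage} and \cref{lem-wicket}, is to color every edge of the two loose cycles the same way and derive a contradiction by a counting/parity argument on the shared apex vertex $u_1$. Concretely, recall that in a $3$-uniform hypergraph a single edge is colored $B$ iff at least $2$ of its $3$ vertices are $B$ (equivalently, the edge color is the majority of its three vertex colors). So a loose cycle $u_1 e_1 u_2 e_2 \cdots u_k e_k u_1$ behaves exactly like a graph cycle $C_k$ on the ``spine'' vertices $u_1,\dots,u_k$ where each edge $e_i=\{u_i,v_i,u_{i+1}\}$ can be set to $R$ or $B$ as we wish \emph{unless} $\Pi(u_i)=\Pi(u_{i+1})$, in which case the private vertex $v_i$ is irrelevant and $\Gamma(e_i)$ is forced to equal that common color. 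This is the standard obstruction for odd cycles.

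**First I would** formalize this ``forcing'' observation as a sub-claim: if $\Pi(u_i)=\Pi(u_{i+1})=c$ then $\Gamma(e_i)=c$ is forced, and if $\Pi(u_i)\neq\Pi(u_{i+1})$ then $\Gamma(e_i)$ can be either color (choose $\Pi(v_i)$ to break or make the tie). **Then** I would define $\Gamma$ on the odd cycle $C_1=u_1 e_1 \cdots u_k e_k u_1$ to be the ``properly $2$-colored'' pattern one would use on a graph odd cycle — say, alternate $R,B,R,B,\dots$ around the $k$ edges $e_1,\dots,e_k$ — and symmetrically alternate on $C_2=u_1 f_1 v_2 \cdots v_\ell f_\ell u_1$. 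The point of oddness: in any proper-looking edge-coloring of an odd cycle of edges, there must exist an adjacent pair of spine vertices receiving the same $\Pi$-color (since a $2$-coloring of the $k$ spine vertices $u_1,\dots,u_k$ around an odd cycle cannot be proper), and at that edge $\Gamma$ is forced to the common color; choosing $\Gamma$'s alternation so that this forced value \emph{conflicts} with the alternating pattern gives the contradiction. A cleaner way to package this, avoiding case analysis on where the monochromatic edge lands: color \emph{all} $k$ edges of $C_1$ with $B$ and \emph{all} $\ell$ edges of $C_2$ with $R$. For $C_1$ entirely $B$: each $e_i$ needs $\ge 2$ blue vertices among $\{u_i,v_i,u_{i+1}\}$; summing blue-demand over the cycle and using that the private $v_i$'s are disjoint, a short count forces at least $\lceil k/2\rceil+\text{(something)}$ of the $u_i$'s to be blue; by symmetry $C_2$ entirely $R$ forces most $u_i$'s—and in particular the shared vertex $u_1$—to be red, a direct contradiction at $u_1$. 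I would pick whichever of these two framings yields the cleanest counting; the odd-cycle parity is what makes it go through in both.

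**The main obstacle I expect** is getting the counting tight enough to actually force a contradiction \emph{at the single shared vertex} $u_1$ rather than merely a global parity clash — the two cycles only overlap in one vertex, so the argument must route the contradiction through $u_1$ (or through the impossibility of simultaneously satisfying both odd cycles' monochromatic demands given their lone common vertex). Concretely, in the ``$C_1$ all $B$, $C_2$ all $R$'' framing, I need: monochromatic-$B$ loose cycle of odd length $k$ $\Rightarrow$ the spine vertices $u_1,\dots,u_k$ (cyclically) must have \emph{at least two consecutive} blues at every position where the private vertex alone can't carry the weight, which by an odd-length argument forces $u_1$ (WLOG) to be blue in at least one valid extension but we must rule out \emph{all} extensions — so I'd instead argue: a loose cycle of edges all colored $B$ forces, by the forcing-claim, that the spine $2$-coloring restricted to runs of same-colored $u_i$'s is consistent with all edges blue only if $\ge \lceil k/2 \rceil + 1$ spine vertices are blue (odd cycle: can't alternate perfectly), hence a strict majority; symmetrically $\ge \lceil \ell/2\rceil+1$ spine vertices of $C_2$ are red, strict majority; but since these two spines share exactly one vertex, I need the count $|\{u_i : \Pi(u_i)=B\}| > k/2$ together with an analogous red-majority on the disjoint-except-$u_1$ set $\{v_2,\dots,v_\ell\}\cup\{u_1\}$ to collide, which it does precisely because oddness upgrades ``$\ge$ half'' to ``$>$ half'' and the two majorities both lay claim to $u_1$ while also over-committing the private vertices — **working this counting out carefully is the crux**, but it is routine given the forcing lemma and the parity of $k,\ell$. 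Finally, I would note the copy of the $(k,\ell)$-odd butterfly (being of fixed size once $k,\ell$ are fixed, or findable in polynomial time as a union of two short loose cycles through a common vertex) can be located efficiently, and conclude via \cref{introlemma-coloring-to-avoid}.
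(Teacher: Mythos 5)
There is a genuine gap. Your ``cleaner packaging'' (all edges of $C_1$ colored $B$, all edges of $C_2$ colored $R$) does not work: that edge-coloring \emph{is} a valid $\maj$-coloring. Color every spine vertex $u_1,\dots,u_k$ of $C_1$ blue, every spine vertex $v_2,\dots,v_\ell$ of $C_2$ red, and every private vertex of $C_2$ red; then each $e_i$ contains two blue spine vertices and each $f_j$ contains at least two red vertices (for $f_1$ and $f_\ell$, the private vertex plus $v_2$, resp.\ $v_\ell$, compensate for the blue $u_1$). Your counting claim that a monochromatic-$B$ loose odd cycle forces at least $\lceil k/2\rceil+1$ blue spine vertices is false: each edge only needs one blue endpoint on the spine (the private vertex supplies the second blue), so a vertex cover of the spine cycle, of size $\lceil k/2\rceil$ and not necessarily containing $u_1$, suffices. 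Hence no contradiction can be routed through $u_1$ from the monochromatic demands; the private vertices absorb all the slack, which is precisely why loose cycles differ from graph cycles here.

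Your alternating framing is the right direction, but it is under-specified at the crucial point and the argument you sketch for it fails. A \emph{single} odd loose cycle with alternating edge colors is $\maj$-colorable (e.g.\ $k=3$, $\Gamma(e_1)=\Gamma(e_3)=B$, $\Gamma(e_2)=R$: take $u_1,u_2$ blue, $u_3$ red, and set the private vertices of $e_2,e_3$ to $R,B$ respectively), so ``the forced value conflicts with the alternating pattern'' cannot be guaranteed within one cycle -- the adversary places the monochromatic spine pair where it is harmless. What the paper does, and what your proof is missing, is to fix \emph{opposite phases} on the two cycles at the shared vertex ($\Gamma(e_1)=B$ but $\Gamma(f_1)=R$) and use the propagation claim (if $\Pi(u_i)=c$ and $\Gamma(e_i)=\overline{c}$ then both other vertices of $e_i$, in particular $u_{i+1}$, are $\overline{c}$): whichever color $\Pi(u_1)$ takes, the first edge of one of the two cycles disagrees with it, the forcing propagates all the way around that cycle, and oddness of $k$ (resp.\ $\ell$) makes the wrap-around force $\Pi(u_1)$ to the opposite color, a contradiction. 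Without specifying this phase relation and running the propagation around the appropriate cycle, your argument does not close.
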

\begin{proof}
    Let $C_1,C_2$ be the two loose cycles of order $k$ in $\H$. Let $C_1:=$ $u_1e_1u_2e_2\ldots u_ke_ku_1$ and $C_2:= u_1f_1v_2f_2\ldots v_\ell f_\ell u_1$. We will show that the following edge-coloring $\Gamma:E \rightarrow \{R,B\}$ is not a $\maj$-coloring.
    \[\Gamma(e_i)=\begin{cases}
        R, &\textrm{     if $i$ is even} \\
        B, &\textrm{      otherwise}
    \end{cases}\]
    \[\Gamma(f_i)=\begin{cases}
        R, &\textrm{     if $i$ is odd} \\
        B, &\textrm{      otherwise}
    \end{cases}\]
    where $i \in [k]$. First, we will show the following claim:
    \begin{claim}\label{clm-color}
        Let $\Gamma: E \rightarrow \{R,B\}$ be a $\maj$-coloring under vertex coloring $\Pi:V\rightarrow \{R,B\}$ such that $\Pi(u_i)=c$ and $\Gamma(e_i)=\overline{c}$  where $c\in \{R,B\}$ and $\overline{c}=\{R,B\} \setminus \{c\}$. Then $\Pi(u_{i+1})=\overline{c}$.
    \end{claim}
    \begin{proof}
        Wlog let $\Pi(u_i)=R$. Since $\Gamma$ is a $\maj$-coloring satisfying $\Gamma(e_i)=B$, the other two vertices in $e_i$ must be $B$. In particular we have that $\Pi(u_{i+1})=B$. 
    \end{proof}
    Suppose $\Pi(u_1)=R$. Then by claim \ref{clm-color} we have that $u_k=R$. We have $\Gamma(e_k)=B$. This implies $\Pi(u_1)=B$ which a contradiction. Similarly, if $\Pi(u_1)=B$ then we can repeatedly apply the claim \ref{clm-color} for $C_2$ to obtain a contradiction. 
\end{proof}

\paragraph{Alternative 4 : $(k,\ell)$-odd kite:}
We define the hypergraph as follows:\\

\hspace{-0.6cm}
\begin{minipage}{0.75\textwidth}
\begin{definition}%[\textbf{$(k,\ell)$-odd kite}]
    A $(k,\ell)$-odd kite is a $3$-uniform linear hypergraph with edge set $E=\{e_1,\ldots e_k\} \cup \{f_1, \ldots f_k\} \cup \{e'\}$ consisting of two loose odd cycles $C_1=u_1e_1u_2\ldots u_ke_ku_1$, $C_2=u_1f_1v_2f_2\ldots f_{\ell -1}u_kf_\ell u_1$ where $e_k = f_\ell =\{u_0,u_1,u_k\}$, $e'=\{u_0,u_i,v_j\}$ such that $i\in \{2,\ldots k-1\}$, $j \in \{2,3 \ldots \ell -1\} $ and $i \equiv j \mod 2$. The figure shows a $(3,3)$-odd kite.
\end{definition}
\end{minipage}
\begin{minipage}{0.03\textwidth}
~
\end{minipage}
\begin{minipage}{.2\textwidth}
    \begin{tikzpicture}[scale=0.6]
       \draw[-] (0,2) -- (-2,0) -- (-1,1);
        \draw[-] (0,2) -- (2,0) -- (1,1);
        \draw[-] (0,-2) -- (2,0) -- (1,-1);
        \draw[-] (-2,0) -- (0,-2) -- (-1,-1);
        \draw[-] (-2,0) -- (0,0) -- (2,0);
        \draw[-] (0,-2) -- (0,0) -- (0,2);
        
        \draw[fill=gray] (0,0) circle (0.1) ;
        \draw[fill=gray] (0,2) circle (0.1) ;
        \draw[fill=gray] (-2,0) circle (0.1) ;
        \draw[fill=gray] (2,0) circle (0.1) ;
        \draw[fill=gray] (0,-2) circle (0.1) ;
        
        \draw[fill=gray] (1,1) circle (0.1) ;
        \draw[fill=gray] (1,-1) circle (0.1) ;
        \draw[fill=gray] (-1,1) circle (0.1) ;
        \draw[fill=gray] (-1,-1) circle (0.1) ;
        \end{tikzpicture}
    \label{fig:butterly}
    \end{minipage}

\vspace{3mm}
\noindent We will show an edge coloring of $(k,\ell)$ which is not a valid $\maj$-coloring.
\begin{lemma}\label{lem:alternative4-odd-kite}
    Let $\H$ be a $(k,\ell)$-odd kite. Then there exists an edge-coloring $\Gamma:E \rightarrow \{R,B\}$ which is not a valid $\maj$-coloring. Furthermore, we can find such a coloring in polynomial time.
\end{lemma}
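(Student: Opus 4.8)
The plan is to exhibit one explicit edge-coloring $\Gamma$ of the $(k,\ell)$-odd kite and argue that no vertex coloring $\Pi$ can realize it as a $\maj$-coloring; by \cref{lem-coloring-to-avoid} this gives the polynomial-time algorithm, and since $\Gamma$ is read straight off the kite the ``furthermore'' clause is immediate. The only tool I need is the elementary forcing fact already underlying the butterfly argument, used in either direction along a loose cycle: if a $3$-edge $e=\{x,y,z\}$ is $\maj$-colored by $\Pi$ and one of $x,y,z$ carries the color $\overline{\Gamma(e)}$, then the other two carry $\Gamma(e)$. Writing the corners of $C_1$ as $u_1,\dots,u_k$ with $e_t=\{u_t,a_t,u_{t+1}\}$ for $t<k$ and $e_k=\{u_k,u_0,u_1\}$, and the corners of $C_2$ as $u_1,v_2,\dots,v_{\ell-1},u_k$ with path edges $f_1,\dots,f_{\ell-1}$ and $f_\ell=e_k$, I would set $\Gamma(e_t)=B$ for odd $t$ and $R$ for even $t$ (so $\Gamma(e_k)=B$, using $k$ odd), $\Gamma(f_t)=R$ for odd $t$ and $B$ for even $t$ with $1\le t\le \ell-1$, and $\Gamma(e')=R$ if $i$ is odd, $\Gamma(e')=B$ if $i$ is even.

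I would then suppose $\Pi$ realizes $\Gamma$ and split on $\Pi(u_1)$. If $\Pi(u_1)=R$: the forcing fact propagates around the odd cycle $C_1$ (along it the edge colors and the forced corner colors alternate in lockstep), so $\Pi(u_t)=R$ for odd $t$; in particular $\Pi(u_k)=R$, and then $\Gamma(e_k)=B$ forces $\Pi(u_1)=B$, a contradiction --- and this case never touches $e'$. If $\Pi(u_1)=B$: now $\Gamma(e_1)=B$ agrees with $\Pi(u_1)$, so nothing propagates along $C_1$ out of $u_1$; instead $\Gamma(f_1)=R$ starts a forcing chain along the path $f_1,\dots,f_{\ell-1}$, giving $\Pi(v_t)=B$ exactly for odd $t$ and, since $\ell$ is odd, $\Pi(u_k)=B$. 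Feeding $\Pi(u_k)=B$ back along $e_{k-1},\dots,e_1$ forces $\Pi(u_t)=B$ exactly for odd $t$. Because $i\equiv j\pmod 2$, the two vertices $u_i$ and $v_j$ of $e'$ (those other than $u_0$) now carry a common color $c$, so under $\Pi$ the edge $e'$ has majority color $c$ regardless of $\Pi(u_0)$; this contradicts $\Gamma(e')=\overline c$. Either branch yields a contradiction, so $\Gamma$ is not a $\maj$-coloring.

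The one place that needs real thought is the design of $\Gamma$. For the butterfly the two odd cycles meet in a single vertex, which lets one color them so the edges at that vertex are $B$ in one cycle and $R$ in the other, ruling out both colors of the anchor; that trick is unavailable here because $C_1$ and $C_2$ share the entire edge $e_k=f_\ell$. The fix is asymmetric: $C_1$ alone disposes of $\Pi(u_1)=R$, while $\Pi(u_1)=B$ is handled by a two-stage forcing chain --- forward along $C_2$'s path to reach $u_k$, then backward along $C_1$'s path to reach $u_i$ --- that is finally closed by the chord $e'$; and the parity hypothesis $i\equiv j\pmod 2$ is exactly what guarantees $\Pi(u_i)=\Pi(v_j)$, which is what lets $e'$ deliver the contradiction. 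Checking that none of these forcing chains stalls is routine, since along each relevant path the edge colors and the forced vertex colors alternate in lockstep, so every step starts from a corner whose color is opposite to the next edge.
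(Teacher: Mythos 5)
Your proof is correct and follows essentially the same route as the paper: the same alternating colorings on the two loose odd cycles, the same one-step forcing fact (\cref{clm-color}), and a case analysis on the color of the shared corner(s), with the chord $e'$ and the parity hypothesis $i \equiv j \pmod 2$ closing the remaining case. The one substantive difference is your coloring of $e'$ ($R$ when $i$ is odd, $B$ when $i$ is even), which is the opposite of the paper's --- and your choice is in fact the correct one. In the paper's Case 3 (both $u_1$ and $u_k$ colored $B$), the forcing chains give $\Pi(u_t)=B$ exactly for odd $t$ and $\Pi(v_t)=B$ exactly for odd $t$ (not ``iff even'' as stated there, which would contradict the case hypothesis $\Pi(u_1)=B$ itself, since $k$ is odd and the only forcing seeds are $u_1$ into $f_1$ and $u_k$ into $e_{k-1}$); hence the majority color of $e'=\{u_0,u_i,v_j\}$ is $B$ iff $i$ is odd, which agrees with the paper's stated $\Gamma(e')$. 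Indeed, coloring the corners and middle vertices as forced and taking $u_0$ arbitrary realizes the paper's literal $\Gamma$ on every edge of the kite, so the coloring written in the paper is actually a $\maj$-coloring; your flipped $\Gamma(e')$ repairs this parity slip and delivers the claimed contradiction, so the lemma stands with your witness coloring.
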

\begin{proof}
    We will show that the following edge-coloring $\Gamma$ is not a valid edge-coloring:
    \[
    \Gamma(e)= \begin{cases}
                B, &\textrm{ if $e=e_i\in \{e_1, \ldots e_k \}$ and $i$ is odd} \\
                R, &\textrm{ if $e=e_i\in \{e_1, \ldots e_k \}$ and $i$ is even} \\
                R, &\textrm{ if $e=f_j\in \{f_1, \ldots f_{\ell -1} \}$ and $j$ is odd} \\
                B, &\textrm{ if $e=f_j\in \{f_1, \ldots f_{\ell -1} \}$ and $j$ is even} \\
                R, &\textrm{ if $e=e'=\{u_0,u_i,v_j\}$ and $i$ is even} \\
                B, &\textrm{ if $e=e'=\{u_0,u_i,v_j\}$ and $i$ is odd}
    \end{cases}
    \]
    We consider the following cases based on the colors of vertices $u_1,u_k$:
    \begin{description}
        \item{\bf Case $1$:} Suppose $\Pi(u_1)=\Pi(u_k)=R$. Since two of its vertices are colored $R$, this implies that color of edge $e_k$ is $R$, which is a contradiction.
        \item{\bf Case $2$:} Wlog let $\Pi(u_1)=R, \Pi(u_k)=B$. By repeated application of \cref{clm-color} in cycle $C_1$ we get that $\Pi(u_k)=R$, which is again a contradiction. The other can be handled similarly by applying \cref{clm-color} on cycle $C_2$.
        \item{\bf Case $3$:} Suppose $\Pi(u_1)=\Pi(u_k)=B$. By repeatedly applying \cref{clm-color} in cycles $C_1,C_2$ we get that $\Pi(u_i)= B \iff i \equiv 0 \mod 2$ and $\Pi(v_j)= B \iff j \equiv 0 \mod 2$. This implies that $\Gamma(e')=B \iff i \equiv 0 \mod 2$, which is a contradiction.
    \end{description}
    Since, in either case, we obtain a contradiction, $\Gamma$ is not a valid $\maj$-coloring.
\end{proof}
\end{appsection}
\subsection{Polynomial time Algorithm for $\oneinter_k$-$\avoid$}

In this subsection, we show a second application of the framework proposed in \cref{introlemma-coloring-to-avoid}. Let $C:\{0,1\}^n \rightarrow \{0,1\}^m$ be a multi-output circuit such that each output function computes $\maj$ of $k$ input bits. Let $\H$ be a $k$-uniform hypergraph obtained from $C$ as in the case of \cref{introlemma-coloring-to-avoid}. Since every pair of output functions intersects in at most one input variable, we have that $\H$ is a linear $k$-uniform hypergraph. 

Recall \cref{thm-grid-turan}, which argued that a $k$-uniform hypergraph that has more than $\frac{n(n-1)}{k(k-1)}$ hyperedges must have a $k \times k$ grid contained in it.
We will show that if $\H$ contains a $k \times k$- grid then we can solve range avoidance for the corresponding circuit in polynomial time and hence this will imply an algorithm for $\oneinter_k$-$\avoid$ by using the same framework as \cref{introlemma-coloring-to-avoid}.

\begin{lemma}\label{lem-grid-comb}
    There exists an edge-coloring $\Gamma:E \rightarrow \{R,B\}$ of $k\times k$-grid which is not a $\maj$-coloring. Furthermore, we can find $\Gamma$ in polynomial time.
\end{lemma}
\begin{proof}
    Let $\H(V,E)$ be a $k \times k$ grid. Let $E=E_1 \cup E_2$ where $E_1,E_2$ are the sets of edges corresponding to the rows and columns of the grid, respectively. We define an edge-coloring
    
    $\Gamma(e)= \begin{cases}
        R &\textrm{ if $e \in E_1$}\\
        B &\textrm{ otherwise}
    \end{cases}$
    
    We will show that $\Gamma$ is not a valid $\maj$-coloring. We consider the following cases based on the parity of $k$:
    \begin{description}
        \item[Case 1] Suppose $k$ is odd. For each $e\in E_1$ to be colored $R$, at least $ \left \lfloor \frac{k}{2} \right \rfloor + 1$ of its vertices should be colored $R$. Since the edges in $E_1$ are pairwise disjoint, at least $ k \left ( \left \lfloor \frac{k}{2} \right \rfloor + 1 \right )$ distinct vertices should be colored $R$. Similarly for edges in $E_2$, we have that at least $ k \left ( \left \lfloor \frac{k}{2} \right \rfloor + 1 \right )$ vertices in $V$ should be colored $B$. Hence, there should be at least $ 2k \left (\left \lfloor \frac{k}{2} \right \rfloor + 1 \right)>k^2$ vertices in $V$, which is a contradiction. 
        \item[Case 2] Suppose $k$ is even. By the previous argument, for edges in $E_2$ to get color $B$ at least $k \left (  \frac{k}{2} +1 \right )$ vertices should be colored $B$. For edges in $E_1$ at least $ k \left(  \frac{k}{2} \right ) $ vertices should be colored $R$. So totally, there are at least $ k \left ( k + 1 \right)> k^2$ vertices in $V$, which is again a contradiction. 
    \end{description} 
    In either case, we obtain a contradiction. Hence, $\Gamma$ is not a $\maj$-coloring.
\end{proof}

By proposition \ref{thm-grid-turan} and lemma \ref{lem-grid-comb}, we have the following result:
\begin{theorem}\label{thmoneinter}
    Let $C:\{0,1\}^n \rightarrow \{0,1\}^m$ be an instance of $\oneinter_k$-$\avoid$ and $m>\frac{n(n-1)}{k(k-1)}$. There is a polynomial time algorithm to find a string outside the range of $C$. 
\end{theorem}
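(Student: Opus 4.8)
The plan is to invoke the framework of \cref{lem-coloring-to-avoid} together with the Turán-type bound of \cref{thm-grid-turan} and the combinatorial fact of \cref{lem:alternative-grid}. First I would observe that since $C$ is an instance of $\oneintermajk$, every output function $f_i$ is the majority of exactly $k$ input bits, and any two output functions share at most one input variable; hence the associated hypergraph $\H_C$ (with vertex set $[n]$ and one hyperedge per output gate) is a $k$-uniform linear hypergraph with exactly $m$ edges. The hypothesis $m > \frac{n(n-1)}{k(k-1)}$ is precisely the threshold in \cref{thm-grid-turan}, so $\H_C$ must contain a $k\times k$ grid $G_k$ as a sub-hypergraph.

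Next I would argue that this grid can be located efficiently: a $k\times k$ grid is a fixed-size structure ($k^2$ vertices and $2k$ edges, with $k$ a constant), so exhaustively searching over all $\binom{m}{2k}$ choices of $2k$ edges and checking the incidence pattern takes polynomial time. Once a copy of $G_k$ is found, \cref{lem:alternative-grid} supplies, in polynomial time, an explicit edge-coloring $\Gamma : E(G_k) \to \{R,B\}$ (color the $k$ row-edges $R$ and the $k$ column-edges $B$) which is not a $\maj$-coloring — the counting argument there shows that realizing this edge-coloring would force more than $k^2$ vertices of the grid to receive prescribed colors, a contradiction since $|V(G_k)| = k^2$.

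Finally, I would apply \cref{lem-coloring-to-avoid} with $\H' = G_k$ and this edge-coloring $\Gamma$: since $G_k$ is a sub-hypergraph of $\H_C$ admitting an edge-coloring that is not a $\maj$-coloring, the corollary yields, in polynomial time, a string $y \in \{0,1\}^m$ outside $\range(C)$. Concretely, one sets $y_i = 0$ on grid-edges colored $R$, $y_i = 1$ on grid-edges colored $B$, and $y_i$ arbitrarily on non-grid edges; the equivalence in \cref{clm-equivalence} guarantees $y \notin \range(C)$. The only steps requiring care are checking that the reduction hypotheses match (uniformity and linearity of $\H_C$, and that $m$ exceeds exactly the stated threshold) and confirming the polynomial-time search for $G_k$; neither is a genuine obstacle since everything is assembled from the already-established lemmas, so there is no substantive difficulty here — the theorem is essentially a direct composition of \cref{thm-grid-turan}, \cref{lem:alternative-grid}, and \cref{lem-coloring-to-avoid}.
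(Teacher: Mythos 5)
Your proposal is correct and follows essentially the same route as the paper: the one-intersection and $\maj_k$ structure makes $\H_C$ a $k$-uniform linear hypergraph, \cref{thm-grid-turan} yields a $k\times k$ grid under the stated stretch, \cref{lem:alternative-grid} gives an edge-coloring of the grid that is not a $\maj$-coloring, and \cref{lem-coloring-to-avoid} converts this into a string outside $\range(C)$ in polynomial time. Your added remark about exhaustively locating the grid in $\binom{m}{2k}$ time (for constant $k$) is a detail the paper leaves implicit but is consistent with its argument.
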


\section{Polynomial time Algorithm for $\mon$-$\NC^0_3$-$\avoid$} \label{sec-mon-nc}

In this section, we describe a deterministic polynomial time algorithm for $\mon$-$\NC^0_3$-$\avoid$ for $m=\Omega(n^2)$. The algorithm proceeds in three steps. In the first step, we give a polynomial time reduction from \mon-$\NC^0_3$-{\sc Avoid} to $\maj_3$-{\sc Avoid}. Next, we show a reduction from $\maj_3$-$\avoid$ to $\oneinter_3$-$\avoid$. Finally, we put all the pieces together and describe a polynomial time algorithm for solving $\maj_3$-{\sc Avoid} when $m=\Omega(n^2)$.

\subsection{Reduction from \mon-$\NC^0_3$-{\sc Avoid} to $\maj_3$-{\sc Avoid}:}
We show the following reduction:
\begin{theorem}\label{thm-mon-to-maj}
    There is a polynomial time reduction from $\mon$-$\NC^0_3$-{\sc Avoid} to $\maj_3$-{\sc Avoid}.
\end{theorem}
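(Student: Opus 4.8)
The plan is to argue output-coordinate by output-coordinate. Since each $f_i$ of a $\mon$-$\NC^0_3$ circuit is a monotone Boolean function of at most three inputs, it belongs to a fixed list of $20$ functions, and its type (and the relevant input indices) can be read off in constant time; I would split this list into four kinds: (i) constants; (ii) functions with at most two essential inputs together with $x_a\wedge x_b\wedge x_c$ and $x_a\vee x_b\vee x_c$; (iii) $\maj_3$ of three distinct inputs; and (iv) the two ``threshold variants'' $x_ax_b\vee x_c$ and $(x_a\vee x_b)\wedge x_c$. If some $f_i$ is a constant we are already done by flipping $y_i$, so assume there are none. For kind (ii) I would use the peeling self-reduction already exploited in \cref{prop:special-case1}: for an $\wedge$ of a set $S$ of inputs commit $y_i=1$ and hard-wire $S$ to $1$; for an $\vee$ of $S$ commit $y_i=0$ and hard-wire $S$ to $0$; for a projection $f_i=x_j$ commit $y_i=0$ and set $x_j=0$. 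Substituting and propagating constants leaves a monotone $\NC^0_3$ circuit with one fewer output and at least one fewer input, so $m-n$ does not decrease, and a string avoiding the residual circuit extends (by the committed bit) to one avoiding the original. Iterating, we either expose a constant gate (done) or reach a circuit in which every output is of kind (iii) or (iv).

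Kind (iii) gates are already the target form $\majthreeavoid$, so the real work is kind (iv). Here the plan is, for each such gate, to introduce one fresh input variable together with a constant number of extra $\maj_3$ output gates so that (a) on every preimage of the transformed circuit the designated output equals the original threshold function of $x_a,x_b,x_c$, and (b) forbidden strings correspond in both directions; summing over all such gates, the number of outputs must still exceed the number of inputs so that the resulting $\majthreeavoid$ instance is valid and the stretch needed downstream is preserved. The gadget I would design exploits that each threshold variant agrees with $\maj_3(x_a,x_b,x_c)$ at every input except a single one --- $(0,0,1)$ for $x_ax_b\vee x_c$ and $(1,1,0)$ for $(x_a\vee x_b)\wedge x_c$ --- so that only this lone discrepancy has to be neutralized, and the fresh variable plus the auxiliary $\maj_3$ outputs are used precisely to force consistency at that point.

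The main obstacle I expect is exactly this last step: constructing the threshold-variant gadget using $\maj_3$ gates alone and proving the two-directional correspondence of forbidden strings, while keeping the number of added outputs at least as large as the number of added inputs so that $m>n$ (indeed the whole stretch regime of \cref{thm:monavoid,thm:improvedmonavoid}) survives the reduction. Everything else is routine bookkeeping --- checking that each peeling step is sound, that iteration preserves monotonicity, the $\NC^0_3$ structure and $m-n$, and that the overall transformation is polynomial time --- and note that \cref{introlemma-coloring-to-avoid} is not needed for this reduction itself; it enters only later, when one actually solves the $\majthreeavoid$ instance.
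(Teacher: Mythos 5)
There is a genuine gap, and it sits exactly where you flagged it: your kind (iv). The two remaining monotone functions $x_ax_b\vee x_c$ and $(x_a\vee x_b)\wedge x_c$ do not need a gadget at all --- they succumb to the very same commit-and-peel step you already use for kind (ii). For $f_i = x_c\vee(x_a\wedge x_b)$, commit $y_i=0$: any preimage of a string with $y_i=0$ must have $x_c=0$ (since $f_i\ge x_c$), so you may hard-wire $x_c=0$, delete output $i$, and you have removed one output and one input, preserving $m>n$. Dually, for $f_i=(x_a\vee x_b)\wedge x_c$ commit $y_i=1$ and hard-wire $x_c=1$. This is precisely how the paper's proof handles these functions (its Cases 4 and 5, organized by $C_j^{-1}(0)$): the single variable whose value is pinned down by the committed output bit is fixed, and the extension of an avoiding string back to the original circuit is immediate. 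So the ``real work'' you deferred is a one-line observation, and the case analysis closes with only $\maj_3$ outputs surviving.

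By contrast, the gadget route you sketch is not only unnecessary but problematic as stated: introducing a fresh variable $w$ plus auxiliary $\maj_3$ outputs whose $y$-bits are committed changes the instance in a way that requires a two-directional correspondence of avoided strings, and it is far from clear that forced-value gadgets built solely from $\maj_3$ gates exist (e.g.\ $\maj_3(x_a,x_b,w)$ with $w$ ``forced'' to $x_c$ still computes $\maj_3(x_a,x_b,x_c)$, not the threshold variant), nor that the input/output bookkeeping keeps $m>n$. Since you neither construct the gadget nor prove the correspondence, the proposal as written is incomplete; replacing kind (iv) by the peeling step above repairs it and makes your argument essentially identical to the paper's. (Minor additional point: projections $f_i=x_j$ and the other $\le 2$-variable monotone functions are handled by the same peeling, as you note, so the only fixed point of the iteration is indeed $\maj_3$.)
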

\begin{proof}
    Let $C:\{0,1\}^n \rightarrow \{0,1\}^m$ be a monotone circuit with $m>n$. We will obtain a circuit $C':\{0,1\}^{n'}\rightarrow \{0,1\}^{m'}$ such that $n'\le n$ ,$m' \le m$ and $n'<m'$, where each output function computes majority of three input bits. Furthermore, we can find a $y=y_1y_2\ldots y_m$ outside the $\range(C)$ from $y' \not \in \range(C')$ in polynomial time by setting the remaining $m-m'$ many bits of $y$ to arbitrary values. We remark that this proof technique is inspired by \cite{GLW22}.

    To begin with, note that there are only a few types of monotone functions that depend on $3$ bits. For each of these cases we will show a reduction from $C:\{0,1\}^n \rightarrow \{0,1\}^m$ to a smaller circuit $C':\{0,1\}^{n-1} \rightarrow \{0,1\}^{m-1}$ such that a string outside the range of $C'$ gives a string which is outside the range of $C$. Let $C_j$ denote the sub-circuit of $C$ that computes the $j$-th bit of the output. Let $W_i:=\{x \in \{0,1\}^3 \mid |x|_1 =i\}$ be the set of inputs with weight $i$, for $i \in \{0,1,2,3\}$. We describe the reduction for each type of function except when the function is the $\maj_3$ function. We iteratively apply this sequence of reduction rules for each $j \in \{1, 2 \ldots m\}$, we will end up with a circuit where each output bit is the $\maj_3$ function in terms of input variables.

    \begin{description}
        \item{\bf Case 1: $C_j^{-1}(0)=\emptyset$:} That is $C_j$ computes a constant $1$ function. Then setting $y_1=0$ and the remaining output bits arbitrarily gives a string outside the range of $C$. 
        
        \item{\bf Case 2: $C_j^{-1}(0)= W_0$:} Suppose $C_j$ computes {\sf OR} of three input variables $x_1, x_2,x_3$. Then setting $y_1=0$ and all the input variables $x_1,x_2,x_3$ to $0$ we obtain a smaller circuit $C':\{0,1\}^{n-3}\rightarrow \{0,1\}^{m-1}$ such that $y' \not \in Range(C')$ then $y=0y'$ is outside the range of $C$.
        \item{\bf Case 3: $C_j^{-1}(0)= W_0 \cup \{\alpha\}$ }where $\alpha \in W_1$: Without loss of generality, we assume $\alpha = 001$. In this case, setting $y_1=0$ and the input variables $x_1,x_2$ to $0$ we obtain a smaller circuit $C':\{0,1\}^{n-2}\rightarrow \{0,1\}^{m-1}$. The same argument applies when $\alpha=100$, and $\alpha=010$.
        \item{\bf Case 4: $C_j^{-1}(0)= W_0 \cup \{\alpha_1, \alpha_2\}$ }where $\alpha_1,\alpha_2 \in W_1$: Without loss of generality, let $\alpha_1 = 001$, and $\alpha_2 = 010$. By the property of monotone functions, we have that $C_j$ evaluates to $1$ on inputs $\{101, 110, 111\}$. Then setting $y_1=0$ and $x_1=0$, yields a circuit $C':\{0,1\}^{n-1} \rightarrow \{0,1\}^{m-1}$. Same argument applies in the case when $\alpha_1 = 010$, $\alpha_2 = 100$ and also in the case when $\alpha_1 = 001, \alpha_2 = 100$.
        \item{\bf Case 5: $C_j^{-1}(0)= W_0 \cup W_1 \cup \{\alpha\}$ }where $\alpha \in W_2$: Without loss of generality, assume $\alpha=011$. Setting $y_1=1$ and $x_1=1$ we get a smaller circuit $C':\{0,1\}^{n-1}\rightarrow \{0,1\}^{m-1}$. The other cases when $\alpha=101$ and $\alpha=110$ can be handled similarly. 
        \item{\bf Case 6: $C_j^{-1}(0)= W_0 \cup W_1 \cup \{\alpha_1, \alpha_2\}$ }where $\alpha_1,\alpha_2 \in W_2$: Without loss of generality, assume $\alpha_1=011, \alpha_2=110$. Setting $y_1=1$ and $x_1=1$ we get a smaller circuit $C':\{0,1\}^{n-1}\rightarrow \{0,1\}^{m-1}$. The same argument applies when $W_2\setminus \{\alpha_1,\alpha_2\}=110$ or $011$.
        \item{\bf Case 7: $C_j^{-1}(0)= W_0 \cup W_1 \cup W_2$:} In this case, $C_j$ computes {\sf AND} of three input variables $x_1, x_2,x_3$. Then setting $y_1=1$ and all the input variables $x_1,x_2,x_3$ to $1$ we obtain a smaller circuit $C':\{0,1\}^{n-3}\rightarrow \{0,1\}^{m-1}$ such that $y' \not \in Range(C')$ then $y=1y'$ is outside the range of $C$. 
        \item{\bf Case 8: $C_j^{-1}(0)= W_0 \cup W_1 \cup W_2 \cup W_3$:} That is, $C_j$ computes a constant zero function. Then setting $y_1=1$ and the remaining output bits arbitrarily gives a string outside the range of $C$. 
    \end{description}
    It can be verified that the only function that is not covered in the above cases is when $C_j$ computes {\sc Maj} on $3$ bits. Note that in each case we eliminate one output bit and at least one input bit. Thus, finally we are left with a circuit $C':\{0,1\}^{n'}\rightarrow \{0,1\}^{m'}$ where $m'>n'$ and each output bit is computed by $\maj_3$ function. 

    We note that we can check the type of function computed by a circuit by simply evaluating the function on all possible input values. Since the function depends on only three variables there are only $8$ input possibilities to check. Hence, overall the reduction can be done in polynomial time.  
\end{proof}

\subsection{Polynomial time Algorithm for $\maj_3$-{\sc Avoid} with Quadratic Stretch}\label{sec:wicket-proof}

We show that there is a deterministic polynomial time algorithm for $\maj_3$-$\avoid$. Recall that $\maj_3$-$\avoid$ instance is a circuit $C : \{0,1\}^n \to \{0,1\}^m$ with constant depth, bounded fan-in and polynomial size, where each output function of the circuit is a majority of exactly three input variables. For each $i \in [m]$, let $f_i$ denote the function corresponding to the $i$-th output bit, and let $I(f_i)$ denote the set of three variables that $f_i$ depends on. We shall demonstrate the working of the algorithm in three steps.

For our purpose, we shall define sub-circuits of $C$ called \textit{clusters} (denoted by $K$) as follows: Let ${\cal O}=\{f_1, \ldots f_m\}$ be the set of output functions of $C$. We define the relation $R: {\cal O} \times {\cal O}$ as follows: We say $(f_i,f_j) \in R$ if $\exists $ some input $x$ that feeds into both $f_i,f_j$ for $i,j \in [m]$. Let $R'$ be the transitive closure of $R$. Observe that $R'$ is an equivalence relation. We define a cluster to be an equivalence class of $R'$.

\paragraph{Step 1: Reduction to a single cluster:}
The following lemma shows that $C$ must contain a cluster with more outputs than inputs and hence we can concentrate on such a cluster.

\begin{lemma}\label{obs-cluster}
    Let $C: \{0,1 \}^n \rightarrow \{0,1\}^m$ be a multi-output circuit with $m>n$, then  there exists a cluster $K$ such that $|K|>|I(K)|$ where $I(K)=\cup_{f\in K} I(f)$.
\end{lemma}
\begin{proof}
    Let $K_1,\ldots K_t$ be the clusters such that $\cup_{i \in [t]} K_i$ covers the set of output functions of $C$. Observe that $\cup_{i \in [t]} I(K_i)$ covers the input set of $C$. By definition, $\forall i,j \in [t]$ such that $i \ne j$ we have that $K_i \cap K_j = \emptyset$ and $I(K_i) \cap I(K_j)=\emptyset$. Assume for the sake of contradiction that for each cluster $K_i$, we have $|K_i|\le |I(K_i)|$. Then by the above observation, we have that $ \left| \cup_{i \in [t]} K_i\right| \le \left| \cup_{i \in [t]} I(K_i) \right|$ implying that $m<n$, which is a contradiction.
\end{proof}

Let $K$ be a cluster guaranteed by proposition \ref{obs-cluster}. Let $C_K$ be the sub-circuit of $C$ corresponding to the cluster $K$. Since, $|K|>|I(K)|$, there must be a string outside the range of $C_K$.  Observe that if $y' \in \{0,1\}^{|K|} \not \in \range (C_K)$ then any $y\in \{0,1\}^m$ which agrees with $y'$ is outside $\range(C)$. Hence, it suffices to solve the problem on $C_K$. 

It is easy to see that if there exist two output functions that intersect in three inputs, then we already have a string outside the range. Hence, we consider the cases when they intersect in fewer than three inputs. Motivated by this, we define the following problems: $\twointer_3$-$\avoid$ ($\oneinter_3$-$\avoid$) is the following problem: Given $C:\{0,1\}^n \rightarrow \{0,1\}^m$ such that output function is $\maj_3$ function and any two functions share at most two (resp. one) inputs, the goal is to find $y \not \in \range(C)$. 

\paragraph{Step 2 : A reduction to $\oneinter_3$-$\avoid$}
For the rest of the proof, $C=C_K$, $m=m'$ and $n=n'$. We show that there is a polynomial time reduction from $\twointer_3$-$\avoid$ problem to $\oneinter_3$-$\avoid$ problem. Finally, in step 3, we will show that there is a deterministic polynomial time algorithm for $\oneinter_3$-$\avoid$ problem. 
\begin{lemma}
\label{thm-2-inter}
There is a polynomial time reduction from $\twointer_3 \textrm{-}\avoid$ to \newline $\oneinter_3 \textrm{-} \avoid$.
\end{lemma}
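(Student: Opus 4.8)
The plan is to pass from a $\twointer_3\textrm{-}\avoid$ instance to a $\oneinter_3\textrm{-}\avoid$ instance by repeatedly ``untangling'' one pair of outputs that depend on two common inputs, each step producing a new circuit $C'$ with $\range(C)\subseteq\range(C')$, so that any $y\notin\range(C')$ witnesses a $y\notin\range(C)$ after filling in the coordinates we dropped or added. First dispose of the easy cases: if two outputs depend on the same three inputs then they compute the same function and any $y$ with $y_i\neq y_j$ is outside $\range(C)$; and if some input variable feeds no output, or feeds exactly one output $f_i$, then we delete that variable (together with $f_i$ in the second case), which keeps every output a $\maj_3$, leaves $\avoid$ equivalent (a $y'$ avoiding the smaller range extends, and we flip the dropped coordinate of $f_i$ as needed), and does not decrease $m-n$. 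So we may assume every variable feeds at least two outputs, no two hyperedges coincide, and there is a pair $f_i,f_j$ with $I(f_i)=\{a,b,c\}$, $I(f_j)=\{a,b,d\}$ and $c\neq d$.

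The core step rests on the identity that $\maj(a,b,d)$ is dominated by $\maj(a,d,w)$ for a fresh input $w$: for every assignment to $a,b,d$ there is a choice of $w$ with $\maj(a,d,w)=\maj(a,b,d)$ --- if $a=d$ both sides equal $a$, and if $a\neq d$ the left side equals $w$, so set $w:=\maj(a,b,d)$. Hence, introducing a new input variable $w$ and replacing the output $f_j$ by the function $\maj(a,d,w)$ yields a circuit $C'$ with $\range(C)\subseteq\range(C')$ (extend any input of $C$ by the corresponding value of $w$), all of whose outputs are still majorities of three variables, and now $|I(f_i)\cap I(f_j)|\le 1$; moreover no pair of outputs comes to share two variables that did not already share two variables in $C$ (the only new variable, $w$, lies in a single edge). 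Thus from a point outside $\range(C')$ we recover one outside $\range(C)$, and the potential $\Phi=\sum_{P}\binom{d_P}{2}$, summed over pairs $P$ of input variables with $d_P$ the number of outputs depending on both variables of $P$, strictly decreases. Since $\Phi$ is polynomially bounded and becomes $0$ exactly when the instance is $\oneinter_3$, iterating (interleaved with the easy simplifications) terminates in polynomially many steps, and each step is clearly polynomial-time computable.

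The main obstacle is book-keeping the stretch: an untangling step keeps $m$ fixed but raises $n$ by one, and the easy simplifications only preserve $m-n$, so one must argue that enough ``budget'' is available --- ideally that not only $m'>n'$ but, for the monotone application where $\maj_3\textrm{-}\avoid$ is run with $m=\Omega(n^2)$, the reduced instance still has $m'=\Omega(n'^2)$. The fix is to not untangle blindly: one first freezes a large linear sub-hypergraph and only untangles the remaining outputs, using a counting argument on the ``conflict graph'' --- vertices are outputs, with an edge between two outputs sharing two inputs; since each pair of variables lies in at most $n$ outputs, this graph has maximum degree $O(n)$, hence an independent set (a linear sub-hypergraph) on $\Omega(m/n)$ outputs --- so at most $m-\Omega(m/n)$ outputs need to be untangled. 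Carrying out this accounting carefully --- bounding the number of fresh variables (in particular handling an output all three of whose input-pairs are over-covered, which may force two fresh variables), checking that untangling a ``free'' output never re-creates a two-intersection among the frozen ones, and verifying that the resulting $m'$ and $n'$ meet the quantitative requirements of Step 3 --- is the delicate part of the argument; everything else is routine.
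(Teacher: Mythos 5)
Your local ``untangling'' step is fine as far as it goes: replacing $f_j=\maj(a,b,d)$ by $\maj(a,d,w)$ with a fresh input $w$ indeed gives $\range(C)\subseteq\range(C')$, keeps every output a $\maj_3$, and creates no new two-intersections, and your potential argument shows termination. But the lemma is not just about producing a one-intersect circuit; the reduction must output a \emph{valid} $\oneinter_3$-$\avoid$ instance, i.e.\ one with $m'>n'$ (and, for the applications in the paper, with the stretch needed by Step~3). This is exactly where your argument has a genuine gap, and you concede it yourself: each untangling keeps $m$ fixed and raises $n$ by one, so the total budget is the slack $m-n$, which in the regime of the main theorem is as small as $1$. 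Meanwhile the number of outputs that must be untangled can be $\Theta(n)$: e.g.\ take outputs $\maj(x_i,x_{i+1},x_{i+2})$ along a cycle of $n$ variables plus one extra output, so $m=n+1$ but essentially every output shares two inputs with another. Your proposed fix does not close this: an independent set in the conflict graph has size only $\Omega(m/n)$ (a constant when $m=O(n)$), the frozen linear sub-hypergraph is not by itself a legal $\avoid$ instance since it may have far fewer outputs than variables, and the remaining $m-\Omega(m/n)$ outputs still each cost a fresh variable, overshooting the slack. So the ``delicate book-keeping'' you defer is in fact the whole content of the lemma, and as sketched the approach fails for linear stretch.

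The paper avoids this entirely by a different idea: when some pair $f_1,f_2$ shares two inputs, it does not build a new instance at all but \emph{directly solves} $\avoid$. Setting $f_1=1$, $f_2=0$ forces the non-shared inputs and ties the two shared inputs together, leaving at most one ``alive'' (symbolic) variable; then, exploiting that the instance is a single cluster with $m>n$, it greedily absorbs further functions while maintaining the one-alive-variable invariant, until some leftover function has all its inputs determined (up to the single symbol), at which point a short case analysis flips its output bit to get a string outside the range. This constraint-propagation route consumes no stretch, which is why the paper's reduction composes with the later $\oneinter_3$ algorithm at $m>n$, whereas your variable-introducing reduction, as written, cannot.
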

\begin{proof}
     Let $C:\{0,1\}^n \to \{0,1\}^m$ be a circuit such that each output bit is computed by a  $\maj_3$ function where $m>n$. By step $1$, we know that the set of functions corresponding to $C$ form some cluster $K$. Hence, we obtain $C$ starting with an arbitrary function in $K$ and iteratively adding a new function $f$ to $\callo$, such that $I(f)\cap I(\callo)\neq \emptyset$. Our algorithm will follow this construction procedure, and eliminate functions that appear in that order by setting input variables.
     %The algorithm will iteratively set input variables for eliminating 
 
     We start with the following claim that helps us eliminate functions from the circuit by setting input variables. Let $f_1, f_2$ be two output functions of $C$ such that $|I(f_1) \cap I(f_2)| = 2$.  Initially ${\cal O} = \{f_1, f_2\}$, and we iteratively apply the following claim for each function $g \in K \setminus {\cal O}$ that is used to build the cluster by the above process. A variable in $I(K)$ is said to be {\em alive} if it is not set to a value in $\{0,1\}$ in the below process.
    \begin{claim}
    \label{claim:2-inter}
    $\forall g \in K\setminus \callo$ such that $|I(g) \cap I({\callo})| = 2$, there is a setting of $g \in \{0,1\}$ and a consistent partial assignment to the input variables (or identification of variables), such that at most one variable in $\callo$ remains alive.
    \end{claim}
    \begin{proof}
    We shall prove this by induction on the construction of the cluster. Initially, let $\callo$ $=\{f_1, f_2\}$, $I(\callo)$ $ = \{x_1, x_2, x_3, x_4\}$. Without loss of generality, let $I(f_1 ) = \{x_1, x_2, x_3\}$ and $I(f_2 ) = \{x_2, x_3, x_4\}$. We observe that if $f_1 = 1$ and $f_2 = 0$, then $x_1 = 1, x_4 =0, x_3 = \neg x_2$ and the number of variables in $I(\callo)$ is exactly $1$. 
    
    By hypothesis, the number of variables in $I(K)$ which are alive after $i$ steps is at most $1$. Now consider a $g$ such that $|I(g) \cap I(K)| = 2$, we would like to show that there is an assignment of values to $g$ (which also sets some input variables) such that in the new cluster, $K \cup \{g\}$, the number of variables that are alive is still at most $1$. The following cases arise depending on the two inputs $g$ share with the cluster. Let $z_1,z_2 \in I(g) \cap I(K)$ and let $y =I(g)\setminus I(K), b \in \{0,1\}$.

    \begin{description}
        \item{\bf Case 1:} \textit{Both $z_1$, $z_2$ are not alive:}
            There are two cases to consider. Suppose $z_1 = z_2 = b$. By setting $g = \overline{b}$, we can obtain a string that is outside the range of $C$.
            The other case is when $z_1 = b$, $z_2 = \overline{b}$. In this case, we set $g=b$ and $y=b$. Therefore, the number of variables in the new cluster is still at most $1$.
        \item{\bf Case 2:} \textit{Exactly one of $z_1$, $z_2$ is alive:}
              Again two cases arise. Consider the first case when $z_1 = b$, $z_2 = x$. Now, we set $g = x= y= \overline{b}$ eliminating all the variables. The other case is: $z_1 = b$, $z_2 = \lnot x$. Similarly, we set $g= \overline{b}$, $x = b$, $y= \overline{b}$ which eliminates all the variables. 
        \item{\bf Case 3:} \textit{Both $z_1$, $z_2$ are alive:} If $z_1,z_2$ that are both alive, then it must be that $z_1=x, z_2=\lnot x$. In this case, we set $g=y=0$ so that the number of variables in the cluster is at most $1$.
    \end{description}
    
    Hence, in all the cases there is an assignment value to $g$ and consistent assignment for variables in $I(g)$ such that there is at most one variable which is alive in $I(\callo)$.
    \end{proof}
    
     Now, we consider the functions $h$ such that $\left| I(h) \cap I(\callo) \right|=1$. Let $w:= I(h) \cap I(\callo)$. If $w$ is assigned to $b\in \{0,1\}$, then we set $h$ to $\overline{b}$ and the input variables in $I(h) \setminus \{w\}$ to $\overline{b}$. Consider the other case when $w$ is alive. Wlog let $w=x$. We set the output of $h$ and its other two inputs to $\overline{x}$. In this case, we have not yet assigned a fixed $0/1$ value to $h$, which we shall fix in the next step. However, note that in either case, the number of alive variables in $\callo$ is at most $1$.

     After at most $n-2$ iterations, we would have fixed each of the inputs to one of $0,1,x$ or $\neg x$. This is because, in the iterative process, each function that we add to the set $\callo$ covers at least one new variable. Now consider an $f\in K\setminus \callo$, which is guaranteed since $m>n$. Observe that $I(f) \subseteq I(\callo)$. Let $I(f) = \{x_1,x_2,x_3\}$. The following two cases arise based on the number of alive variables.

     \begin{description}
        
         \item[Case 1:] Suppose there is no variable in $\callo$ which is alive. Then the value of $f$ is already fixed by the inputs $I(\callo)$. Wlog let this be $b \in \{0,1\}$. Setting the output of $f$ to $\overline{b}$ gives a string outside $\range(C)$. 
         \item[Case 2:] Suppose there is exactly one variable that is alive. Wlog let $x_1=x$.
         \begin{description}
             \item[Case 2(a):] Consider the case when $x_2=x_3=b$ for $b \in \{0,1\}$. Notice that this already forces $f$ to take value $b$. Thus, by setting $f$ to $\overline{b}$ and the functions outputs of functions that were set to $x$ to $b$ we obtain a $y \in \{0,1\}^m$ which is outside $\range(C)$.
             \item[Case 2(b):] Suppose $x_2=b,x_3=\overline{b}$. Then setting the output of $f$ to $b$, we fix the value of $x$ to $b$. Thus, at this stage, there are no variables that are alive in $\callo$. Since, $|\callo|\le n-1$ and $m>n$ there exists a $g \in K\setminus \callo$ such that $I(g) \subseteq I(\callo)$. Since all inputs are fixed, we can now handle this using case 1.       
         \end{description}
         \item[Case 3:] Suppose $f$ has two variables that are alive. Let $x_1=b$ where $b \in \{0,1\}$.
         \begin{description}
             \item[Case 3(a):] Suppose $x_2=x_3=x$. By setting the output of $f$ to $\overline{b}$, we fix the value of $x_2,x_3$ to $\overline{b}$. This eliminates all the variables from $\callo$. Again, this reduces to case 1.
             \item[Case 3(b):] Suppose $x_2=x,x_3=\overline{x}$. This fixes the value of $f$ to $b$. Hence, setting the output of $f$ to $\overline{b}$ gives a string outside the range. 
         \end{description}
         \item[Case 4:] Suppose all the three variables of $f$ are alive. Note that since $|\callo $ $| \le n-1$ and $m>n$, there must be at least two more functions outside $\callo$. If any of these functions satisfy the above cases, then we have already found a solution to the problem. Therefore, we consider the case when all three functions have all three variables that are alive. By PHP, there exist two functions at least two of whose inputs are set to $x (\overline{x})$. This fixes the output of these functions to $x (\overline{x})$. Setting one of the outputs to $1$ and the other to $0$ yields a string outside the range.  
     \end{description}

    Thus, if there exist two functions in $C$ that have at least two common inputs, then as described above, we already have a solution to the range avoidance problem. Otherwise, we have an instance of $\oneinter_3$-$\avoid$. Note that the above steps can be done in polynomial time. Hence, it suffices to show a polynomial time algorithm for this case.
     \end{proof}

%\paragraph{A Polynomial time Algorithm for $\monncavoid$ with Quadratic Stretch:}
Having described all the ingredients of the proof, we are now ready to prove our first algorithm for $\maj_3$-$\avoid$ (which works only for quadratic stretch $m > cn^2$). We apply step $1$ to obtain a sub-circuit $C_K$ whose outputs form a cluster $K$. We note that there must be a cluster $K$ with $n'$ inputs and $m'>cn'^{2}$ outputs. Otherwise, if each cluster with $n_i$ inputs has less than $cn_i^2$ outputs then we get $m=\sum cn_i^2 < cn^2$, which is a contradiction. Next, we observe that any two functions in $C_K$ share at most two input variables since otherwise we can trivially obtain a string outside the range by assigning opposite values to the corresponding output bits. By step $2$, we reduce our problem to an instance of $\oneinter_3$-$\avoid$ in polynomial time. Since $m>cn^2$, \cref{thm-wicket} gives a polynomial time algorithm for solving $\oneinter_3$-$\avoid$. 
%\end{proof}

This gives a polynomial time algorithm for solving $\mon$-$\NC^0_3$-$\avoid$ with quadratic stretch, thus completing the proof of the following theorem from the introduction.
\thmmonavoid*

\section{A New Turan-type Bound and Application to $\monncavoid$}

In this section, we prove a new Turan-type theorem for $3$-uniform linear hypergraphs. 
We show the application of the same to obtain an efficient algorithm for $\oneintermajthree$ with $m > n$, thus proving Theorem~\ref{thm:improvedmonavoid}. 

\subsection{Extremal Bounds for Loose $\mychi$-Cycles}

We first show an extremal bound for the loose $\mychi$ cycle in a connected $3$-uniform linear hypergraph. 

\thmextremalchi*
\begin{proof}
     Let $\H$ be a connected 3-uniform linear hypergraph with $m>n$. We would like to show that there is a loose $\mychi$ cycle in $\H$. Towards this, we construct \textit{block graph} $G$ corresponding to $\H$ as follows: let the set of $m$ hyperedges in $\H$ be the vertices of $G$ and add an edge $(f,g)$ if the hyperedges $f,g$ intersect at a vertex. Note that $G$ is a simple graph since $\H$ is linear. Observe that $G$ is connected. We have $|G|=m$ since the vertices in $G$ correspond to the hyperedges in $\H$. Now we shall argue that finding a copy of loose $\mychi$ cycle in $\H$ is equivalent to finding a subgraph $G'$ in $G$ where $G'$ consists of an edge $(f,f')$ and distinct odd walks from $g$ to $g'$ and $h$ to $h'$ without using edge $(f,f')$, where $g,h$ and $g',h'$ are neighbors of $f,f'$ respectively in $G$. The equivalence follows from the observation that a walk of length $\ell$ corresponds to a walk of length $\ell -1$ in $G$. Thus, our task is to show that $G'\subseteq G$. 
     
      For each vertex $x_i$ in $\H$, there is a clique of size $m_i$ in $G$, where $i\in [n], 1\leq m_i \leq m$. Furthermore, each vertex in $G$ participates in $3$ cliques, since $\H$ is $3$-uniform. The number of edges in $G = \sum_{i \in [n]}{m_i \choose 2}$. 
      \begin{align*}
         |E(G)|= \sum_{i \in [n]}{m_i \choose 2} &= \sum_{i \in [n]}\frac{m_i^2}{2}- \sum_{i \in [n]}\frac{m_i}{2} \\
          &= \sum_{i \in [n]}\frac{m_i^2}{2} - \frac{3m}{2} &&  \big ( \because \sum_{i \in [n]} m_i =3m \big )\\
          &\geq \frac{(3m)^2}{2n}- \frac{3m}{2}      
      \end{align*}
      The last inequality follows from the Cauchy-Schwartz inequality by taking $u$ as all $1$ vectors and $v_i$ as $m_i$. Since $m>n$, we have $|E(G)|>n$. Hence, $G$ contains a cycle. Let $e=(f,f')$ be an edge in this cycle. We would like to show that $G \setminus e$ contains $G' \setminus e$. Suffices to show that there is a walk of odd length from $g$ to $g'$. The argument for odd walk from $h$ to $h'$ is symmetric. Together with the edges $e,(f,g),(f,h),(f',g'),(f',h')$ this gives the desired subgraph $G'$. Observe that $G\setminus e$ is still connected. Hence, there must be a path $P$ from $g$ to $g'$. If this path is of odd length then we have our desired path. Otherwise, we shall obtain a walk from $g$ to $g'$ of odd length. By handshaking lemma, we have that the average degree of $G\setminus e$ is $\geq \frac{2(|E|-1)}{m}\geq \frac{2\left(\frac{(3m)^2}{2m}- \frac{3m}{2}-1\right)}{m} \geq 6-\frac{2}{m}$. By averaging argument there exists a vertex $f''$ with minimum degree $6 - \frac{2}{m}$. Since $m>2$, we have that the degree of $f''$ is at least $6$. Since $f''$ participates in at most three cliques, by PHP we get that there is a clique of size at least three which contains $f''$. Hence, there is an odd cycle $Q$ containing $f''$. Let $P'$ be a path from $g$ to $f''$, whose existence is guaranteed as $G\setminus e$ is connected. Observe that $gP'f''Qf''P'gPg'$ is a walk of odd length from $g$ to $g'$. The other walk containing $h,h'$ can be obtained similarly. Hence, there is a copy of $G'$(loose $\mychi$ cycle) in $G(\H)$. This completes the proof.   
      
\end{proof}

\subsection{Polynomial time Algorithm for $\monncavoid$ with Linear Stretch}

In section \ref{sec:wicket-proof}, we saw that using wicket as our candidate for forbidden sub-hypergraph gives a polynomial time algorithm for $\monncavoid$ with quadratic stretch. In order to improve this bound on the stretch requirement, we shall use different forbidden sub-hypergraphs which are loose $\mychi$-cycles. It suffices to show that the structure has an edge-coloring which is not $\maj$-coloring and that the extremal number for the structure is not too large.

\noindent First, we show that there exists an edge-coloring of $\mychi$ which is not $\maj$-coloring.
\begin{lemma}\label{lem:chi-cycle-not-maj-colorable}
    Let $\H$ be $\mychi$-cycle. Then there exists an edge-coloring of $\H$ which is not $\maj$-coloring. 
\end{lemma}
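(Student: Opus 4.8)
The plan is to exhibit an explicit edge-coloring $\Gamma$ of a $\mychi$-cycle that cannot arise from any vertex coloring via the majority rule, using a counting argument in the spirit of \cref{lem-wicket} together with the parity/propagation argument of \cref{clm-color} used for the loose cycles in \cref{lem:alternative3-butterfly} and \cref{lem:alternative4-odd-kite}. Recall the structure: writing $\mychi = {\cal B}_{2\ell_1} \cup {\cal B}_{2\ell_2} \cup \chi$, we have two even Berge paths $P^{(1)} = u_1 e_1 u_2 \cdots u_{2\ell_1} e_{2\ell_1} u_{2\ell_1+1}$ and $P^{(2)} = v_1 e'_1 v_2 \cdots v_{2\ell_2} e'_{2\ell_2} v_{2\ell_2+1}$, together with the two ``bridge'' edges $e = \{u_1, u_{\ell_1}, w\}$ and $e' = \{v_1, v_{\ell_2}, w\}$ sharing the single vertex $w$. (In the cycle view, $\chi$ is the extra one-vertex intersection between two edges of the same parity class at Berge-distance $>2$.) The idea is that the two even Berge paths, when their edges are alternately colored, force the endpoint colors to \emph{flip} around, and the bridge edges then create an unsatisfiable constraint at $w$.

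First I would set up the colorings on the two paths. Color the edges of $P^{(1)}$ alternately, $\Gamma(e_i) = R$ if $i$ odd, $B$ if $i$ even, and similarly for $P^{(2)}$. By \cref{clm-color} (which says that if a $\maj_3$ edge gets color $\overline{c}$ and one of its vertices is colored $c$, the other two are forced to $\overline{c}$), once we fix $\Pi(u_1)$ the alternating coloring propagates a forced flip at each step along the Berge path, so $\Pi(u_1), \Pi(u_2), \ldots$ alternate; since the path has even length $2\ell_1$, this determines $\Pi(u_{\ell_1})$ and $\Pi(u_{2\ell_1+1})$ in terms of $\Pi(u_1)$, and in particular pins down the pair $(\Pi(u_1), \Pi(u_{\ell_1}))$ up to the single binary choice of $\Pi(u_1)$. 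The same holds along $P^{(2)}$: $(\Pi(v_1), \Pi(v_{\ell_2}))$ is determined up to the choice of $\Pi(v_1)$. Now I would assign colors to the two bridge edges $e, e'$ so that, for every one of the (at most) four combinations of $(\Pi(u_1), \Pi(v_1)) \in \{R,B\}^2$, the majority constraint at $e$ or at $e'$ is violated — concretely, pick $\Gamma(e)$ to be the color that is \emph{not} the majority of $\{\Pi(u_1), \Pi(u_{\ell_1})\}$ no matter what $\Pi(w)$ is, which is possible exactly when $\Pi(u_1) = \Pi(u_{\ell_1})$ is forced (then that common color forms a majority regardless of $w$), and use $e'$ to cover the complementary cases. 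Since the alternating coloring makes $\Pi(u_1)$ and $\Pi(u_{\ell_1})$ equal precisely when $\ell_1 - 1$ is even, i.e. $\ell_1$ odd, one of the two bridge edges will always be in the ``forced'' regime; a short case check on the parities of $\ell_1, \ell_2$ (matching the parity condition $i+1 \equiv j \bmod 2$ built into the definition of $\chi$) shows the two bridge edges between them rule out all vertex colorings.

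The main obstacle I anticipate is handling the interaction at the shared vertex $w$ and making sure the case analysis genuinely covers \emph{all} vertex colorings rather than just the ``propagated'' ones — in particular one must rule out vertex colorings that violate the forced-flip behavior at some intermediate edge of a Berge path (these are automatically inconsistent with $\Gamma$ at that edge, so they are not $\maj$-colorings either, but this needs to be said), and one must be careful that the bridge edges $e, e'$ are genuinely disjoint from the path edges as guaranteed by the definition so that their colors are free parameters. A secondary subtlety is that $\mychi$ need not have fixed size, so strictly this lemma only needs the combinatorial non-colorability; the polynomial-time \emph{findability} is deferred to the argument for \cref{thm:improvedmonavoid} and need not be addressed here. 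I would organize the final write-up as: (i) state $\Gamma$ explicitly on $P^{(1)}, P^{(2)}, e, e'$; (ii) invoke \cref{clm-color} to get the forced alternation and hence the determination of the four endpoint colors; (iii) a two-by-two case analysis on $(\Pi(u_1), \Pi(v_1))$ deriving a contradiction at $e$ or $e'$ in each case; (iv) conclude $\Gamma$ is not a $\maj$-coloring.
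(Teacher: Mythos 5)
There is a genuine gap at the heart of your argument: the claimed forced alternation along the alternately-colored Berge paths. \cref{clm-color} forces the remaining two vertices of an edge only when the already-colored vertex \emph{disagrees} with the edge's color ($\Pi(u_i)=c$ while $\Gamma(e_i)=\overline{c}$). If $\Pi(u_1)$ agrees with $\Gamma(e_1)$, nothing propagates: the majority constraint at $e_1$ can be met by $u_1$ together with the private third vertex of $e_1$, leaving $u_2$ completely free, and likewise further down the path. So it is false that fixing $\Pi(u_1)$ pins down $(\Pi(u_1),\Pi(u_{\ell_1}))$ up to one binary choice; in fact an alternating edge coloring of a loose path is realizable with essentially arbitrary endpoint colors (color each path vertex ``in phase'' with its edge, or switch to ``out of phase'', letting the private third vertices complete each majority), so your two-by-two case analysis on $(\Pi(u_1),\Pi(v_1))$ cannot exclude all vertex colorings. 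Your own anticipated fix --- that colorings violating the forced flip are ``automatically inconsistent with $\Gamma$'' --- rests on the same misconception: with $\Gamma(e_t)=B$, $\Pi(u_t)=B$, $\Pi(u_{t+1})=R$ and the third vertex colored $B$, the edge is satisfied even though no flip occurred. A secondary problem is that you choose $\Gamma(e)$ as a function of $\Pi$ (``the color that is not the majority of $\{\Pi(u_1),\Pi(u_{\ell_1})\}$''), whereas $\Gamma$ must be fixed once and then ruled out against \emph{every} $\Pi$.

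The paper's proof avoids exactly this trap by starting the forcing at the shared vertex rather than at a path endpoint: it colors the cycle edges alternately by parity, so the two edges of the $\chi$-pair receive \emph{opposite} colors, and then, whatever color $\Pi$ assigns to the common vertex $x$, the $\chi$-edge whose color disagrees with $\Pi(x)$ forces its other two (consecutive cycle) vertices to its own color; from there every newly forced vertex disagrees with the next alternately-colored edge, so the \cref{clm-color}-type forcing genuinely applies at each step, and the propagation wraps around the even Berge cycle to a contradiction. Your path-plus-bridge decomposition could in principle be repaired along the same lines --- color the two bridge edges oppositely, case on $\Pi(w)$, and let the violated bridge launch the forced chain, choosing the alternation phase on each path relative to the positions of the bridge's path vertices --- but as written your argument does not establish the lemma.
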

\begin{proof}
    Let $v_1e_1v_2\ldots v_{2\ell}e_{2\ell}v_1$ be the $\mychi$ cycle with $(e_i,e_j)$ as $\chi$-structure where $i\equiv 0 \mod 2$, $j \equiv 1 \mod 2$. Consider the following edge-coloring-
    \[
    \Gamma(e_k)= \begin{cases}
        &R, ~~~~\textrm{ if $i \equiv 0 \mod 2$}\\
        &B, ~~~~\textrm{ if $i \equiv 1 \mod 2$}
    \end{cases}
    \]
    Let $e_i \cap e_j = \{x\}$. First, consider the case when $\Pi(x)=R$. This forces the color of the other two endpoints of $e_j$ to $B$. Iteratively, we get $\Pi(v_j)= B$ iff $j \equiv 1 \mod 2$. This implies $\Gamma(e_i)=B$ which is a contradiction. The other case when $\Pi(x)=B$ is similar.
\end{proof}

\noindent For our purpose, we shall work with loose $\mychi$ cycles instead of $\mychi$ cycles. In fact, the edge-coloring demonstrated in the proof of \cref{lem:chi-cycle-not-maj-colorable} is also an edge-coloring of loose $\mychi$-cycle which is not a $\maj$-coloring.
\begin{corollary}\label{cor:loose-chi-cycle-not-maj-colorable}
    Let $\H$ be a loose $\mychi$-cycle. Then there exists an edge-coloring of $\H$ which is not a $\maj$-coloring.
\end{corollary}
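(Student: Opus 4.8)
The plan is to reuse the parity-based edge-coloring and colour-propagation argument of \cref{lem:chi-cycle-not-maj-colorable} essentially unchanged, the point being that that argument never uses that the two walks forming the structure are \emph{paths} rather than \emph{walks}. Concretely, I would fix a decomposition of the loose $\mychi$-cycle $\H$ as ${\cal P}_{2\ell_1}\cup{\cal P}_{2\ell_2}\cup\chi$ (recalled in \cref{sec:prelims}): two even Berge walks $u_1e_1u_2\ldots e_{2\ell_1}u_{2\ell_1+1}$ and $v_1e'_1v_2\ldots e'_{2\ell_2}v_{2\ell_2+1}$, together with two extra edges $e=\{u_1,u_{\ell_1},w\}$ and $e'=\{v_1,v_{\ell_2},w\}$ sharing the vertex $w$ and disjoint from the walk-edges. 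Then I would colour each walk-edge by the parity of its position along its walk ($R$ for even, $B$ for odd) and colour $e,e'$ by the same rule used for the $\chi$-edges in \cref{lem:chi-cycle-not-maj-colorable}; call this colouring $\Gamma$.

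Next, assume towards a contradiction that $\Gamma$ is realised as a $\maj$-coloring by some vertex colouring $\Pi$. The engine of the proof is the single local observation already used in the lemma: if a $3$-edge $f$ is $\maj$-coloured $B$ (resp.\ $R$) and $\Pi$ colours one of its vertices $R$ (resp.\ $B$), then $\Pi$ is forced to colour the remaining two vertices of $f$ with $B$ (resp.\ $R$). Starting from $\Pi(w)$ and walking outward along ${\cal P}_{2\ell_1}$ and ${\cal P}_{2\ell_2}$, applying this rule edge by edge forces the $\Pi$-colour of $u_1,u_2,\ldots$ and $v_1,v_2,\ldots$ to alternate with position, exactly as in the lemma; combining the forced colours of the walk-vertices incident to $e$ (and to $e'$) with $\Pi(w)$ then contradicts $\Gamma(e)$ (resp.\ $\Gamma(e')$). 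Equivalently, one can phrase the contradiction as a counting argument on how many vertices of each walk $\Pi$ is forced to colour $R$ versus $B$. Hence no such $\Pi$ exists, so $\Gamma$ is not a $\maj$-coloring, proving the corollary; note also that $\Gamma$ is completely explicit (just the parity rule), so it is trivially computable once $\H$ is located.

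The only place where the walk/path distinction surfaces at all --- and hence the only (very mild) obstacle --- is bookkeeping about repeated vertices and edges. If the propagation ever tries to assign a vertex two different $\Pi$-colours, which happens exactly when a vertex is revisited at positions of opposite parity, then we already have a contradiction and are done even sooner. If an edge is traversed at two positions of opposite parity, the parity rule does not define $\Gamma$ on it consistently, but in that case the edge itself already certifies that $\Gamma$ cannot arise from any consistent vertex colouring; and if all repetitions occur at matching parity, $\Gamma$ is well defined and the path-case reasoning of \cref{lem:chi-cycle-not-maj-colorable} applies verbatim. In every case the conclusion follows, with no new ideas beyond the lemma.
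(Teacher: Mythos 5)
Your overall route is the intended one: the paper gives no separate argument for \cref{cor:loose-chi-cycle-not-maj-colorable} (it is asserted as an easy consequence of \cref{lem:chi-cycle-not-maj-colorable}), and like the paper you simply re-run the parity edge-coloring and the majority-propagation rule on the two Berge walks, observing that a vertex forced to two different colors only helps. Your treatment of repeated vertices is fine: if the propagation forces a revisited vertex to opposite colors, that is exactly the statement that no vertex coloring $\Pi$ realizes $\Gamma$, which is the desired conclusion.

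The genuine gap is your last subcase, where some edge is traversed at two positions of opposite parity. There the parity rule does not define $\Gamma$ at all, and your claim that ``the edge itself already certifies that $\Gamma$ cannot arise from any consistent vertex colouring'' is vacuous: the corollary requires you to exhibit an actual, well-defined edge-coloring $\Gamma: E \to \{R,B\}$ that is not a $\maj$-coloring, and an ambiguously specified labeling is not a witness of anything. Nor can you just pick one of the two colors for that edge, since the propagation argument relies on strict alternation of edge colors along the walk and breaks at the second occurrence; with loose walks the underlying edge set can be quite degenerate, and (as the $(9,5)$-example in the appendix illustrates) small $3$-uniform structures can have \emph{every} edge coloring be a $\maj$-coloring, so this case cannot be waved away. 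The clean fix is to note that the loose $\mychi$-cycles actually produced by \cref{thm-loose-chi-bound} have the property you need: the odd walks there have the form ``out along $Q'$, around an odd cycle $C'$, back along $Q'$,'' so any repeated edge occurs at two positions whose difference is $2|Q'|+|C'|+1-2p$, which is even because $|C'|$ is odd; hence all edge repetitions are at matching parity, $\Gamma$ is well defined, and your argument goes through. So either restrict the corollary to such parity-respecting walks (which suffices for \cref{thm:improvedmonavoid}) or supply a separate construction of a bad coloring for the opposite-parity case; as written, that subcase is unproved.
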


\noindent Finally, combining all the results above, we shall now prove our first algorithm for $\monncavoid$ with linear stretch.

\thmimprovedmonavoid*
\begin{proof}
     By \cref{obs-cluster} and \cref{thm-2-inter}, we first reduce the circuit to a sub-circuit corresponding to cluster, and then further to the case of $\oneintermajthree$. Hence, it suffices to solve $\oneintermajthree$. By theorem \ref{thm-loose-chi-bound} we know that there exists a loose $\mychi$ in the hypergraph corresponding to the given circuit. It suffices to show that we can find loose $\mychi$ in $\H$ in polynomial time. By corollary \ref{cor:loose-chi-cycle-not-maj-colorable} and theorem \ref{lem-coloring-to-avoid} we get a polynomial time algorithm to solve $\majthreeavoid$. Indeed to find $G'$ in the proof of \cref{thm-loose-chi-bound}, we need to find the paths $P,Q'$, and a high degree vertex $f''$, which can be done in polynomial time. Hence, we can find the $G'$ in polynomial time.
\end{proof}

 \begin{algorithm}
			\begin{algorithmic}[1]
				\caption{Algorithm for solving $\maj_3$-$\avoid$ on input $C:\{0,1\}^n \to \{0,1\}^m$}\label{alg:maj3}
				\STATE {\bf Step 1:} Obtain a cluster $K$ such that $|I(K)|<|K|$ \comm{by observation \ref{obs-cluster}}
				\STATE {\bf Step 2:} Reduction from $\twointer_3$-$\avoid$ to $\oneinter_3$-$\avoid$
				%\STATE Let $C$ be the circuit corresponding to the set of functions in $K$. 
                \IF{$f_1,f_2 \in K$ such that $I(f_1)=\{x_1,x_2,x_3\}$ and $I(f_2)=\{x_2,x_3,x_4\}$} 
                    \STATE Set $f_1=1,f_2=0$ and $x_1=1,x_2=0,x_4=\neg x_3$, $\callo=\{f_1,f_2\}$
                    \REPEAT
                        \STATE Consider $g \in K \setminus \callo $ such that $I(g)  \neq \emptyset$
                        \IF{$| I(g) \cap I(\callo )| = 2$}
                            \STATE Let $I(g)=\{z_1,z_2,z_3\}$ and $z_3 = I(g) \setminus I(\callo )$. Let $b \in \{0,1\}$
                            \IF{Both $z_1$, $z_2$ are not alive:}
                            \STATE{If $z_1 = z_2 = b$ By setting $g = \overline{b}$, we obtain $y \not \in \range(C)$}
                            \STATE{If $z_1 = b$, $z_2 = \overline{b}$ then set $g=b$ and $y=b$}
%                                \IF{$z_1 = z_2 = b$}
%                                    \STATE By setting $g = \overline{b}$, we obtain $y \not \in \range(C)$
%                                \ELSIF{$z_1 = b$, $z_2 = \overline{b}$}
%                                    \STATE Set $g=b$ and $y=b$ 
%                                \ENDIF
                             \ELSIF{Exactly one of $z_1$, $z_2$ is alive: say $z_1 = b$, $z_2 = x$}  
                                \STATE Setting $g = z_3= \overline{b}$ eliminates all the variables
                            \ELSIF{Both $z_1$, $z_2$ are alive: that is $z_1=x, z_2=\lnot x$}
                                    \STATE  We set $g=z_3=0$ so that the number of variables in the cluster is at most $1$.
                            \ENDIF
                        \ELSIF{$I(g)\cap I(\callo)=\{z_1\}$}    

                            \STATE{If $z_1=b\in \{0,1\}$, then Set $g$ to $\overline{b}$ and $z_2=z_3=\overline{b}$}
                            \STATE{If $z_1=x$(alive) then set $g$ to $\overline{x}$ and $z_2=z_3=\overline{x}$}
                              
                            % \IF{$z_1=b\in \{0,1\}$}
                            %   \STATE Set $g$ to $\overline{b}$ and $z_2=z_3=\overline{b}$
                            %   \ELSIF{$z_1=x$(alive)}
                            %   \STATE Set $g$ to $\overline{x}$ and $z_2=z_3=\overline{x}$  
                            %   \ENDIF
                            \STATE Update the cluster $\callo =\callo \cup \{g\}$ and $I(\callo) = I(\callo) \cup I(g)$
                        \ENDIF
                    \UNTIL{$\exists g \not \in \callo$ such that $I(g) \cap I(\callo) \neq \emptyset$}
                \ENDIF
			\STATE $\exists h_1,h_2,h_3 \not \in \callo$ such that $I(f)=\{w_1,w_2,w_3\}\subseteq I(\callo)$ since $|\callo|\le n-2$ and $|I(\callo)|=n$. 
                \IF{ $\not \exists$ a variable which is alive}
                    \STATE Setting the output of $h_1$ to $\overline{b}$ gives a string outside the range.
                \ELSIF{$\exists$ exactly one variable $w_1$ which is alive}
                    \IF{$w_2=w_3=b$} 
                        \STATE Setting the output of $h_1$ to $\overline{b}$ gives a string outside the range.
                    \ELSIF{$w_2=b,w_3=\overline{b}$}
                        \STATE Set the output of $h_1$ to $b$, which fixes $w_1$ to $b$. 
                        \STATE We have $h_2\not \in \callo$ and $I(h_2)\subseteq I(\callo)$. Same as the previous case. 
                    \ENDIF
                \ELSIF{$h_1$ has two variables that are alive and $w_3=b$}
                    \IF{$w_1=w_2=x$}
                        \STATE We set the output of $h_1$ to $\overline{b}$. This fixes all input variables (case 1).
                    \ELSIF{$w_1=x,w_2=\overline{x}$}
                        \STATE Setting the output of $h_1$ to $\overline{b}$ gives a string outside the range.
                    \ENDIF
                \ELSIF{All three variables are alive}
                    \STATE $\exists$ $h_i,h_j$ whose inputs are set the same. Setting $h_i$  to $1$ and $h_j$ to $0$ gives $y \not \in \range(C)$.
                \ENDIF
                \STATE {\bf Step 3:} Polynomial time algorithm for $\oneinter_3$-$\avoid$
                    \STATE Find a loose $\mychi$ in $H_C$ \comm{by \cref{thm-loose-chi-bound}}
                    \STATE Output $y \not \in \range(C)$ \comm{by \cref{lem-coloring-to-avoid}}
			\end{algorithmic}
		\end{algorithm}

\section{Conclusion}
We described the formulation of special cases of $\avoid$ in terms of Turan-type problems in $k$-uniform hypergraphs and demonstrated some applications to solve monotone versions of $\avoid$ under depth restrictions for the circuit. We exhibited several different fixed hypergraphs $H$ - $k$-cage, weak Fano plane, $3 \times 3$ grid, $(3,3)$-butterfly, $(3,3)$-odd kite which can be also used to derive polynomial time algorithms for $\oneintermajthree$ which in turn is used to solve $\monncavoid$ when the stretch is quadratic. Finally, the improvement to linear stretch (\cref{thm:improvedmonavoid}) comes from using loose $\mychi$ cycles as our hypergraph. We note that this is not a fixed-sized hypergraph; however, we can find it in polynomial time. Prior to this work, the best known polynomial time algorithm $\monncavoid$ was via the algorithm for $\NC^0_3$-$\avoid$ which requires $m>\frac{n^2}{\log n}$ \cite{GGNS23}. We extend our framework to solve $\oneintermajk$ for linear and quadratic stretch respectively. \cite{GGNS23} give a polynomial time algorithm for solving $\NC^0_k$-$\avoid$ when $m>\frac{n^{k-1}}{\log n}$. it would be interesting to improve the stretch using the hypergraph framework even for monotone $\NC^0_6$ circuits, which would in turn give an improvement for the case of $\NC^0_3$-$\avoid$. 

\section*{Acknowledgements}
The authors would like to thank the anonymous reviewers useful comments and for pointing out shortcomings in the presentation in the earlier version of the paper. 

\begin{appsection}{Counter Example graphs among the $(9,5)$-hypergraphs}{appsec:95-graphs}
\begin{minipage}{0.85\linewidth}

In continuation with the discussion towards the introduction, we write down the explicit example of a $(9,5)$-graph which does not lead to a contradiction as used in our arguments. 
It is easy to verify that every edge coloring is a $\maj$-coloring for this hypergraph.
This example explains why the power-bound conjecture in \cite{GL21} is not sufficient to provide us with the required improvements for the relation between $m$ and $n$ in our proposed algorithm for solving $\majthreeavoid$ and hence $\mon$-$\NC^0_3$-$\avoid$.

\end{minipage}
\begin{minipage}{0.03\linewidth}
    ~
\end{minipage}
\begin{minipage}{0.05\linewidth}
\begin{tikzpicture}[scale=0.6]
       \draw[-] (0,2) -- (-2,0) -- (-1,1);
        \draw[-] (0,2) -- (2,0) -- (1,1);
        \draw[-] (0,-2) -- (2,0) -- (1,-1);
        \draw[-] (-2,0) -- (0,-2) -- (-1,-1);
        %\draw[-] (-2,0) -- (0,0) -- (2,0);
        \draw[-] (0,-2) -- (0,0) -- (0,2);
        
        \draw[fill=gray] (0,0) circle (0.1) ;
        \draw[fill=gray] (0,2) circle (0.1) ;
        \draw[fill=gray] (-2,0) circle (0.1) ;
        \draw[fill=gray] (2,0) circle (0.1) ;
        \draw[fill=gray] (0,-2) circle (0.1) ;
        
        \draw[fill=gray] (1,1) circle (0.1) ;
        \draw[fill=gray] (1,-1) circle (0.1) ;
        \draw[fill=gray] (-1,1) circle (0.1) ;
        \draw[fill=gray] (-1,-1) circle (0.1) ;
        \end{tikzpicture}
\end{minipage}

\end{appsection}

\bibliography{ref}
\bibliographystyle{alpha}

\newpage
\ifthenelse{\equal{\movetoappendix}{1}}{
        \appendix
        \section{Appendix}
        \includecollection{appendix}
} { }

\newpage

\end{document}